\documentclass[pra,reprint,superscriptaddress,amsmath,amssymb,aps]{revtex4-2}
\usepackage{float}
\makeatletter
\let\newfloat\newfloat@ltx
\makeatother

\makeatletter
\renewcommand\@make@capt@title[2]{%
  {\textbf{#1}}\@caption@fignum@sep #2\quad}
\def\@caption@font{\normalsize}
\makeatother

\usepackage{makecell}

\usepackage{algorithm}
\usepackage{algpseudocode}
\makeatletter
\renewcommand{\fnum@algorithm}{\fname@algorithm~\thealgorithm}
\makeatother

\newcounter{panel}[figure]
\renewcommand{\thepanel}{\alph{panel}}
\newcommand{\panel}[1]{%
  \refstepcounter{panel}%
  \label{#1}%
  \par\vspace{2pt}(\thepanel)}

\usepackage{multirow}
\usepackage{array}
\usepackage{booktabs}
\usepackage{amsmath,amssymb,amsthm,mathtools,mathrsfs}

\newtheorem{theorem}{Theorem}

\newtheorem{problem}{Problem}

\newtheorem{apptheorem}{Theorem}[section]
\newtheorem{applemma}{Lemma}[subsection]

\usepackage[dvipsnames,table]{xcolor}
\usepackage{booktabs,multirow}
\definecolor{DarkGreen}{rgb}{0.0,0.3,0.0}
\definecolor{faintline}{gray}{0.85}
\usepackage{graphicx}
\usepackage{tikz}
\usetikzlibrary{decorations.pathreplacing,calc}
\usepackage{quantikz}

\usepackage{xcolor}

\usepackage{bm}

\usepackage[colorlinks=true,
            linkcolor=red,
            citecolor=blue,
            urlcolor=red]{hyperref}
\usepackage{cleveref}

\crefname{definition}{Def.}{Defs.}
\Crefname{definition}{Definition}{Definitions}
\crefname{problem}{Prob.}{Probs.}
\Crefname{problem}{Problem}{Problems}
\crefname{lemma}{Lemma}{Lemmas}
\Crefname{lemma}{Lemma}{Lemmas}
\crefname{equation}{Eqn.}{Eqns.}

\makeatletter
\newcommand{\problabel}[2]{%
  \def\@currentlabel{#1}%
  \label{#2}%
}
\makeatother

\newcommand{\orcid}[1]{\href{https://orcid.org/#1}{\includegraphics[height = 2ex]{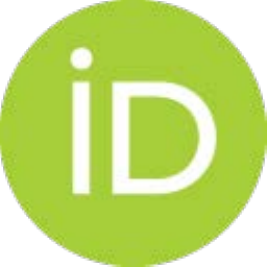}}}

\usepackage{float}
\begin{document}

\title{Unitary Channel Testing Under a Depolarizing Noise Assumption}

\author{Hirak Ghosh \orcid{0009-0007-9276-1542}}
\email{Hirak.Ghosh@warwick.ac.uk}
\affiliation{Department of Physics, University of Warwick, Coventry, CV4 7AL, UK}

\author{Andrew Jackson \orcid{0000-0002-5981-1604}}
\affiliation{Department of Physics, University of Warwick, Coventry, CV4 7AL, UK}

\author{Animesh Datta \orcid{0000-0003-4021-4655}}
\affiliation{Department of Physics, University of Warwick, Coventry, CV4 7AL, UK}

\date{\today}

\begin{abstract}
We present fast algorithms -- under the depolarizing noise assumption, often made in fault-tolerant quantum computations -- to test its strength.
Our optimal algorithms answer the following question: is the quantum channel implemented by a given black box identical to a target unitary or $\varepsilon$-far from it in the diamond distance, assuming that the deviation is a depolarizing channel with unknown parameter?
Our algorithm has a query complexity of $\Theta(1/\varepsilon).$
The query complexity of the relaxed problem of testing whether the black-box channel is $\varepsilon_1$-close to a target unitary or $\varepsilon_2$-far in the diamond distance is $\Theta\bigl(\varepsilon_2/(\varepsilon_2 - \varepsilon_1)^2\bigr).$
In both cases, we provide matching lower bounds that hold even for adaptive, ancilla-assisted protocols with multi-outcome incoherent measurements.
\end{abstract}

\maketitle

Performing tomography or learning a quantum process in the diamond distance typically has exponential costs in the number of qubits \cite{surawy2022projected, oufkir2023sample, haah2023query, chen2024tight, mele2025optimal}.
This presents a fundamental barrier to improving quantum computing hardware towards the fault-tolerant regime whose thresholds are quantified in the diamond distance~\cite{doi:10.1137/S0097539799359385}.
Furthermore, every threshold is calculated for an assumed noise model whose strength must be established in the hardware.

Benchmarking methods are typically deployed to
extract average fidelity metrics for charting progress towards fault tolerance without learning the underlying quantum channels~\cite{Malhotra2024, Proctor2025Benchmarking, rohe2025quantumcomputerbenchmarkingexplorative}. In practice, cycle benchmarking and its variants are often used~\cite{Erhard2019CycleBenchmarking, HarperFlammiaWallman2020, ChenLearnability2023}.
However, the sample (or query) complexity of cycle benchmarking scales exponentially in the qubit count even for structured classes, such as Pauli channels~\cite{FlammiaWallman2020,FawziOufkirFranca2025}. 
Equally importantly, the outputs of benchmarking, such as average gate fidelity correspond, in general, poorly with the diamond distance~\cite{Sanders_2016}.
The inability to measure the diamond distance without exponential complexity implies that benchmarking fails to surmount the
fundamental barrier noted above.  

This motivates the seeking schemes to test the quality of quantum computing hardware directly in the diamond norm.
Emanating from the field of property testing~\cite{gs007}, determining the proximity of quantum channels (often called certification) in the diamond distance also appears to be afflicted with exponential complexities~\cite{fawzi2023quantum,rosenthal2024quantum,chen2026strict} in many cases. 
To identify the roots of these complexities, recent studies have explored different distance measures such as the average-case imitation diamond (ACID) norm~\cite{rosenthal2024quantum} or additional resources such as a quantum memory~\cite{chen2025efficient}. 
However, exponential complexities seems to persist even when testing unitary channels assuming coherent noise~\cite{jeon2025query}, albeit with a lower cost.

In calculating quantum fault-tolerant thresholds, the  circuit-level noise model most often assumed for quantum hardware is a depolarizing channel~\cite{Raussendorf_2007,PRXQuantum.5.030352,sblg-fbq4}.
However, the query complexity of testing the strength of this assumption in actual hardware is unknown. We dub this Problem~\ref{prob:pre-reduction}.

In this Letter, we show that the query complexity of solving Problem~\ref{prob:pre-reduction} is independent of the number of qubits involved. This makes our algorithms fast.
More precisely, we show that there exists a non-adaptive, ancilla-free tester with optimal query complexity $\Theta(1/\varepsilon)$ for distinguishing a perfect unitary channel implementation from one that is $\varepsilon$-far from the target in the diamond distance. We further show that this optimal scaling persists even when we restrict the number of inputs and individual measurements to $\mathcal{O}(1)$, by establishing a tradeoff between circuit depth and the number of individual state-measurements pairs used. We also give a tolerant tester that, under the same depolarizing noise assumption, distinguishes implementations that are $\varepsilon_1$-close to the target from those that are $\varepsilon_2$-far in diamond distance, with optimal query complexity $\Theta\bigl(\varepsilon_2/(\varepsilon_2 - \varepsilon_1)^2\bigr)$. 
A summary of our results is presented in Table~\ref{table 1}. 
A comparison with related known results is in Table~\ref{Table 2}. 

\begin{table}[t]
\centering
\setlength{\tabcolsep}{1pt}
\renewcommand{\arraystretch}{1.2}
\begin{tabular}{|c|c|c|c|c|}
\hline
\makecell{\textbf{Ancilla}} & \makecell{\textbf{Adapt-}\\\textbf{ivity}} & 
\makecell{\textbf{Measure-}\\\textbf{ment}} & \makecell{\textbf{Standard}\\\textbf{Testing}} & \makecell{\textbf{Tolerant}\\\textbf{Testing}} \\
\noalign{\hrule height 1pt}
No  & No  & Binary &
$\Theta(1/\varepsilon)$ &
$\Theta\left(\varepsilon_2/(\varepsilon_2 - \varepsilon_1)^2\right)$ \\
&&&Th.~\ref{TH : Standard tester, optimality},~\ref{TH:m-standard-tradeoff}&Th.~\ref{TH : Tol tester, optimality}\\\hline
Yes & Yes & Multi &
$\Theta(1/\varepsilon)$ &
$\Theta\left(\varepsilon_2/(\varepsilon_2 - \varepsilon_1)^2\right)$ \\&&&Th.~\ref{TH : Standard tester, optimality}&Th.~\ref{TH : Tol tester, optimality}\\
\hline
\end{tabular}
\caption{\normalsize Query complexity of diamond-distance testing under different access models. Here, $\Theta$ denotes optimal query complexity, that is, the lower and upper bounds match up to constant factors. For clarity, the $\Theta(\cdot)$ notation hides the dependence on the error parameter $\delta$.}
\label{table 1}
\end{table}

We avoid entangled inputs, collective measurements, and quantum memories as inspired by practical testing scenarios.
However, we do assume perfect state preparation and measurement. 
The invariance of the diamond distance under unitary conjugation reduces our Problem~\ref{prob:pre-reduction} to 
Problem~\ref{prob:testing-depolarizing} -- that of testing whether the depolarizing channel is either close or far from the identity channel. The learning version of this single-parameter problem is estimation upto an additive error~\cite{sasaki2002optimal,ji2008parameter}.  A Hoeffding's inequality shows that a non-adaptive, ancilla-free learner with two outcome measurements yields $\mathcal{O}(1/\varepsilon^2)$ query complexity (Lemma~\ref{eq:depol-lemma}). 
We thus give the corresponding optimal \emph{testing} bounds and establish that resources like adaptivity, ancilla, and multi-outcome POVMs do not improve query complexity.

\begin{table}[h]
\centering
\setlength{\tabcolsep}{5pt}
\renewcommand{\arraystretch}{1.25}
\begin{tabular}{|cl|c|}
\hline
\makecell{} & \makecell{Task} & \makecell{\textbf{Complexity}} \\
\noalign{\hrule height 1pt}
1. & Learning general channels & $\Theta(d^4/\varepsilon^2)$ \cite{mele2025optimal} \\
\hline
2. & Learning unitary channels & $\Theta(d^2/\varepsilon)$ \cite{haah2023query} \\
\hline
3. & Testing general channels & $\Omega(\sqrt d/\varepsilon)$ \cite{rosenthal2024quantum} \\
\hline
4. & Testing unitary channels &  $\Theta( d/\varepsilon)$ \cite{chen2026strict} \\
\hline
5. & \makecell[l]{Testing unitary channels under \\ coherent noise assumption}  & $\Theta(\sqrt d/\varepsilon)$ \cite{jeon2025query} \\\hline
6. & \makecell[l]{Testing unitary channels under \\ depolarizing noise assumption}  & \makecell[c]{$\Theta(1/\varepsilon)$ \\ This Letter}\\
\hline
\end{tabular}
\caption{\normalsize Complexities of various learning and testing tasks, using an access model permitting ancilla, coherence, and adaptivity. All tasks quantifying proximity are in the diamond distance. 
Our algorithms are fast as their time complexities are independent of the number of qubits.
$\Omega(\cdot)$ denotes a lower bound while $\Theta (\cdot)$ denotes optimality with matching lower and upper bounds.
For $n$ qubits, $d = 2^n.$
}
\label{Table 2}
\end{table}

\emph{Preliminaries}:
Let $\mathcal{H} \cong \mathbb{C}^d$ be a $d$-dimensional Hilbert space, where $d \in \mathbb{N}$.
Quantum states are density operators $\rho \in \mathcal{L}(\mathcal{H})$ with $\rho \ge 0,$ $\mathrm{Tr}(\rho)=1$.
A quantum channel is a completely positive trace-preserving (CPTP) linear map $\mathcal{N} : \mathcal{L}(\mathcal{H}) \rightarrow \mathcal{L}(\mathcal{H}).$
A unitary channel is defined as \(\mathcal{N}_U(\rho) = U \rho U^\dagger\), where \(U \in \mathcal{U}(d)\) is a unitary operator on a \(d\)-dimensional Hilbert space.
A depolarizing channel with a parameter (strength) $\lambda \in [0,1]$ $\mathcal{N}_{\lambda}$, is a quantum channel such that for all $\rho \in \mathcal{L}(\mathcal{H})$,
\begin{align}
\label{eq:dep}
    \mathcal{N}_{\lambda}(\rho) = (1-\lambda) \rho + \lambda\dfrac{\mathrm{Tr}(\rho)}{d}\,\mathbb{I}_d.
\end{align}
    A positive operator-valued measure (POVM) is a model of measurements defined by a set of operators, $\mathbb{M} = \{M_x\}_x$, satisfying $M_x \ge 0$ and $\sum_x M_x = \mathbb{I}$.
 For an input state, $\rho$, acted upon by a channel $\mathcal{N}$, the probability of obtaining outcome $x$ is $\Pr(x)
= \operatorname{Tr}\left(M_x\, \mathcal{N}(\rho)\right).$
The diamond distance $d_{\diamond}(\cdot , \cdot)$ between two channels $\mathcal{M}$ and $\mathcal{N}$ is 
\begin{align}
    d_{\diamond}(\mathcal{M}, \mathcal{N}) = \sup_{\rho \in \mathcal{L}(\mathcal{H}_a \otimes \mathcal{H})}  \|(\mathrm{id}_{\mathcal{H}_{a}} \otimes (\mathcal{M}-\mathcal{N})(\rho))\|_{1},
\end{align}
with $\mathcal{H}_a \cong \mathbb{C}^{d_a}$ being an auxiliary system. Owing to Schmidt decomposition, we can always set $d=d_a$.

\emph{The problem}:
We are given black-box access to \(N \in \mathbb{N}\) uses of a channel 
\(
\mathcal{N}_{U_*} = \mathcal{N}_U \circ \mathcal{N}_\lambda,
\) 
where $\mathcal{N}_U(\rho)$
is a \emph{known} unitary channel and \(\mathcal{N}_\lambda\) is a depolarizing channel with an \emph{unknown} depolarizing parameter $\lambda$.
The order of application of $\mathcal{N}_U$ and $\mathcal{N}_{\lambda}$ does not matter here, since depolarizing channel is unitarily covariant. 
See Appendix \ref{sec:AppA}, Lemma~\ref{Dep_channel covariance}.
Our testing task is as in Problem~\ref{prob:pre-reduction}.

\begin{problem}
\label{prob:pre-reduction}
    Distinguish, with success probability at least \(1-\delta\), between $H_0: d_\diamond(\mathcal{N}_{U_*},\mathcal{N}_U) \le \varepsilon_1$ and $H_1: d_\diamond(\mathcal{N}_{U_*},\mathcal{N}_U) \ge \varepsilon_2$ for fixed parameters \(0 < \varepsilon_1 < \varepsilon_2 < 1\) and \(\delta \in (0,1/2)\).
\end{problem}

We provide \emph{lower and upper bounds} on the query complexity (or the necessary and sufficient number of channel uses) \(N\)  required to 
solve Problem~\ref{prob:pre-reduction}.
We do so both in the standard  \((\varepsilon_1 = 0)\) and the tolerant case \((\varepsilon_1 > 0)\). 

All our bounds are attained in the following manner. The input state
$\rho \in \mathcal{L}(\mathcal H)$ is first passed through the implemented
channel $\mathcal{N}_{U_*}$ and then measured with the POVM
$\mathbb{M}_{U}=\{U M_x U^{\dagger}\}_x$.
Owing to the invariance of the diamond
distance under unitary conjugation,
\(d_{\diamond}(\mathcal{N}_{U_*},\mathcal{N}_{U})
=
d_{\diamond}(\mathcal{N}_{\lambda},\operatorname{id}),\)
this implementation is equivalent to the procedure of preparing $\rho$,
applying the channel $\mathcal{N}_{\lambda}$, and measuring the resulting state
$\mathcal{N}_{\lambda}(\rho)$ with the POVM $\mathbb M = \{M_x\}_x$. The corresponding
outcome statistics are
\(
\Pr(x)=\operatorname{Tr}\!\left(\mathcal{N}_{\lambda}(\rho)M_x\right).
\) 

Thus, Problem~\ref{prob:pre-reduction} reduces to Problem~\ref{prob:testing-depolarizing}.

 \begin{problem}[Depolarizing Channel Testing]
 \label{prob:testing-depolarizing}
 Distinguish, with success probability at least \(1-\delta\), between 
\begin{enumerate}
    \item[\emph{\textbf{P2.1}}] 
    \problabel{P2.1}{prob:P2.1}%
    \textit{Standard Testing:}  
    $H_0:\mathcal{N}_\lambda = \operatorname{id}$  and $H_1:d_\diamond(\mathcal{N}_\lambda,\operatorname{id}) \ge \varepsilon$,
    where $\varepsilon \in (0,1)$.
    \item[\emph{\textbf{P2.2}}] 
    \problabel{P2.2}{prob:P2.2}%
    \textit{Tolerant Testing:} 
    $H_0:d_\diamond(\mathcal{N}_\lambda,\operatorname{id}) \le \varepsilon_1$
    and $H_1:d_\diamond(\mathcal{N}_\lambda,\operatorname{id}) \ge \varepsilon_2$,
    where $0 < \varepsilon_1 < \varepsilon_2 < 1$.
\end{enumerate}
\end{problem}
If \(m \in \mathbb{N}\) sequential applications of the black box are allowed, then the appropriate measurement that reduces Problem~\ref{prob:pre-reduction} to Problem~\ref{prob:testing-depolarizing} is
\(\widetilde{\mathbb{M}}_U
=
\left\{
U^{m} M_x \left(U^\dagger\right)^{m}
\right\}_x .\)
We note that Problems~\ref{prob:pre-reduction} and~\ref{prob:testing-depolarizing} have same the query complexity but different computational complexities.

\begin{figure*}[!t]
\centering

\begin{minipage}[t]{0.31\textwidth}
  \centering
  \includegraphics[page=4,width=\linewidth]{No_Ancilla_No_Adaptivity_Binary_Measurements.pdf}
  \panel{fig:combined:a}
\end{minipage}
\hfill
{\color{faintline}\vrule width 0.4pt}
\hfill
\begin{minipage}[t]{0.31\textwidth}
  \centering
  \includegraphics[page=1,width=\linewidth]{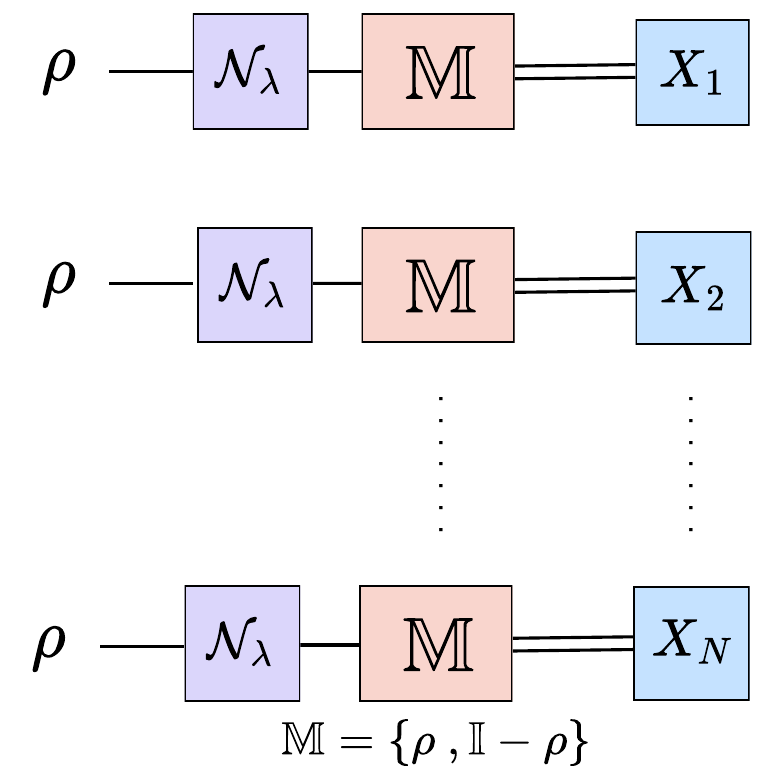}
  \panel{fig:combined:b}
\end{minipage}
\hfill
{\color{faintline}\vrule width 0.4pt}
\hfill
\begin{minipage}[t]{0.31\textwidth}
  \centering
  \includegraphics[page=1,width=\linewidth]{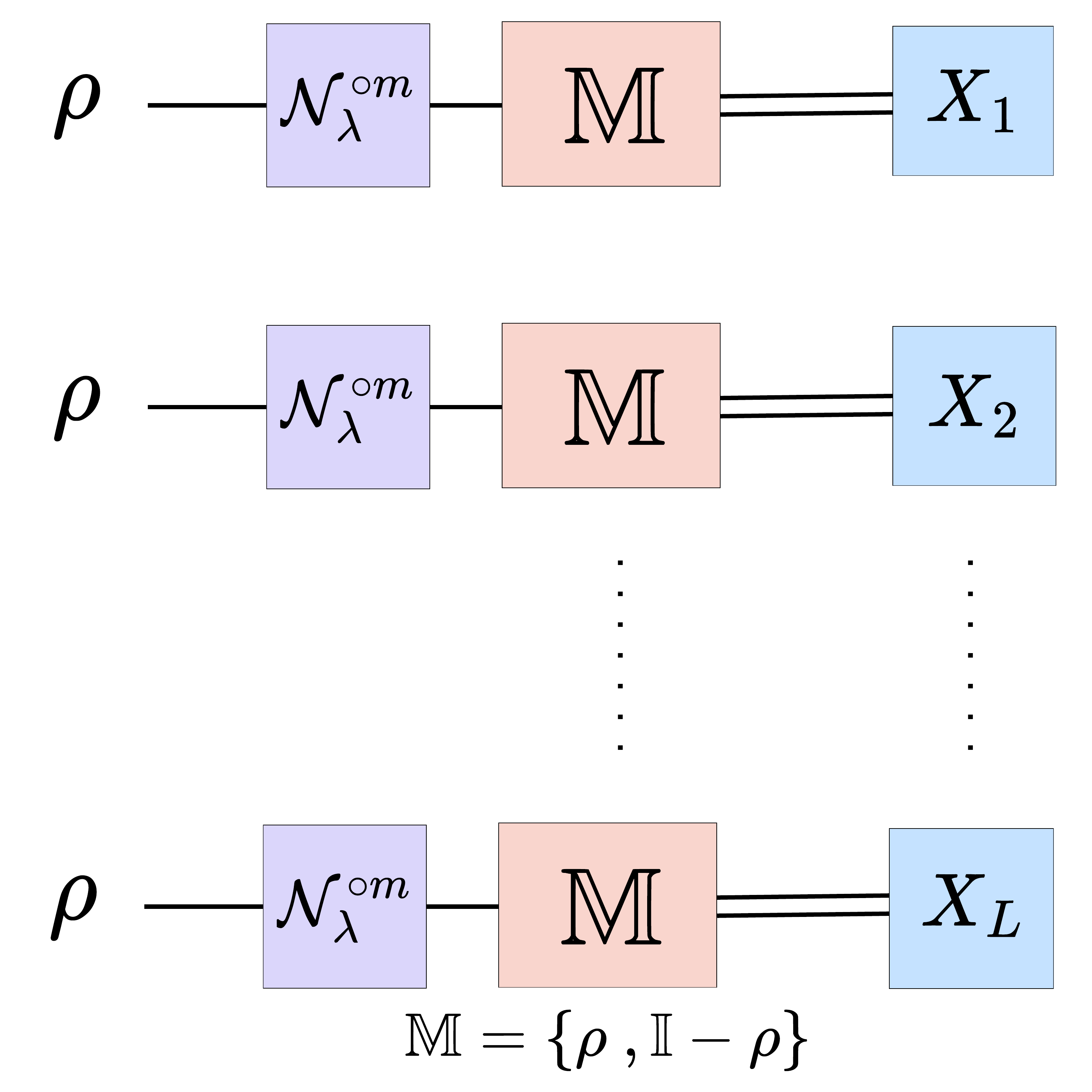}
  \panel{fig:combined:c}
\end{minipage}

\caption{\normalsize
(a) Ancilla-assisted, adaptive tester with arbitrary (finite) sized POVMs. Green arrows denote classical feed-forward: state preparations and measurements may depend on some or all prior measurement outcomes.
(b) Non-adaptive, ancilla-free tester with fixed inputs and a two-outcome measurement.
(c) \(m\)-sequential non-adaptive tester with fixed inputs and a two-outcome POVM.}
\label{fig:combined}
\end{figure*}

\emph{Access Models}:
Before presenting the query complexity for solving Problem~\ref{prob:testing-depolarizing}, 
we summarize the resources relevant to our complexity bounds. These are called access models, and illustrated in Fig.~\ref{fig:combined}.

In the most general setting with incoherent measurements (Fig.~\ref{fig:combined}\ref{fig:combined:a}), the tester is granted the following resources in sequence. 
First, the tester may attach an auxiliary system $\mathcal{H}' \cong \mathbb{C}^{d'}$ of arbitrary finite dimension, fixing the model: in the ancilla-assisted model the input space is $\mathcal{H}' \otimes \mathcal{H}$ and the channel acts as $\mathrm{id}_{\mathcal{H}'} \otimes \mathcal{N}$, while in the ancilla-free model the input space is $\mathcal{H}$ alone. 
Second, at each trial $t \in \mathbb{N}$, the tester chooses an input state $\rho^{(t)}$ on the input space. Third, after each channel use, the tester performs an individual measurement $\mathbb{M}^{(t)} = \{M_i^{(t)}\}_i$ on the corresponding output system; here, ``individual'' means that the measurement acts only on the output of a single channel use, with no collective measurement across different trials. The resulting single-shot outcome distribution is
\[
p_i^{(t)}
=
\operatorname{Tr}\!\left[\bigl((\mathrm{id}_{\mathcal{H}'} \otimes \mathcal{N})(\rho^{(t)})\bigr) M_i^{(t)}\right],
\]
which reduces to $p_i^{(t)} = \operatorname{Tr}[\mathcal{N}(\rho^{(t)}) M_i^{(t)}]$ in the ancilla-free case. Finally, the tester may be adaptive: the choices of $\rho^{(t)}$ and $\mathbb{M}^{(t)}$ may depend on the full history of previously observed outcomes. We refer to the resulting model, which allows ancillas,
adaptivity, and arbitrary individual measurements, as the \emph{full-access
model}; see Fig.~\ref{fig:combined}\ref{fig:combined:a}.

We establish lower bounds for Problems~\ref{prob:P2.1} and~\ref{prob:P2.2} in the \emph{full-access model} in Appendix~\ref{app:LowerBound}, specifically in Theorems~\ref{Th:LB_Tol} and~\ref{Th:LB_Std}, respectively.

For the upper bounds, we work in ancilla-free, non-adaptive access models with binary POVMs. Let \(m \in \mathbb{N}\) be the number of sequential black-box uses in one trial, and let \(L \in \mathbb{N}\) be the number of independent trials. The total query complexity is therefore \(N=mL\).

We examine two distinct scenarios within this framework:
\begin{enumerate}
    \item Parallel black-box access (Fig.~\ref{fig:combined}\ref{fig:combined:b}): 
    Each trial applies the black box once to the input state, followed by a measurement. 
    This corresponds to \(m=1\), and hence \(N=L\). We obtain standard-testing and tolerant testing upper bounds using Algorithms~\ref{alg:identity_testing_m=1} and \ref{alg:identity_testing_2} respectively.

    \item Sequential Black-Box access (Fig.~\ref{fig:combined}\ref{fig:combined:c}): Each trial prepares $\rho$ and applies the black box $m$ times sequentially, yielding the state \[\mathcal{N}_\lambda^{\circ m}(\rho)=\underbrace{\mathcal N_\lambda\circ\mathcal N_\lambda\circ\cdots\circ
\mathcal N_\lambda\circ\mathcal N_\lambda}_{m\ \mathrm{times}}(\rho),\]
and then
measures using the binary POVM $\{\rho,I-\rho\}$.
Repeating this procedure for \(L\) independent trials gives \(N=mL\) total queries. We obtain a standard-testing upper bound in this access using  Algorithm~\ref{alg:identity_testing_m>=2}. The case $m = 1$ recovers the parallel black-box access model described earlier.
\end{enumerate}
Our Algorithm~\ref{alg:identity_testing_m=1} has been used before for unitary channel certification~\cite{fawzi2023quantum, jeon2025query}. 
Our Algorithm~\ref{alg:identity_testing_m>=2} reduces to it for $m = 1$.

\emph{Results}:
Our results are formalized in three theorems and summarized in Table~\ref{table 1}.
The standard-testing bounds in Theorem~\ref{TH : Standard tester, optimality} follow as a special case of the tolerant-testing bounds in Theorem~\ref{TH : Tol tester, optimality} by setting \(\varepsilon_2=\varepsilon\) and \(\varepsilon_1=0\), although the algorithms achieving the corresponding upper bounds differ. On the other hand, the sequential black-box access standard-testing upper bound in Theorem~\ref{TH:m-standard-tradeoff} matches the parallel standard-testing upper bound in Theorem~\ref{TH : Standard tester, optimality}. Thus, allowing \(m\) sequential applications of the black-box does not change the query complexity of the problem. 

\begin{theorem}[Standard Tester, Optimality]\label{TH : Standard tester, optimality} 
For the problem in \emph{\ref{prob:P2.1}}, there exists an ancilla-free, non-adaptive
algorithm using individual two-outcome measurements (Fig.~\ref{fig:combined}\ref{fig:combined:b}) that achieves an upper bound
of $O(1/\varepsilon)$ queries. A matching lower bound of $\Omega(1/\varepsilon)$, up to constant factors, holds even in the ancilla-assisted, adaptive model with arbitrary individual measurements (Fig.~\ref{fig:combined}\ref{fig:combined:a}). 
\end{theorem}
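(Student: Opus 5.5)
First I compute the diamond distance. For the depolarizing channel one has $\mathcal{N}_\lambda-\mathrm{id}=\lambda(\mathcal{R}-\mathrm{id})$, where $\mathcal{R}(\rho)=\mathrm{Tr}(\rho)\,\mathbb{I}/d$. Evaluating this on the maximally entangled input $\Phi$ gives
\[
\|(\mathrm{id}\otimes(\mathcal{N}_\lambda-\mathrm{id}))(\Phi)\|_1=\lambda\,\|\mathbb{I}/d^2-\Phi\|_1=2\lambda(1-1/d^2).
\]
By covariance this input is optimal. So $d_\diamond(\mathcal{N}_\lambda,\mathrm{id})=c_d\lambda$ with $c_d=2(1-1/d^2)\in[3/2,2]$ for $d\ge2$. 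Hence $H_1$ means $\lambda\ge\varepsilon/c_d=\Theta(\varepsilon)$, and $H_0$ means $\lambda=0$.

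\textbf{Upper bound.} I use the ancilla-free, non-adaptive tester of Fig.~\ref{fig:combined}\ref{fig:combined:b}: prepare a pure state $\rho=|\psi\rangle\langle\psi|$ and measure $\{\rho,\mathbb{I}-\rho\}$. The rejection probability is
\[
p=\mathrm{Tr}\bigl[(\mathbb{I}-\rho)\mathcal{N}_\lambda(\rho)\bigr]=\lambda(1-1/d).
\]
Under $H_0$ we have $p=0$, so the outcome ``$\mathbb{I}-\rho$'' never occurs; this is perfect completeness. Under $H_1$ we have $p\ge \varepsilon(1-1/d)/c_d\ge\varepsilon/4$. The tester accepts iff no rejection occurs in $L$ trials, so its error is at most $(1-\varepsilon/4)^L\le e^{-L\varepsilon/4}\le\delta$ once $L=\lceil 4\ln(1/\delta)/\varepsilon\rceil=O(1/\varepsilon)$. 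The trivial case $d=1$ is excluded or handled separately, since then every channel coincides with the identity.

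\textbf{Lower bound.} I reduce to distinguishing $\mathcal{N}_0=\mathrm{id}$ from $\mathcal{N}_{\lambda^*}$ with $\lambda^*=\varepsilon/c_d\le\varepsilon/(3/2)$. The key observation is the convex decomposition
\[
\mathcal{N}_{\lambda}=(1-\lambda)\,\mathrm{id}+\lambda\,\mathcal{R}.
\]
Thus each query of $\mathcal{N}_{\lambda^*}$, even with an ancilla attached, can be simulated by applying $\mathrm{id}$ with probability $1-\lambda^*$ and the replacement channel otherwise. In any adaptive protocol with $N$ individual queries, the joint distribution of transcripts under $\mathcal{N}_{\lambda^*}$ is therefore a mixture. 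With probability $(1-\lambda^*)^N$ every query is the identity, and on that event the transcript distribution is exactly the one obtained under $H_0$.

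This gives
\[
\mathrm{TV}\bigl(P_{H_0},P_{H_1}\bigr)\le 1-(1-\lambda^*)^N\le N\lambda^*.
\]
The argument works because the joint law of all outcomes is linear in each channel use, and adaptivity only conditions later choices on earlier outcomes, which the coupling handles round by round. Success probability $1-\delta$ on both hypotheses requires $\mathrm{TV}\ge1-2\delta$. Hence $N\ge(1-2\delta)/\lambda^*\ge(1-2\delta)\cdot\tfrac{3}{2}\cdot\tfrac1\varepsilon=\Omega(1/\varepsilon)$.

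The main step needing care is making this coupling argument rigorous for adaptive protocols with ancillas and multi-outcome POVMs. I would do it by induction over rounds, showing that the transcript law under $H_1$ equals $(1-\lambda^*)^N P_{H_0}$ plus a nonnegative remainder. Alternatively, one could bound the fidelity between output states per round by $1-O(\lambda^*)$, but the mixture coupling is cleaner and avoids any dependence on $d$.
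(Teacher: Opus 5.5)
Your proof is correct. The upper bound is the paper's Algorithm~\ref{alg:identity_testing_m=1} argument: the paper uses $\lambda(1-1/d)\ge\varepsilon/3$ where you use $\varepsilon/4$. Your lower bound, however, takes a different route.

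\textbf{The paper's route.} The paper obtains the standard-case lower bound by specializing its tolerant argument to $\varepsilon_1=0$. It applies the chain rule to the KL divergence of the adaptive transcript. It then uses the log-sum inequality on the per-round mixture $\lambda u^{(t)}+(1-\lambda)a^{(t)}$ to bound each conditional term by $D_{KL}(\mathrm{Bern}(\lambda_1)\|\mathrm{Bern}(\lambda_2))$. A $\chi^2$ bound and data processing against $g(\delta)=D_{KL}(\mathrm{Bern}(1-\delta)\|\mathrm{Bern}(\delta))$ then give $N=\Omega(\ln(1/\delta)/\varepsilon)$.

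\textbf{Your route.} You exploit the same mixture structure globally. Expanding the product of per-round laws gives the pointwise domination $P_{H_1}\ge(1-\lambda^*)^N P_{H_0}$, and a total-variation bound follows with no divergence machinery. This is more elementary. It also uses only linearity in each channel use: every expanded term is the output of a valid strategy with $\mathrm{id}$ or $\mathcal{R}$ plugged in. So it extends verbatim to coherent strategies with quantum memory or entanglement across queries, which lie beyond the paper's incoherent full-access model.

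\textbf{What it costs.} There are two costs.
\begin{enumerate}
\item Going through TV gives only $N\ge(1-2\delta)/\lambda^*$. This does not match the $O(\ln(1/\delta)/\varepsilon)$ upper bound in $\delta$. That is harmless for the statement as posed, and you can recover the logarithm by using the domination one-sidedly: $\delta\ge P_{H_1}(\text{accept})\ge(1-\lambda^*)^N(1-\delta)$, so $N\ge\ln\bigl(\frac{1-\delta}{\delta}\bigr)/\ln\bigl(\frac{1}{1-\lambda^*}\bigr)$.
\item The coupling does not reach the tolerant bound. Writing $\mathcal{N}_{\lambda_2}$ as a mixture of $\mathcal{N}_{\lambda_1}$ and $\mathcal{R}$ yields only $\Omega(1/(\varepsilon_2-\varepsilon_1))$. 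This is why the paper needs the KL/$\chi^2$ route for $\varepsilon_2/(\varepsilon_2-\varepsilon_1)^2$, and gets the standard case from it for free.
\end{enumerate}

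\textbf{A minor point.} Your claim ``by covariance this input is optimal'' is asserted rather than proved (the paper gives an SDP certificate). You do not need it. The triangle inequality $d_\diamond(\mathcal{N}_\lambda,\mathrm{id})\le 2\lambda$ already yields your $p\ge\varepsilon/4$. The lower bound only needs the explicit maximally entangled evaluation to place $\mathcal{N}_{\lambda^*}$ in $H_1$.
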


\begin{proof} 
    The required upper bound is established by Algorithm~\ref{alg:identity_testing_m=1} in ancilla-free, non-adaptive access as shown in Fig.~\ref{fig:combined}\ref{fig:combined:b}. Under $H_0$, the tester never outputs $X_k=1$, so the type-I error is zero; that is, $\Pr(X_k=1\mid H_0)=0, ~\forall k\in[L]$. A type-II error occurs only when \(X_1=\cdots=X_L=0\) under hypothesis $H_1$. Bounding the type-II error yields \(\prod_{k=1}^{L}\Pr(X_k=0\mid H_1)\le \delta\) the desired query complexity $O(1/\varepsilon)$. Detailed complexity analysis is provided in Appendix~\ref{app:b1}. 
    
    The corresponding lower bound is tight and is proved in the full-access model described in Fig.~\ref{fig:combined}\ref{fig:combined:a}. Its complexity analysis follows from the tolerant-case lower bound in Appendix~\ref{app:LowerBound}, specialized to \(\varepsilon_1=0\) and \(\varepsilon_2=\varepsilon\); see Th.~\ref{Th:LB_Std}.
\end{proof}

\begin{algorithm}[t]
    \caption{Standard tester in $d_\diamond$}
    \label{alg:identity_testing_m=1}
    \begin{algorithmic}
    \Require $\varepsilon \in (0,1)$, $\delta \ll 1$
    \State Initialize $N \gets \mathcal{O}(\log{(1/\delta)}/\varepsilon)$
    \For{$k \in [N]$}
        \State $\rho \gets |0\rangle\!\langle 0|$ with $\ket{0} \in \mathbb{C}^d$; apply $\mathcal{N}_{\lambda}$
        \State Measure $\mathcal{N}_{\lambda}(|0\rangle\!\langle 0|)$ in $\mathbb{M}=$$\{|0\rangle\!\langle 0|, \mathbb{I} - |0\rangle\!\langle 0|\}$.
        \State Observe $X_k \sim \text{Bern}(1 - \langle 0 | \mathcal{N}_{\lambda}(|0\rangle\!\langle 0|) |0 \rangle )$.
    \EndFor
    \If{$\exists k : X_k = 1$}
        \State \Return $H_1 : d_{\diamond}(\mathcal{N}_{\lambda}, \operatorname{id}) \geq \varepsilon$ 
    \Else
        \State \Return $H_0 : \mathcal{N}_{\lambda} = \operatorname{id}$ 
    \EndIf
    \end{algorithmic}
\end{algorithm}
\begin{theorem}[Sequential standard tester, Upper Bound]
\label{TH:m-standard-tradeoff}
There exists an ancilla-free, non-adaptive tester solving Problem~\emph{\ref{prob:P2.1}} with $m$ sequential black-box applications and $L$ input states (Fig.~\ref{fig:combined}\ref{fig:combined:c}) satisfying:
\begin{align}
L = \Omega\left(\ln(1/\delta)\right)
\quad\text{and}\quad
mL > (2/\varepsilon)\ln(1/\delta).
\end{align}

\end{theorem}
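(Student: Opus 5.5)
The plan is to analyze Algorithm~\ref{alg:identity_testing_m>=2} directly, exactly as in the proof of Theorem~\ref{TH : Standard tester, optimality}. Each of the $L$ trials prepares $\rho=|0\rangle\!\langle 0|$, applies $\mathcal N_\lambda^{\circ m}$, and measures $\{|0\rangle\!\langle 0|,\mathbb I-|0\rangle\!\langle 0|\}$. It records $X_k=1$ on the second outcome and outputs $H_1$ iff some $X_k=1$. The proof has three ingredients: a closed form for $\mathcal N_\lambda^{\circ m}$, the relation between $\lambda$ and $\varepsilon$ under $H_1$, and a bound on the type-II error.

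\emph{Step 1 (composition).} From \eqref{eq:dep}, depolarizing channels compose as $\mathcal N_{\lambda}\circ\mathcal N_{\mu}=\mathcal N_{1-(1-\lambda)(1-\mu)}$. Hence $\mathcal N_\lambda^{\circ m}=\mathcal N_{\lambda_m}$ with $\lambda_m=1-(1-\lambda)^m$, and so
\[
p_m:=\Pr(X_k=1)=\lambda_m\,(1-1/d).
\]
Under $H_0$ we have $\lambda=0$, so $p_m=0$ and the type-I error vanishes.

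\emph{Step 2 (diamond distance).} Write $\mathcal N_\lambda-\mathrm{id}=\lambda(\mathcal D-\mathrm{id})$, where $\mathcal D$ is the completely depolarizing map. Evaluating on the maximally entangled input gives the standard value $d_\diamond(\mathcal N_\lambda,\mathrm{id})=2\lambda(1-1/d^2)$, which I will take from the appendix or verify directly. Under $H_1$ this yields
\[
\lambda(1-1/d)\ \ge\ \frac{\varepsilon}{2(1+1/d)},
\]
and in particular $\lambda\ge\varepsilon/2$ and $1-1/d\ge 1/2$.

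\emph{Step 3 (type-II error).} The error is $(1-p_m)^L\le e^{-Lp_m}$, so it suffices to show $Lp_m\ge\ln(1/\delta)$. Using $1-(1-\lambda)^m\ge 1-e^{-m\lambda}$, I split into two regimes.
\begin{itemize}
\item If $m\lambda\le 1$, concavity gives $1-e^{-m\lambda}\ge(1-e^{-1})m\lambda$. Then $p_m\ge c\,m\varepsilon$ for an absolute constant $c$, and the condition $mL\gtrsim(2/\varepsilon)\ln(1/\delta)$ gives $Lp_m\ge\ln(1/\delta)$.
\item If $m\lambda>1$, then $p_m\ge(1-e^{-1})/2$ is bounded below by a constant. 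The condition $L=\Omega(\ln(1/\delta))$ then suffices.
\end{itemize}
The second regime is precisely why $L$ cannot be taken smaller than order $\ln(1/\delta)$: even a fully depolarized output passes the test with probability at least $1/d\le 1/2$ per trial. This yields the claimed tradeoff.

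The main obstacle is bookkeeping of constants so that the stated threshold $mL>(2/\varepsilon)\ln(1/\delta)$ holds exactly, rather than only up to a constant. The losses come from the factor $(1-1/d)$ and from the concavity step in the small-$m\lambda$ regime. I would first try the sharper diamond-distance relation from Step 2, which gives $\lambda(1-1/d)\gtrsim\varepsilon/2$ for large $d$, together with $\ln(1-p)\le -p$. If the exact constant $2$ does not come out, I would absorb the slack into the $\Omega(\cdot)$ on $L$ and into the choice of the hidden constant in the algorithm's initialization, consistent with the $O(\cdot)$ form of Theorem~\ref{TH : Standard tester, optimality}.
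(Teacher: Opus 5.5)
Your proof is correct, but it runs in the opposite logical direction from the paper's. Both arguments analyze Algorithm~\ref{alg:identity_testing_m>=2} with the same ingredients: the type-II error $(1-p_m)^L$, the bounds $(1-\lambda)^m\le e^{-m\lambda}$ and $(1-p)^L\le e^{-Lp}$, and $\lambda\ge\lambda_*=\varepsilon/(2(1-1/d^2))$ under $H_1$.

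The paper reduces everything to one sufficient condition, $L(1-1/d)(1-e^{-m\lambda_*})>\ln(1/\delta)$. It then derives the theorem's constraints as \emph{consequences} of that condition. The bound $1-e^{-m\lambda_*}<1$ forces $L>\ln(1/\delta)/(1-1/d)$, and $-\ln(1-x)>x$ forces $mL>\frac{2(d+1)}{d\varepsilon}\ln(1/\delta)\ge\frac{2}{\varepsilon}\ln(1/\delta)$. So in the paper the constant $2$ is a necessary property of every pair $(m,L)$ certified by its bound, not a sufficient threshold.

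Your two-regime split goes in the sufficient direction: concavity $1-e^{-x}\ge(1-e^{-1})x$ when $m\lambda\le 1$, and the floor $p_m\ge(1-e^{-1})/2$ when $m\lambda>1$. It gives explicit absolute constants: $L\ge\frac{2}{1-e^{-1}}\ln\frac1\delta$ together with $mL\ge\frac{3}{(1-e^{-1})\varepsilon}\ln\frac1\delta$ make the type-II error at most $\delta$ for every $\lambda\ge\lambda_*$ and every $d\ge2$. This exhibits working parameters for any split of $N=mL=O(\varepsilon^{-1}\ln(1/\delta))$, which the paper leaves implicit. In exchange, the paper's route describes the whole feasible region of its bound and shows where the factor $2(d+1)/d$ comes from.

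Hence the ``main obstacle'' you name is not one. Do not try to make $2$ come out as a sufficient constant, because that claim is false. Take $d=2$, $m=1$, $\lambda=\lambda_*=2\varepsilon/3$: the per-trial detection probability is $\varepsilon/3$. With $L$ just above $(2/\varepsilon)\ln(1/\delta)$, the miss probability is then about $\delta^{2/3}>\delta$ for small $\varepsilon$. Your constant $3/(1-e^{-1})\approx 4.75$ exceeds $2$, so your parameters satisfy $mL>(2/\varepsilon)\ln(1/\delta)$ automatically, and that is all the existence claim needs.

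A minor point on your side remark about the fully depolarized output. Its pass probability floor of $1/d$ only forces $L\ge\ln(1/\delta)/\ln d$, which weakens as $d$ grows, and $1/d\le 1/2$ points the wrong way for a lower bound. What forces $L=\Omega(\ln(1/\delta))$ uniformly in $d$ is this: for $m=O(1/\varepsilon)$, the per-trial miss probability at $\lambda=\lambda_*$ is at least $(1-\lambda_*)^m$, which stays bounded below by a constant.
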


\begin{proof}
The required upper bound is established by Algorithm~\ref{alg:identity_testing_m>=2} in ancilla-free, non-adaptive access as shown in Fig.~\ref{fig:combined}\ref{fig:combined:c}. Under $H_0$, the tester never outputs $X_k=1$, so the type-I error is zero. Under the hypothesis \( H_1 \), we first establish that achieving a target error \( \delta \in (0, 1/2) \) for finite \( d \) necessitates \( L = \Omega(\log(1/\delta)) \). Having established the minimal scaling of \(L\), we next determine the scaling of \(m\) required to keep the type-II error at most \(\delta\). As detailed in Appendix~\ref{app:b2}, any such choice of \((m,L)\), with \(L=\Omega(\log(1/\delta))\), must satisfy:
\begin{align}
N = mL > \frac{2}{\varepsilon}\,\ln\frac{1}{\delta}.
\label{eq:scaling_ml}
\end{align}
This establishes the query complexity as stated in the Th.~\ref{TH:m-standard-tradeoff}.
\end{proof}
To achieve the optimal query complexity of standard testing, there must be a tradeoff between the number of black boxes, $m$, in a sequence and the number of parallel state/measurement pairs, $L$, i.e., the number of trials. Any choice of $(m,L)$  for instance $m=\mathcal{O}(1/\varepsilon)$ and $L=\Omega(\ln(1/\delta))$, or $m=O(1)$ and $L=\mathcal{O}((1/\varepsilon)\ln(1/\delta))$, or any intermediate allocation  achieves the same overall scaling as described in Eqn.~\eqref{eq:scaling_ml}.

\begin{algorithm}[t]
    \caption{Standard tester with sequential black-box access in $d_\diamond$}
    \label{alg:identity_testing_m>=2}
    \begin{algorithmic}
    \Require $\varepsilon \in (0,1)$, $\delta \ll 1$
    \State Initialize $L \gets \log(1/\delta)$, $m \gets \mathcal{O}(1/\varepsilon)$
    \For{$k \in [L]$}
        \State $\rho \gets |0\rangle\!\langle 0|$ with $\ket{0} \in \mathbb{C}^d$; apply $\mathcal{N}_{\lambda}^{\circ m}$
        \State Measure $\mathcal{N}_{\lambda}^{\circ m}(|0\rangle\!\langle 0|)$ in $\mathbb{M}=$$\{|0\rangle\!\langle 0|, \mathbb{I} - |0\rangle\!\langle 0|\}$.
        \State Observe $X_k \sim \text{Bern}(1 - \langle 0 | \mathcal{N}_{\lambda}^{\circ m}(|0\rangle\!\langle 0|) |0 \rangle )$.
    \EndFor
    \If{$\exists k : X_k = 1$}
        \State \Return $H_1 : d_{\diamond}(\mathcal{N}_{\lambda}, \operatorname{id}) \geq \varepsilon$ 
    \Else
        \State \Return $H_0 : \mathcal{N}_{\lambda} = \operatorname{id}$ 
    \EndIf
    \end{algorithmic}
\end{algorithm}

\begin{theorem}[Tolerant Tester, Optimality]\label{TH : Tol tester, optimality}
For the tolerant testing problem in \emph{\ref{prob:P2.2}}, we give an ancilla-free, non-adaptive
algorithm using individual measurements that achieves an upper bound of
$O\bigl(\varepsilon_2/(\varepsilon_2 - \varepsilon_1)^2\bigr)$ queries (Fig.~\ref{fig:combined}\ref{fig:combined:b}). A matching lower
bound of $\Omega \bigl(\varepsilon_2/(\varepsilon_2 - \varepsilon_1)^2\bigr)$, up to
constant factors, holds even in the fully general ancilla-assisted, adaptive model with
arbitrary individual measurements (Fig.~\ref{fig:combined}\ref{fig:combined:a}).
\end{theorem}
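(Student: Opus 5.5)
Both bounds rest on one explicit computation: the diamond distance of the depolarizing channel to the identity. Using the maximally entangled input (optimal by covariance), I expect
\[
d_\diamond(\mathcal{N}_\lambda,\operatorname{id}) = 2\lambda\Bigl(1-\tfrac{1}{d^2}\Bigr) \eqqcolon c_d\lambda ,
\]
up to the paper's normalization of $\|\cdot\|_1$. The exact value of $c_d$ only matters through the fact that it lies in $[c_-,c_+]$ with constants independent of $d$. Then $H_0$ becomes $\lambda\le\lambda_1:=\varepsilon_1/c_d$ and $H_1$ becomes $\lambda\ge\lambda_2:=\varepsilon_2/c_d$, so the problem is to test a single parameter $\lambda$.

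\textbf{Upper bound.} I will reuse the single-query experiment of Algorithm~\ref{alg:identity_testing_m=1}. Prepare $|0\rangle\!\langle 0|$, apply $\mathcal{N}_\lambda$, and measure $\{|0\rangle\!\langle 0|,\mathbb{I}-|0\rangle\!\langle 0|\}$. This gives $X_k\sim\mathrm{Bern}(p)$ with $p=\lambda(1-1/d)$, so the two hypotheses become $p\le p_1$ versus $p\ge p_2$, where $p_i=\varepsilon_i(1-1/d)/c_d=\Theta(\varepsilon_i)$.

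The tester compares the empirical mean $\hat p$ with the threshold $\tau=(p_1+p_2)/2$. I will use the multiplicative Chernoff/Bernstein bound rather than the plain Hoeffding bound of Lemma~\ref{eq:depol-lemma}, because Hoeffding only gives $1/(\varepsilon_2-\varepsilon_1)^2$ and misses the factor $\varepsilon_2$. Bernstein's inequality with variance $\le p\le p_2$ gives
\[
\Pr(|\hat p-p|\ge \Delta)\le 2\exp\!\left(-\frac{L\Delta^2}{2(p+\Delta/3)}\right), \qquad \Delta=\frac{p_2-p_1}{2}.
\]
Two cases need care. For $p\le p_2$ the bound is direct. For $p\ge p_2$, I use monotonicity in $p$, or the Chernoff lower-tail bound with relative deviation, to control the far-tail case. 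Together these show that $L=O\bigl(\varepsilon_2\log(1/\delta)/(\varepsilon_2-\varepsilon_1)^2\bigr)$ suffices.

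\textbf{Lower bound.} I will take the two extreme instances $\lambda_1$ and $\lambda_2$ and bound how well any full-access protocol can distinguish them. A standard chain-rule argument handles adaptivity: the KL divergence between the transcript distributions is at most $N$ times the maximum, over inputs $\rho$ on $\mathcal{H}'\otimes\mathcal{H}$ and POVMs, of the single-use divergence. The key structural fact is that
\[
(\mathrm{id}\otimes\mathcal{N}_\lambda)(\rho)=(1-\lambda)\rho+\lambda\sigma, \qquad \sigma=\rho_{\mathcal{H}'}\otimes\mathbb{I}/d .
\]
So every outcome probability is affine in $\lambda$: $q_x(\lambda)=a_x+\lambda(b_x-a_x)$ with $a_x,b_x\ge0$. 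To bound the per-use divergence I will use the $\chi^2$ divergence:
\[
\chi^2 = \sum_x \frac{(\lambda_2-\lambda_1)^2 (b_x-a_x)^2}{q_x(\lambda_1)} .
\]

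\textbf{Main obstacle.} The crux is showing this sum is $O\bigl((\lambda_2-\lambda_1)^2/\lambda_2\bigr)$ uniformly over inputs and measurements. The worry is outcomes with $a_x$ large and $q_x(\lambda_1)$ small. This cannot happen, because $q_x(\lambda_1)\ge(1-\lambda_1)a_x+\lambda_1 b_x\ge \max\{a_x/2,\lambda_1 b_x\}$ when $\lambda_1\le1/2$. That lower bound gives $(b_x-a_x)^2/q_x\lesssim b_x/\lambda_1+a_x$, which sums to $O(1/\lambda_1)$. This is too weak when $\lambda_1\ll\lambda_2$.

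To fix it I will symmetrize, using $\chi^2$ of $\lambda_1$ against $\lambda_2$ or the Hellinger distance. The Hellinger form $\sum_x(\sqrt{q_x(\lambda_2)}-\sqrt{q_x(\lambda_1)})^2$ is bounded termwise by $(\lambda_2-\lambda_1)^2(b_x-a_x)^2/(q_x(\lambda_1)+q_x(\lambda_2))$, and the denominator is at least $\lambda_2 b_x+a_x/2$. That yields $O\bigl((\lambda_2-\lambda_1)^2/\lambda_2\bigr)$, since $\sum_x b_x=\sum_x a_x=1$. Hellinger affinity tensorizes, and its adaptive version also holds. Combining this with the Le Cam bound for error $\delta$ gives
\[
N=\Omega\!\left(\frac{\lambda_2}{(\lambda_2-\lambda_1)^2}\right)=\Omega\!\left(\frac{\varepsilon_2}{(\varepsilon_2-\varepsilon_1)^2}\right).
\]
The regime $\lambda_1>1/2$ is handled separately and only changes constants.
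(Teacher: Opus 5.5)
Your proposal is correct, and its skeleton matches the paper's. It uses the same single-query $\{|0\rangle\!\langle 0|,\mathbb{I}-|0\rangle\!\langle 0|\}$ experiment with a threshold on the empirical mean, and reduces to boundary values of $p$ via monotonicity of binomial tails. The lower bound likewise comes from the two boundary instances $\lambda_1,\lambda_2$, with a per-query divergence bound that is uniform over ancillas, inputs and POVMs. The tools differ, though.

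\emph{Upper bound.} The paper chooses the threshold $\tau$ that equalizes $D_{KL}(\tau\|\varepsilon_1/2c)$ and $D_{KL}(\tau\|\varepsilon_2/2c)$ and applies KL-form Chernoff bounds. It then states, without derivation, that the common value is $\Omega\bigl((\varepsilon_2-\varepsilon_1)^2/\varepsilon_2\bigr)$. Your midpoint threshold with Bernstein's inequality is looser in constants, but it makes the origin of the factor $\varepsilon_2$ (variance at most $p_2$) explicit.

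\emph{Lower bound.} The paper stays with KL throughout:
\begin{itemize}
\item it uses the exact adaptive chain rule;
\item it applies the log-sum inequality to the mixture $p_\lambda=\lambda u+(1-\lambda)a$, showing that each per-step KL is at most $D_{KL}(\mathrm{Bern}(\lambda_1)\|\mathrm{Bern}(\lambda_2))\le(\lambda_2-\lambda_1)^2/(\lambda_2(1-\lambda_2))$;
\item it finishes with data processing to $D_{KL}(\mathrm{Bern}(1-\delta)\|\mathrm{Bern}(\delta))$.
\end{itemize}
The log-sum step says, in effect, that any protocol's statistics are a classical garbling of the coin that decides whether depolarization occurred. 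Your termwise Hellinger estimate reaches the same per-query scaling by hand. It also needs the standard, correct induction showing that the affinity stays at least $(1-h/2)^N$ under adaptivity. In exchange, it is symmetric and gives the $\log(1/\delta)$ dependence directly via Le Cam.

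\emph{Two remarks.} First, the obstacle you hit comes only from the direction of $\chi^2$. With $D_{KL}(q(\lambda_1)\|q(\lambda_2))\le\chi^2(q(\lambda_1)\|q(\lambda_2))$ the denominator is $q_x(\lambda_2)\ge\max\{(1-\lambda_2)a_x,\lambda_2 b_x\}$. Your own splitting $(b_x-a_x)^2\le a_x^2+b_x^2$ then gives exactly $(\lambda_2-\lambda_1)^2/(\lambda_2(1-\lambda_2))$, the paper's bound, and the KL chain rule goes through without Hellinger. Also, $\varepsilon_i<1$ and $d\ge 2$ force $\lambda_i<2/3$, so your bound $q_x(\lambda_1)+q_x(\lambda_2)\ge\lambda_2 b_x+a_x/2$ always holds and no separate regime $\lambda_1>1/2$ is needed.

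Second, your claim that only constant bounds on $c_d$ matter is true for the lower bound but not for the tolerant upper bound. When $\varepsilon_2-\varepsilon_1\ll\varepsilon_2$, the threshold can be placed between $p_1$ and $p_2$ only if $c_d=2(1-1/d^2)$ is known exactly. Your ``optimal by covariance'' heuristic therefore needs an actual argument; the paper supplies one via the diamond-norm SDP.
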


\begin{proof} 
  The required upper bound is established by Algorithm~\ref{alg:identity_testing_2} in an ancilla-free, non-adaptive access as shown in Fig.~\ref{fig:combined}\ref{fig:combined:b}. For $t \in [N]$, each use of the channel after measurement yields a Bernoulli outcome, $X_t$, with $\Pr(X_t=1)=\lambda(1-1/d)=p$. After $N$ repetitions, the tester calculates the empirical sum $S=\sum_{t=1}^N X_t$ and return $H_1$ whenever the empirical mean satisfies $S/N\ge\tau$.

Since the binomial upper and lower tail probabilities are monotonic in $p$ (see Lemma~\ref{lem:binomial_monot_lower},~\ref{lem:binomial_monot_upper}), the least-favorable (worst-case) choices of $p$ i.e., those maximizing the respective error probabilities, occur at $p=\varepsilon_1/2c$ for $H_0$ and at $p=\varepsilon_2/2c$ for $H_1$, where $c=(1+1/d)$. Applying the Chernoff--Hoeffding bounds from Lemma~\ref{Chernoff-Hoeffding-Bound} to the binomial tails yields
$\Pr^{\mathrm{err}}_{H_0}\le \exp\big(-N D_{KL}(\tau\|\varepsilon_1/2c)\big)$ and
$\Pr^{\mathrm{err}}_{H_1}\le \exp\big(-N D_{KL}(\tau\|\varepsilon_2/2c)\big)$.
For a threshold $\tau$ to achieve the same target error $\delta$ under both hypotheses, we set the type-I and type-II errors as
$\Pr^{\mathrm{err}}_{H_0}\le e^{-N D_{KL}(\tau\|\varepsilon_1/2c)}\le\delta$ and
$\Pr^{\mathrm{err}}_{H_1}\le e^{-N D_{KL}(\tau\|\varepsilon_2/2c)}\le\delta$ respectively. It suffices to choose $\tau$ that satisfies
$D_{KL}(\tau\|\varepsilon_1/2c)=D_{KL}(\tau\|\varepsilon_2/2c)$. This yields the closed form:
\begin{align}
\tau=
\frac{\ln[(2c-\varepsilon_1)/(2c-\varepsilon_2)]}
     {\ln[\varepsilon_2(2c-\varepsilon_1)/(\varepsilon_1(2c-\varepsilon_2))]}
\in(\varepsilon_1/2c,\varepsilon_2/2c).
\end{align}

Further analysis, as detailed in Appendix~\ref{sec:tol_test_UB}, yields the query complexity 
\begin{align}
N = O\left(\frac{\varepsilon_2\,\ln(1/\delta)}{(\varepsilon_2-\varepsilon_1)^2}\right).
\label{Eqn:UB_N_tol}
\end{align}

The lower bound for Problem~\ref{prob:P2.2} is established in Appendix~\ref{app:LowerBound}, specifically in Theorem~\ref{Th:LB_Tol}, within the full-access model as depicted in Figure~\ref{fig:combined}\ref{fig:combined:a}. We first realize the worst-case channels corresponding to the extremal depolarizing parameter values. From Lemma~\ref{lem:diadistBetDepolChannels}, we have \(\lambda_1 = \varepsilon_1/2v\) and
\(\lambda_2 = \varepsilon_2/2v\), for the two respective hypotheses, where $v=(1-1/d^2)$. The Kullback–Leibler (KL) divergence, $D_{KL}$, between the two joint distributions corresponding to $\lambda_1$ and $\lambda_2$, in the adaptive, ancilla-assisted setting with arbitrary per-trial POVM is lower bounded by a function of $\delta$ using a data-processing inequality.
Explicitly evaluating an upper bound on $D_{KL}$ for the two distributions yields the corresponding lower bound on the query complexity: $N =\Omega({\varepsilon_2\ln(1/\delta)/(\varepsilon_2-\varepsilon_1)^2})$.

This matches, up to constants, the query complexity upper bound achieved by \Cref{alg:identity_testing_2} in Eq.~(\ref{Eqn:UB_N_tol}), establishing optimality.

\end{proof}

\begin{algorithm}[t]
    \caption{Tolerant tester in $d_\diamond$}
    \label{alg:identity_testing_2}
    \begin{algorithmic}[1]
    \Require $0 <\varepsilon_1 <\varepsilon_2 < 1$, $\delta \ll1$ 
    \State Initialize $N \gets 
    \Theta(\varepsilon_2\ln(1/\delta)/(\varepsilon_2 - \varepsilon_1)^2), S \gets 0$
    \State Initialize $\tau \gets
\frac{\ln[(2c-\varepsilon_1)/(2c-\varepsilon_2)]}
     {\ln[\varepsilon_2(2c-\varepsilon_1)/(\varepsilon_1(2c-\varepsilon_2))]}$, $c \gets (1+1/d)$
    \For{$t \in [N]$}
        \State $\rho \gets |0\rangle\!\langle 0|$ with $\ket{0} \in \mathbb{C}^d$; apply $\mathcal{N}_{\lambda}$
        \State Measure $\mathcal{N}_{\lambda}(|0\rangle\!\langle 0|)$ in $\mathbb{M}=$ $\{|0\rangle\!\langle 0|, \mathbb{I} - |0\rangle\!\langle 0|\}$.
        \State Observe $X_t \sim \text{Bern}(1 - \langle 0 | \mathcal{N}_{\lambda}(|0\rangle\!\langle 0|) |0 \rangle )$.
        \State $S = S+X_t$
    \EndFor
    \If{$S/N \geq \tau$} 
        \State \Return $H_1 : d_{\diamond}(\mathcal{N}_{\lambda}, \operatorname{id}) \geq \varepsilon_2$
    \Else
        \State \Return $H_0 : d_{\diamond}(\mathcal{N}_{\lambda}, \operatorname{id}) \leq \varepsilon_1$
    \EndIf
\end{algorithmic}
\end{algorithm} 

\emph{Conclusion and Discussion}:
We provide fast algorithms for standard and tolerant testing of the depolarizing noise assumption on the implementation of unitary channels.
They are in the non-adaptive, ancilla-free setting with two-outcome individual measurements. 
We show that, allowing sequential black box applications does not improve the query complexity, and establish that to achieve the optimal query complexity of standard testing, a tradeoff must exist between the number of black-boxes, $m$, in a sequence and the number of parallel state/measurements pairs, $L$, used.

We also show that access to additional resources like, an ancilla, adaptive inputs and multi-outcome individual measurements do not help by establishing a lower bound in the \emph{full-access} model matching the ancilla-free, non-adaptive upper bounds up to constants. This is in contrast to testing unitary channels with coherent noise, which scales exponentially even with arbitrarily large ancilla, entanglement, adaptive inputs and measurements~\cite{jeon2025query}.
This highlights an exponential gap between testing for the `easiest' and `hardest' error channels on the path to fault-tolerant quantum computers. Knowing what lies inbetween would be illuminating.

Also important to know would be the necessity of access to $U^\dagger$ for achieving optimal query complexity. 
An alternative reduction from Problem~\ref{prob:pre-reduction} to Problem~\ref{prob:testing-depolarizing} follows the access model of Ref.~\cite{jeon2025query}. Therein, access is available to the noisy implemented channel $\mathcal{N}_{U_*}=\mathcal N_U\circ\mathcal N_\lambda$ as a black-box as well as $\mathcal N_{U_*^{\dagger}}=\mathcal N_{U^\dagger}\circ\mathcal N_\lambda$, while measurements are performed using the fixed POVM, $\mathbb M=\{M_x\}_x$.
Our results apply directly to this alternate model.
Problem~\ref{prob:pre-reduction} now reduces to distinguishing $\mathcal N_{\gamma} = \mathcal N_{\lambda}^{\circ2}$ --- a depolarizing channel with black-box parameter $\gamma = 1 - (1 - \lambda^2) \in [0,1]$)--- from the identity channel. 
In this situation the guarantees on $d_{\diamond}(\mathcal{N}_{\lambda}, \operatorname{id})$ testing carry over to test $d_{\diamond}(N_{\gamma}, \operatorname{id})$, using the following inequalities:
$2d_{\diamond}(\mathcal{N_{\lambda},\operatorname{id}}) \ge d_{\diamond}(N_{\gamma}, \operatorname{id}) \ge d_{\diamond}(\mathcal{N_{\lambda}},\operatorname{id})$ (refer Lemma~\ref{lem:diamonddistbounds}) and \( \varepsilon_1 < \varepsilon_2/2 \). 

For learning a unitary channel in diamond distance, the lower bound of $\Omega(d^2/\varepsilon)$ persists even when the algorithm is granted access to $U^\dagger$ and controlled-$U$~\cite{haah2023query}.
So access to $U^\dagger$ offers no advantage.
For testing unitary channels, our complexities as well as those of $\Theta(d/\varepsilon^2)$ in the incoherent setting \cite{fawzi2023quantum} and $\Theta(\sqrt{d}/\varepsilon)$ in the coherent setting \cite{ jeon2025query} are established
under access to both $U$ and $U^\dagger.$
Knowing the complexities under $U$-only access has important practical consequences.

\begin{acknowledgments}
We thank Matthias Caro, Sitan Chen, and Omar Fawzi for fruitful discussions, and Frederik vom Ende for pointing to the SDP formulations of the diamond distance in Ref.~\cite{khatri2020principles}. This work was supported, in part, by the 
EPSRC New Horizons grant EP/X018180/1, and the
Hub for Quantum Computing via Integrated and Interconnected Implementations (QCI3) (EP/Z53318X/1).
\end{acknowledgments}

\bibliographystyle{apsrev4-2}
\bibliography{references}

@article{surawy2022projected,
  doi = {10.22331/q-2022-10-20-844},
  url = {https://doi.org/10.22331/q-2022-10-20-844},
  title = {Projected {L}east-{S}quares {Q}uantum {P}rocess {T}omography},
  author = {Surawy-Stepney, Trystan and Kahn, Jonas and Kueng, Richard and Guta, Madalin},
  journal = {{Quantum}},
  issn = {2521-327X},
  publisher = {{Verein zur F{\"{o}}rderung des Open Access Publizierens in den Quantenwissenschaften}},
  volume = {6},
  pages = {844},
  month = oct,
  year = {2022}
}

@misc{khatri2020principles,
  title={Principles of Quantum Communication Theory: A Modern Approach}, 
      author={Sumeet Khatri and Mark M. Wilde},
      year={2024},
      eprint={2011.04672},
      archivePrefix={arXiv},
      primaryClass={quant-ph},
      url={https://arxiv.org/abs/2011.04672}, 
}

@inproceedings{haah2023query,
  title={Query-optimal estimation of unitary channels in diamond distance},
  author={Haah, Jeongwan and Kothari, Robin and O’Donnell, Ryan and Tang, Ewin},
  booktitle={2023 IEEE 64th Annual Symposium on Foundations of Computer Science (FOCS)},
  pages={363--390},
  year={2023},
  organization={IEEE},
  url={https://doi.org/10.1109/FOCS57990.2023.00028}
}

@misc{rosenthal2024quantum,
  title={Quantum Channel Testing in Average-Case Distance}, 
      author={Gregory Rosenthal and Hugo Aaronson and Sathyawageeswar Subramanian and Animesh Datta and Tom Gur},
      year={2024},
      eprint={2409.12566},
      archivePrefix={arXiv},
      primaryClass={quant-ph},
      url={https://arxiv.org/abs/2409.12566}, 
}

@inproceedings{fawzi2023quantum,
  title = 	 {Quantum Channel Certification with Incoherent Measurements},
  author =       {Fawzi, Omar and Flammarion, Nicolas and Garivier, Aur{\'e}lien and Oufkir, Aadil},
  booktitle = 	 {Proceedings of Thirty Sixth Conference on Learning Theory},
  pages = 	 {1822--1884},
  year = 	 {2023},
  editor = 	 {Neu, Gergely and Rosasco, Lorenzo},
  volume = 	 {195},
  series = 	 {Proceedings of Machine Learning Research},
  month = 	 {12--15 Jul},
  publisher =    {PMLR},
  url = 	 {https://proceedings.mlr.press/v195/fawzi23a.html}
}

@article{sasaki2002optimal,
  title={Optimal parameter estimation of a depolarizing channel},
  author={Sasaki, Masahide and Ban, Masashi and Barnett, Stephen M},
  journal={Physical Review A},
  volume={66},
  number={2},
  pages={022308},
  year={2002},
  publisher={APS},
  url={https://doi.org/10.1103/PhysRevA.66.022308}
}

@article{ji2008parameter,
  title={Parameter estimation of quantum channels},
  author={Ji, Zhengfeng and Wang, Guoming and Duan, Runyao and Feng, Yuan and Ying, Mingsheng},
  journal={IEEE Transactions on Information Theory},
  volume={54},
  number={11},
  pages={5172--5185},
  year={2008},
  publisher={IEEE},
  url={https://doi.org/10.1109/TIT.2008.929940}
}

@article{FawziOufkirFranca2025,
  author  = {Fawzi, Omar and Oufkir, Aadil and Stilck França, Daniel},
  journal = {IEEE Transactions on Information Theory},
  title   = {Lower Bounds on Learning Pauli Channels With Individual Measurements},
  year    = {2025},
  volume  = {71},
  number  = {4},
  pages   = {2642-2661},
  doi = {10.1109/TIT.2025.3527902}
}

@misc{wu2020itstats,
  author       = {Yihong Wu},
  title        = {Lecture Notes on Information-Theoretic Methods for High-Dimensional Statistics},
  year         = {2020},
  url = {http://www.stat.yale.edu/~yw562/teaching/it-stats.pdf}
}

@article{chen2024tight,
  title={Tight bounds on Pauli channel learning without entanglement},
  author={Chen, Senrui and Oh, Changhun and Zhou, Sisi and Huang, Hsin-Yuan and Jiang, Liang},
  journal={Physical Review Letters},
  volume={132},
  number={18},
  pages={180805},
  year={2024},
  publisher={APS},
  url={https://doi.org/10.1103/PhysRevLett.132.180805}
}

@article{chen2025efficient,
  title={Efficient Pauli channel estimation with logarithmic quantum memory},
  author={Chen, Sitan and Gong, Weiyuan},
  journal={PRX Quantum},
  volume={6},
  number={2},
  pages={020323},
  year={2025},
  publisher={APS},
  url={https://doi.org/10.1103/PRXQuantum.6.020323}
}

@misc{gerbessiotis2025survey,
  title={A survey of Chernoff and Hoeffding bounds}, 
      author={Alexandros V. Gerbessiotis},
      year={2025},
      eprint={2506.15612},
      archivePrefix={arXiv},
      primaryClass={cs.DM},
      url={https://arxiv.org/abs/2506.15612}, 
}

@INPROCEEDINGS{oufkir2023sample,
  author={Oufkir, Aadil},
  booktitle={2023 IEEE International Symposium on Information Theory (ISIT)}, 
  title={Sample-Optimal Quantum Process Tomography with non-adaptive Incoherent Measurements}, 
  year={2023},
  volume={},
  number={},
  pages={1919-1924},
  doi={10.1109/ISIT54713.2023.10206538}}

@misc{mele2025optimal,
  title={Optimal learning of quantum channels in diamond distance}, 
      author={Antonio Anna Mele and Lennart Bittel},
      year={2026},
      eprint={2512.10214},
      archivePrefix={arXiv},
      primaryClass={quant-ph},
      url={https://arxiv.org/abs/2512.10214}, 
}

@article{sblg-fbq4,
  title = {Fault-Tolerant Stabilizer Measurements in Surface Codes with Three-Qubit Gates},
  author = {Old, Josias and Tasler, Stephan and Hartmann, Michael J. and M\"uller, Markus},
  journal = {Phys. Rev. Lett.},
  volume = {135},
  issue = {24},
  pages = {240601},
  numpages = {8},
  year = {2025},
  month = {Dec},
  publisher = {American Physical Society},
  doi = {10.1103/sblg-fbq4},
  url = {https://link.aps.org/doi/10.1103/sblg-fbq4}
}

@article{PRXQuantum.5.030352,
  title = {Improving Threshold for Fault-Tolerant Color-Code Quantum Computing by Flagged Weight Optimization},
  author = {Takada, Yugo and Fujii, Keisuke},
  journal = {PRX Quantum},
  volume = {5},
  issue = {3},
  pages = {030352},
  numpages = {16},
  year = {2024},
  month = {Sep},
  publisher = {American Physical Society},
  doi = {10.1103/PRXQuantum.5.030352},
  url = {https://link.aps.org/doi/10.1103/PRXQuantum.5.030352}
}

@article{jeon2025query,
  title   = {On the query complexity of unitary channel certification},
  author  = {Jeon, Sangwoo and Oh, Changhun},
  journal = {npj Quantum Information},
  volume  = {12},
  number  = {1},
  pages   = {2},
  year    = {2025},
  doi     = {10.1038/s41534-025-01135-5},
  url     = {https://doi.org/10.1038/s41534-025-01135-5}
}

@misc{scholz2008confidence,
  title={Confidence bounds and intervals for parameters relating to the binomial, negative binomial, Poisson and hypergeometric distributions with applications to rare events},
  author={Scholz, Fritz},
  year={2008},
  institution={University of Washington},
  url={https://faculty.washington.edu/fscholz/DATAFILES/ConfidenceBounds.pdf}
}

@article{Proctor2025Benchmarking,
  title        = {Benchmarking quantum computers},
  author       = {Proctor, Timothy and Young, Kevin and Baczewski, Andrew D. and Blume-Kohout, Robin},
  journal      = {Nature Reviews Physics},
  year         = {2025},
month = {Feb},
  url          = {https://www.nature.com/articles/s42254-024-00796-z},
volume = {7},
number = {2},
pages={105-118},
issn={2522-5820}
}

@Article{Malhotra2024,
author={Malhotra, Pranit
and Kumar, Ajay
and Garhwal, Sunita},
title={A Systematic Review of Quantum BenchMarking},
journal={International Journal of Theoretical Physics},
year={2024},
month={Oct},
day={30},
volume={63},
number={11},
pages={278},
issn={1572-9575},
url={https://doi.org/10.1007/s10773-024-05811-8}
}

@misc{rohe2025quantumcomputerbenchmarkingexplorative,
      title={Quantum Computer Benchmarking: An Explorative Systematic Literature Review}, 
      author={Tobias Rohe and Federico Harjes Ruiloba and Sabrina Egger and Sebastian von Beck and Jonas Stein and Claudia Linnhoff-Popien},
      year={2025},
      eprint={2509.03078},
      archivePrefix={arXiv},
      primaryClass={quant-ph},
      url={https://arxiv.org/abs/2509.03078} 
}

@misc{chen2026strict,
  title={Strict Hierarchy for Quantum Channel Certification to Unitary}, 
      author={Kean Chen and Qisheng Wang and Zhicheng Zhang},
      year={2026},
      eprint={2604.26900},
      archivePrefix={arXiv},
      primaryClass={quant-ph},
      url={https://arxiv.org/abs/2604.26900}, 
}

@article{FlammiaWallman2020,
  author  = {Flammia, Steven T. and Wallman, Joel J.},
  title   = {Efficient Estimation of {P}auli Channels},
  journal = {ACM Transactions on Quantum Computing},
  volume  = {1},
  number  = {1},
  pages   = {3:1--3:32},
  year    = {2020},
  doi     = {10.1145/3408039}
}

@article{Erhard2019CycleBenchmarking,
  author  = {Erhard, Alexander and Wallman, Joel J. and Postler, Lukas and Meth, Michael and Stricker, Roman and Martinez, Esteban A. and Schindler, Philipp and Monz, Thomas and Emerson, Joseph and Blatt, Rainer},
  title   = {Characterizing Large-Scale Quantum Computers via Cycle Benchmarking},
  journal = {Nature Communications},
  volume  = {10},
  pages   = {5347},
  year    = {2019},
  doi     = {10.1038/s41467-019-13068-7}
}

@article{HarperFlammiaWallman2020,
  author  = {Harper, Robin and Flammia, Steven T. and Wallman, Joel J.},
  title   = {Efficient Learning of Quantum Noise},
  journal = {Nature Physics},
  volume  = {16},
  pages   = {1184--1188},
  year    = {2020},
  doi     = {10.1038/s41567-020-0992-8}
}

@article{ChenLearnability2023,
  author  = {Chen, Senrui and Liu, Yunchao and Otten, Matthew and Seif, Alireza and Fefferman, Bill and Jiang, Liang},
  title   = {The Learnability of {P}auli Noise},
  journal = {Nature Communications},
  volume  = {14},
  pages   = {52},
  year    = {2023},
  doi     = {10.1038/s41467-022-35759-4}
}

@article{doi:10.1137/S0097539799359385,
author = {Aharonov, Dorit and Ben-Or, Michael},
title = {Fault-Tolerant Quantum Computation with Constant Error Rate},
journal = {SIAM Journal on Computing},
volume = {38},
number = {4},
pages = {1207-1282},
year = {2008},
doi = {10.1137/S0097539799359385},

URL = { 
    
        https://doi.org/10.1137/S0097539799359385
    
    

}}

@article{Sanders_2016,
doi = {10.1088/1367-2630/18/1/012002},
url = {https://doi.org/10.1088/1367-2630/18/1/012002},
year = {2015},
month = {dec},
publisher = {IOP Publishing},
volume = {18},
number = {1},
pages = {012002},
author = {Sanders, Yuval R and Wallman, Joel J and Sanders, Barry C},
title = {Bounding quantum gate error rate based on reported average fidelity},
journal = {New Journal of Physics}
}

@book{gs007,
 author = {Montanaro, Ashley and Wolf, Ronald {de}},
 title = {A Survey of Quantum Property Testing},
 year = {2016},
 pages = {1--81},
 doi = {10.4086/toc.gs.2016.007},
 publisher = {Theory of Computing Library},
 number = {7},
 series = {Graduate Surveys},
 URL = {http://www.theoryofcomputing.org/library.html},
}

@article{Raussendorf_2007,
doi = {10.1088/1367-2630/9/6/199},
url = {https://doi.org/10.1088/1367-2630/9/6/199},
year = {2007},
month = {jun},
publisher = {},
volume = {9},
number = {6},
pages = {199},
author = {Raussendorf, R and Harrington, J and Goyal, K},
title = {Topological fault-tolerance in cluster state quantum computation},
journal = {New Journal of Physics}
}

\clearpage

\appendix

\section{Lemmas}
\label{sec:AppA}

\subsection{Depolarizing Channel}

\begin{applemma}[Unitary Covariance of the Depolarizing Channel] 
Let $\mathcal N_\lambda(\rho)=(1-\lambda)\rho+\lambda\mathbb I/d$ and $\mathcal N_U(\rho)=U\rho U^\dagger$. Then $
\mathcal N_\lambda(U\rho U^\dagger)=U\,\mathcal N_\lambda(\rho)\,U^\dagger,$
and therefore $\mathcal N_U\circ\mathcal N_\lambda=\mathcal N_\lambda\circ\mathcal N_U$.
\label{Dep_channel covariance}
\end{applemma}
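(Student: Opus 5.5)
The plan is a direct computation from the definition in Eq.~\eqref{eq:dep}, using only linearity of the channel, cyclicity of the trace, and unitarity of $U$. Note that for a general operator $\rho$ the lemma's shorthand $\lambda\mathbb I/d$ should be read as $\lambda\,\mathrm{Tr}(\rho)\mathbb I/d$, as in Eq.~\eqref{eq:dep}; for density operators the two coincide.

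First, apply $\mathcal N_\lambda$ to the conjugated input $U\rho U^\dagger$:
\begin{align}
\mathcal N_\lambda(U\rho U^\dagger)=(1-\lambda)U\rho U^\dagger+\lambda\frac{\mathrm{Tr}(U\rho U^\dagger)}{d}\mathbb I .
\end{align}
Cyclicity of the trace together with $U^\dagger U=\mathbb I$ gives $\mathrm{Tr}(U\rho U^\dagger)=\mathrm{Tr}(\rho)$.

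Second, use $UU^\dagger=\mathbb I$ to rewrite the identity as $\mathbb I=U\mathbb I U^\dagger$. Then factor $U(\cdot)U^\dagger$ out of both terms, which yields
\begin{align}
\mathcal N_\lambda(U\rho U^\dagger)=U\Bigl[(1-\lambda)\rho+\lambda\frac{\mathrm{Tr}(\rho)}{d}\mathbb I\Bigr]U^\dagger=U\,\mathcal N_\lambda(\rho)\,U^\dagger .
\end{align}

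Finally, the composition statement follows by reading this identity pointwise. For every $\rho\in\mathcal L(\mathcal H)$, the left-hand side is $(\mathcal N_\lambda\circ\mathcal N_U)(\rho)$ and the right-hand side is $(\mathcal N_U\circ\mathcal N_\lambda)(\rho)$. Hence the two maps are equal.

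No step here is a genuine obstacle. The only place needing care is the maximally mixed term, where both unitarity relations $U^\dagger U=UU^\dagger=\mathbb I$ are used: one to preserve the trace and one to absorb the identity into the conjugation.
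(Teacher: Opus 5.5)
Your proposal is correct and is essentially the paper's own proof: a direct computation that uses $U\mathbb I U^\dagger=\mathbb I$ to pull the conjugation out of both terms. Your only additions are tracking $\mathrm{Tr}(U\rho U^\dagger)=\mathrm{Tr}(\rho)$ so the identity also holds for non-normalized operators, and spelling out the composition step, which the paper leaves implicit.
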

\begin{proof}
Using $U\mathbb I U^\dagger=\mathbb I$,
\begin{align}
\mathcal N_\lambda(U\rho U^\dagger)
&=(1-\lambda) U\rho U^\dagger+\lambda\frac{\mathbb I}{d}
\\&=U\left((1-\lambda)\rho+\lambda\frac{\mathbb I}{d}\right)U^\dagger \\&= U \mathcal{N}_{\lambda}(\rho) U^{\dagger}
\end{align}
\end{proof}

\begin{applemma}[Query Complexity of Depolarizing Parameter Estimation]
\label{eq:depol-lemma}
Let $\mathcal{H}\cong\mathbb{C}^d$ and $\rho = \ket{0}\!\bra{0} \in \mathcal{L}(\mathcal H)$ and let $\mathcal{N}_\lambda$ be the depolarizing channel: 
\begin{equation}
    \mathcal{N}_\lambda(\rho) = (1-\lambda)\rho + \lambda \frac{\mathbb{I}}{d}.
\end{equation}
Consider the following estimation procedure repeated $N$ times:
\begin{enumerate}
    \item Prepare the state $\rho$ and apply the channel $\mathcal{N}_\lambda$.
    \item Perform the binary POVM measurement $\{\rho, \mathbb{I} - \rho\}$ to obtain an outcome $X_i \in \{0,1\}$.
\end{enumerate}
Let $\hat{p} = \frac{1}{N} \sum_{i=1}^N X_i$ be the empirical mean and define the estimator:
\begin{equation}
    \hat{\lambda} := \frac{d(1-\hat{p})}{d - 1}.
\end{equation}
Then, for any precision $\varepsilon \in (0,1)$ and failure probability $\delta \in (0,1/2)$, we have:
\begin{align}
    &\Pr
    \left(
    \|\mathcal{N}_{\hat{\lambda}}(\rho)-\mathcal{N}_{\lambda}(\rho)\|_1
    \ge \varepsilon
    \right)
    \\&=
    \Pr
    \left(
    |\hat{\lambda}-\lambda|
    \ge
    \frac{\varepsilon}{2(1-1/d)}
    \right)
    \le \delta.
\end{align}
whenever the number of samples satisfies:
\begin{equation}
    N \ge \frac{2}{\varepsilon^2}\ln\frac{2}{\delta}.
\end{equation}
\end{applemma}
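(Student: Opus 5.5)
Compute the outcome distribution explicitly, show the trace-distance event equals the stated event on $|\hat\lambda-\lambda|$, and finish with Hoeffding.

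\textbf{Outcome distribution.} With $\rho=\ket{0}\!\bra{0}$,
\[
\langle 0|\mathcal N_\lambda(\rho)|0\rangle = (1-\lambda)+\lambda/d .
\]
Hence each $X_i$ is an i.i.d.\ Bernoulli variable with success probability $p=\lambda(1-1/d)$. Note that $X_i=1$ corresponds to the outcome $\mathbb I-\rho$, consistent with Algorithm~\ref{alg:identity_testing_m=1}.

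\textbf{Estimator convention.} The estimator as written, $\hat\lambda=d(1-\hat p)/(d-1)$, corresponds to labelling the $\rho$-outcome as $X_i=1$. I will fix the convention so that $\hat\lambda$ is an unbiased affine function of $\hat p$: $\hat\lambda=\hat p/(1-1/d)$ in one labelling, or the stated formula in the other. In either case
\[
|\hat\lambda-\lambda| = \frac{|\hat p-p|}{1-1/d}.
\]
This is the only delicate point, and only a bookkeeping one. The proof needs just the affine relation with slope $(1-1/d)^{-1}$ in absolute value.

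\textbf{Equality of events.} Since $\mathcal N_{\hat\lambda}(\rho)-\mathcal N_\lambda(\rho)=(\lambda-\hat\lambda)(\rho-\mathbb I/d)$, we get
\[
\|\mathcal N_{\hat\lambda}(\rho)-\mathcal N_\lambda(\rho)\|_1 = |\hat\lambda-\lambda|\,\|\rho-\mathbb I/d\|_1 .
\]
The eigenvalues of $\rho-\mathbb I/d$ are $1-1/d$ (once) and $-1/d$ (with multiplicity $d-1$), so $\|\rho-\mathbb I/d\|_1=2(1-1/d)$. The event $\|\cdot\|_1\ge\varepsilon$ is therefore exactly the event $|\hat\lambda-\lambda|\ge \varepsilon/(2(1-1/d))$. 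This gives the stated equality. (When $\hat\lambda$ falls outside $[0,1]$, $\mathcal N_{\hat\lambda}$ is read as the affine formula, which is harmless here.)

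\textbf{Hoeffding bound.} By the affine relation, the event $|\hat\lambda-\lambda|\ge\varepsilon/(2(1-1/d))$ equals $|\hat p-p|\ge\varepsilon/2$. Hoeffding's inequality for the mean of $N$ i.i.d.\ $[0,1]$-valued variables gives
\[
\Pr(|\hat p-p|\ge \varepsilon/2)\le 2\exp(-2N\varepsilon^2/4)=2e^{-N\varepsilon^2/2}.
\]
This is at most $\delta$ exactly when $N\ge (2/\varepsilon^2)\ln(2/\delta)$, which is the claimed sample bound.
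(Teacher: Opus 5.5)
Your proposal is correct and follows the paper's proof: compute the single-shot outcome probability, use the affine relation $|\hat\lambda-\lambda| = \frac{d}{d-1}|\hat p - p|$, and apply Hoeffding with $t=\varepsilon/2$. The paper resolves the labelling issue you flagged by taking the $\rho$-outcome as $X_i=1$, so the stated estimator is used unchanged; you also derive the first equality of events via $\|\rho-\mathbb{I}/d\|_1 = 2(1-1/d)$, which the paper asserts here without derivation.
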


\begin{proof}
The probability of observing outcome $\rho$ (success) in a single trial is given by:
\begin{align}
    p &= \mathrm{Tr}(\rho \mathcal N_\lambda(\rho)) 
    = (1-\lambda)\mathrm{Tr}(\rho^2) + \frac{\lambda}{d}\mathrm{Tr}(\rho) \\
    &= 1-\lambda + \frac{\lambda}{d}
    = 1-\lambda\left(1-\frac{1}{d}\right).
\end{align}
Solving for $\lambda$ yields $\lambda = \frac{d(1-p)}{d-1}$. By linearity, $\mathbb{E}[\hat{p}]=p$, making $\hat\lambda$ an unbiased estimator of $\lambda$.
The estimation error relates to $\hat{p}$ by a constant factor:
\begin{equation}
    \label{eqn:lambdaAndPRelate}
    |\hat\lambda - \lambda| 
    = \left| \frac{d(1-\hat p)}{d-1} - \frac{d(1-p)}{d-1} \right|
    = \frac{d}{d-1}|\hat p - p|.
\end{equation}
We apply Hoeffding's inequality for bounded variables $X_i \in [0,1]$:
\begin{equation}
\label{eqn:hoeffdingsFirstUse}
    \Pr(|\hat p - p| \ge t) \le 2\exp(-2Nt^2).
\end{equation}
Due to Eqn.\eqref{eqn:lambdaAndPRelate}, the condition 
$|\hat\lambda - \lambda| \ge \frac{\varepsilon}{2(1-1/d)}$ is equivalent to 
$|\hat p - p| \ge \frac{\varepsilon}{2}$. Substituting 
$t = \frac{\varepsilon}{2}$ into Eqn.~\eqref{eqn:hoeffdingsFirstUse} and requiring the probability of error be at most $\delta$:
\begin{equation}
    2\exp\left(-2N \left(\frac{\varepsilon}{2}\right)^2\right) \le \delta 
    \iff 
    N \ge \frac{2}{\varepsilon^2}\ln\frac{2}{\delta}.
\end{equation}
\end{proof}

\subsection{Distance Lemmas}
The following distance lemmas will be used in the upper and lower bound calculations in the following sections.
\begin{applemma}[Trace distance between depolarizing channel and the identity channel]
\label{lem:trdistBetDepolChannels}
The distance between $\mathcal{N}_{\lambda}$ and the identity channel, $\operatorname{id}$, in trace distance, is:
\begin{align}
    d_{\operatorname{Tr}}(\mathcal{N}_\lambda, \operatorname{id}) &:= \max_{\rho \in \mathbb{C}^{d \times d}} \left\| (\mathcal{N}_\lambda - \operatorname{id})(\rho) \right\|_1 = 2\lambda(1-\tfrac{1}{d}).
    \label{equality, trace distance}
\end{align}
Further, any pure state maximizes the one norm.
\end{applemma}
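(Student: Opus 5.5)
We compute $(\mathcal{N}_\lambda - \operatorname{id})(\rho)$ directly. For any $\rho$ with $\mathrm{Tr}\rho=1$, Eq.~\eqref{eq:dep} gives
\begin{align}
(\mathcal{N}_\lambda - \operatorname{id})(\rho) = \lambda\left(\frac{\mathbb{I}}{d} - \rho\right),
\end{align}
so the task reduces to maximizing $\lambda\,\|\mathbb{I}/d - \rho\|_1$ over density matrices $\rho$. We interpret the maximization in the lemma as being over density matrices. Allowing arbitrary complex matrices would make the quantity unbounded by scaling, so we normalize to states.

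We diagonalize $\rho = \sum_i r_i \ket{i}\!\bra{i}$ with $r_i\ge 0$ and $\sum_i r_i = 1$. Since $\mathbb{I}/d$ commutes with $\rho$, we obtain
\begin{align}
\left\|\frac{\mathbb{I}}{d}-\rho\right\|_1 = \sum_{i=1}^d \left|r_i - \frac{1}{d}\right|.
\end{align}
This is the $\ell_1$ distance between the probability vector $r$ and the uniform distribution. The map $r\mapsto \sum_i |r_i - 1/d|$ is convex, so its maximum over the simplex is attained at a vertex $r=e_j$. The value there is $(1-1/d) + (d-1)\cdot\frac{1}{d} = 2(1-1/d)$. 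Multiplying by $\lambda$ gives $2\lambda(1-1/d)$.

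A vertex $e_j$ corresponds exactly to a pure state $\rho=\ket{j}\!\bra{j}$. Every pure state is unitarily equivalent to such a projector, and the trace norm together with $\mathbb{I}/d$ are unitarily invariant. Hence every pure state attains the maximum, which proves the final claim. Alternatively, one can check it directly: for $\rho=\ket\psi\!\bra\psi$, the operator $\mathbb{I}/d-\rho$ has eigenvalue $1/d-1$ once and $1/d$ with multiplicity $d-1$.

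There is no real obstacle here. The only subtle point is fixing the domain of the maximum to normalized states, because otherwise the statement is ill-posed. The convexity and vertex argument is routine.
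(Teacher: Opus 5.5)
Your proof is correct, but it reaches the upper bound by a different route than the paper. The lower bound is the same in both: for a pure state, $\mathbb{I}/d-\rho$ has eigenvalue $1/d-1$ once and $1/d$ with multiplicity $d-1$, which gives $2(1-1/d)$.

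\textbf{The paper's upper bound.} The paper never diagonalizes $\rho$. It uses the SDP characterization $\|H\|_1=\inf\{\operatorname{Tr}[Y_1+Y_2]: Y_1\ge H,\ Y_2\ge -H,\ Y_1,Y_2\ge 0\}$ and exhibits the feasible pair $Y_1=(\mathbb{I}-\rho)/d$, $Y_2=(1-1/d)\rho$. Their total trace is $2(1-1/d)$, independently of $\rho$.

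\textbf{Your upper bound.} You use the fact that $\mathbb{I}/d$ commutes with $\rho$. This reduces the problem to maximizing the classical $\ell_1$ distance $\sum_i|r_i-1/d|$ from the uniform distribution over the simplex. Convexity then pushes the maximum to a vertex.

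\textbf{Trade-offs.} Your argument is more elementary and self-contained, since it needs no SDP duality. With one more line it also gives that pure states are the \emph{only} maximizers: write the sum as $2\sum_i (r_i-1/d)_+$, and note that $\sum_{i\in S}r_i\le 1$ on the set $S$ where this is positive. Its limitation is that it depends on commutativity with the maximally mixed state. The paper's certificate-style argument needs no spectral decomposition. It also matches the SDP argument used in Lemma~\ref{lem:diadistBetDepolChannels} for the diamond distance, where a feasible $Z$ is constructed in the same spirit.

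\textbf{Domain of the maximization.} You are right to restrict to density operators. The paper's own proof assumes this silently: it uses $\operatorname{Tr}\rho=1$ to drop the trace factor, and $0\le\rho\le\mathbb{I}$ for feasibility of $Y_1,Y_2$. Yet the statement writes $\rho\in\mathbb{C}^{d\times d}$.
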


\begin{proof}We prove the equality in Eq.~\eqref{equality, trace distance} by matching the lower and upper bounds on the trace distance. From the definition of the trace distance it follows:
    \begin{align}
    d_{\operatorname{Tr}} \left(\mathcal{N}_\lambda, \operatorname{id}\right) &:= \max_{\rho \in \mathbb{C}^{d \times d}} \left\| (1 -\lambda)\rho + \lambda \dfrac{\mathbb{I}}{d} - \rho \right\|_1  
    \\&= \lambda \cdot \max_{\rho \in \mathbb{C}^{d \times d}} \left\|\frac{\mathbb{I}}{d} - \rho \right\|_1.  
    \label{Lemma A.1, 3}
\end{align}
For any pure state $\sigma$, 
\begin{align}
    d_{\operatorname{Tr}} \left(\mathcal{N}_\lambda, \operatorname{id}\right) \geq \lambda \left\|\frac{\mathbb{I}}{d} - \sigma \right\|_1.
    \label{LemmaA.1, 1}  
\end{align}
The Schatten 1-norm of \( \frac{\mathbb{I}}{d} - \sigma \) is given by $\left\|\frac{\mathbb{I}}{d} - \sigma \right\|_1 = \sum_{i=1}^d \Delta_i(\frac{\mathbb{I}}{d} - \sigma) = \sum_{i=1}^d |\zeta_i|$, where \( \Delta_i(\frac{\mathbb{I}}{d} - \sigma) \) and \( \zeta_i \) represent the singular values and eigenvalues of \( \frac{\mathbb{I}}{d} - \sigma \), respectively. Since \( \sigma \) is a pure state, \( \frac{\mathbb{I}}{d} - \sigma\) has eigenvalues: \(\zeta_1 = \frac{1-d}{d} \) and \(\zeta_2, \zeta_3 ... \zeta_d = \tfrac{1}{d} \). Therefore,
\begin{align}
    \bigg \|\frac{\mathbb{I}}{d} - \sigma \bigg \|_1 = \bigg|\frac{1-d}{d}\bigg| + (d-1)\dfrac{1}{d} = 2 \bigg(1-\dfrac{1}{d}\bigg).
    \label{LemmaA.1, 2}
\end{align}
Substituting  Eqn.~\eqref{LemmaA.1, 2} in Eqn.~\eqref{LemmaA.1, 1} we get: 
\begin{align}
    d_{\operatorname{Tr}} \left(\mathcal{N}_\lambda, \operatorname{id}\right) \geq 2 \lambda \bigg(1-\dfrac{1}{d}\bigg).
    \label{lemma A.1, lower}
\end{align}

We begin proving the inequality in the other direction by recalling the semidefinite program (SDP) representation of the trace norm from Ref.~\cite{khatri2020principles}: for any Hermitian operator $H$, its trace norm is given by:
\begin{align}
\label{eq:trace-norm-sdp}
\|H\|_{1}
\,=\,
\inf_{\substack{Y_{1},\,Y_{2} \ge 0}}
\left\{
    \operatorname{Tr}[Y_{1} + Y_{2}]
    :\;
    Y_{1} \ge H,\;
    Y_{2} \ge -H
\right\}.
\end{align}
We have already established that the trace distance between $\mathcal{N}_\lambda$ and $\operatorname{id}$ is $d_{\operatorname{Tr}}\left(\mathcal{N}_\lambda,\operatorname{id}\right) =
\lambda \max_{\rho}
\left\| \frac{\mathbb{I}}{d} - \rho \right\|_1.$ For each fixed state $\rho$, define $H := \frac{\mathbb{I}}{d} - \rho.$ Applying the SDP in Eqn.~\eqref{eq:trace-norm-sdp} yields:
\begin{align}
   &\left\| \frac{\mathbb{I}}{d} - \rho \right\|_1
\nonumber\\&=
\inf_{\substack{Y_{1},\,Y_{2} \ge 0}}
\left\{
    \operatorname{Tr}[Y_{1} + Y_{2}]
    :\;
    Y_{1} \ge \frac{\mathbb{I}}{d} - \rho,\;
    Y_{2} \ge \rho - \frac{\mathbb{I}}{d}
\right\}.
\end{align}

To obtain an upper bound on the trace norm, we choose a convenient feasible pair \[Y_{1} := \frac{1}{d}(\mathbb{I} - \rho),\quad
Y_{2} := \left(1 - \frac{1}{d}\right)\rho.\] Both $Y_{1}$ and $Y_{2}$ are positive semidefinite, and one verifies its feasibility in Eqn.~\eqref{eq:trace-norm-sdp} by checking : $Y_{1} - \left(\frac{\mathbb{I}}{d} - \rho\right)
= \left(1 - \frac{1}{d}\right)\rho \ge 0 \text{  and  }
Y_{2} - \left(\rho - \frac{\mathbb{I}}{d}\right)
= \frac{1}{d}(\mathbb{I} - \rho) \ge 0.$ Since $\operatorname{Tr}(Y_{1} + Y_{2})
= 2\left(1 - \frac{1}{d}\right)$, we obtain an upper bound on the $1-$norm which is independent of $\rho$. Thus, maximization of $\left\| \frac{\mathbb{I}}{d} - \rho \right\|_1$ over all $\rho \in \mathcal{L}(\mathbb{C}^d)$ is upper bounded by $2\left(1 - \frac{1}{d}\right)$, which from the definition of the trace distance in Eqn. \eqref{Lemma A.1, 3} implies:
\begin{align}
d_{\operatorname{Tr}}\left(\mathcal{N}_\lambda,\operatorname{id}\right)
&\le
2\lambda \left(1 - \frac{1}{d}\right).
\label{upper_bound_Dtr}
\end{align}
Therefore, from Eqn.~\eqref{lemma A.1, lower} and Eqn.~\eqref{upper_bound_Dtr}, for any pure state $\rho \in \mathcal{L}(\mathbb{C}^d)$:
\begin{align}
    d_{\operatorname{Tr}} \left(\mathcal{N}_\lambda, \operatorname{id}\right) = 2 \lambda \bigg(1-\frac{1}{d}\bigg)
    \label{lemma A.1, equal}
\end{align}
\end{proof}
\begin{applemma}[Diamond distance between depolarizing channel and the identity channel]
\label{lem:diadistBetDepolChannels}
    The diamond distance between the depolarizing channel, $\mathcal{N}_\lambda$, and the identity channel is:
    \begin{align}
        d_{\diamond}(\mathcal{N}_\lambda, \operatorname{id}) &:= \max_{\rho \in \mathbb{C}^{d \times d} \otimes \mathbb{C}^{d \times d}} \left\| \operatorname{id} \otimes (\mathcal{N}_\lambda - \operatorname{id})(\rho) \right\|_1 \\
        &= 2\lambda(1-\tfrac{1}{d^2}).
    \end{align} 
    Further the maximization of the $1-$norm is achieved by the maximally entangled state 
$\ket{\Phi} = \frac{1}{\sqrt{d}}\sum_{i=1}^d \ket{i}_R \ket{i}_A$.
\end{applemma}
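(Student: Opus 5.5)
We will prove Lemma~\ref{lem:diadistBetDepolChannels} by matching a lower bound and an upper bound, as in Lemma~\ref{lem:trdistBetDepolChannels}. First, by linearity,
\[
(\operatorname{id}\otimes(\mathcal N_\lambda-\operatorname{id}))(\rho)=\lambda\Bigl(\rho_R\otimes\tfrac{\mathbb I}{d}-\rho\Bigr),
\]
where $\rho_R=\operatorname{Tr}_A\rho$. The problem therefore reduces to computing
\[
\max_\rho\|\rho_R\otimes \mathbb I/d-\rho\|_1 .
\]
Convexity of the trace norm lets us restrict to pure states $\rho=|\psi\rangle\!\langle\psi|$.

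\emph{Lower bound.} Take $\rho=|\Phi\rangle\!\langle\Phi|$. Then $\rho_R=\mathbb I/d$, so the operator is
\[
\frac{\mathbb I_{d^2}}{d^2}-|\Phi\rangle\!\langle\Phi| .
\]
This is the same structure as in Lemma~\ref{lem:trdistBetDepolChannels}, now in dimension $d^2$. Its eigenvalues are $1/d^2-1$, with multiplicity one, and $1/d^2$, with multiplicity $d^2-1$. Its trace norm is therefore $2(1-1/d^2)$, which gives $d_\diamond\ge 2\lambda(1-1/d^2)$.

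\emph{Upper bound.} For a general pure $|\psi\rangle$, write its Schmidt form $|\psi\rangle=\sum_i\sqrt{p_i}|i\rangle|i\rangle$, so that $\rho_R=\sum_ip_i|i\rangle\!\langle i|$. Set $H=\rho_R\otimes\mathbb I/d-|\psi\rangle\!\langle\psi|$. Since $H$ has trace zero, $\|H\|_1=2\,\mathrm{Tr}[H_-]$, where $H_-$ is the negative part of $H$.

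Because $\rho_R\otimes\mathbb I/d\ge0$ and $|\psi\rangle\!\langle\psi|$ has rank one, $H$ has at most one negative eigenvalue. That eigenvalue is bounded below by
\[
\langle\psi|H|\psi\rangle=\sum_ip_i^2/d-1 .
\]
This only gives $\mathrm{Tr}[H_-]\le 1-\sum_i p_i^2/d$, which is not tight enough, so we instead use the SDP from Eq.~\eqref{eq:trace-norm-sdp}. We choose a feasible pair mirroring Lemma~\ref{lem:trdistBetDepolChannels}:
\[
Y_2=(1-c)\,|\psi\rangle\!\langle\psi|,\qquad Y_1=H+Y_2 .
\]
Here $c$ is picked so that $Y_1=\rho_R\otimes\mathbb I/d-c|\psi\rangle\!\langle\psi|\ge0$, and then $\mathrm{Tr}[Y_1+Y_2]=2(1-c)$. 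The largest admissible $c$ is
\[
c=\bigl(\langle\psi|(\rho_R\otimes\mathbb I/d)^{-1}|\psi\rangle\bigr)^{-1},
\]
taken on the support of $\rho_R$. A direct computation gives
\[
\langle\psi|(\rho_R\otimes\mathbb I/d)^{-1}|\psi\rangle=\sum_i p_i\cdot\frac{d}{p_i}=d\cdot r,
\]
where $r$ is the Schmidt rank. Hence $c=1/(dr)\ge 1/d^2$, and
\[
\|H\|_1\le 2\bigl(1-\tfrac{1}{dr}\bigr)\le 2\bigl(1-\tfrac{1}{d^2}\bigr),
\]
with equality when $r=d$. Combining this with the lower bound gives $d_\diamond(\mathcal N_\lambda,\operatorname{id})=2\lambda(1-1/d^2)$, attained at $|\Phi\rangle$.

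\emph{Main obstacle.} The key step is justifying the feasibility of $Y_1\ge0$ with this $c$. The plan is to use the standard fact that $A-c|\psi\rangle\!\langle\psi|\ge0$ if and only if $c\langle\psi|A^{-1}|\psi\rangle\le1$ when $|\psi\rangle\in\mathrm{supp}A$. This holds here because $|\psi\rangle\in\mathrm{supp}(\rho_R)\otimes\mathbb C^d$. The fact itself follows from a Schur complement argument, or equivalently by conjugating with $A^{-1/2}$.
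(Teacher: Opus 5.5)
Your argument is correct. The lower bound is the same as the paper's (the spectrum of $\mathbb{I}/d^2-|\Phi\rangle\!\langle\Phi|$), but the upper bound takes a genuinely different route.

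The paper works at the level of the channel. It invokes the Choi-operator characterization $d_\diamond(\mathcal{N}_\lambda,\operatorname{id})=\inf\{2\|\mathrm{Tr}_A Z\|_\infty : Z\ge 0,\ Z\ge \Gamma^{\mathcal{N}_\lambda}-\Gamma^{\mathrm{id}}\}$ and exhibits the single symmetric certificate $Z=\frac{\lambda}{d}P_\perp$. This bounds all inputs at once, but it relies on importing the diamond-norm SDP from the literature.

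You work input by input. After reducing to pure states and Schmidt-decomposing, you write $H=Y_1-Y_2$ with $Y_1=\rho_R\otimes\mathbb{I}/d-c|\psi\rangle\!\langle\psi|$ and $Y_2=(1-c)|\psi\rangle\!\langle\psi|$. Here $Y_2\ge 0$ needs $c\le 1$, which holds since $c=1/(dr)$. Because this is an exact decomposition into positive parts, you need no SDP duality at all. The triangle inequality $\|H\|_1\le \mathrm{Tr}\,Y_1+\mathrm{Tr}\,Y_2$ suffices. The only nontrivial ingredient is the rank-one criterion $A-c|\psi\rangle\!\langle\psi|\ge 0 \iff c\langle\psi|A^{+}|\psi\rangle\le 1$ for $|\psi\rangle\in\mathrm{supp}\,A$, which you justify correctly.

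Your route also yields more: a per-input bound $2\lambda\bigl(1-1/(dr)\bigr)$ in terms of the Schmidt rank $r$. At $r=1$ this recovers the trace-distance value $2\lambda(1-1/d)$ of Lemma~\ref{lem:trdistBetDepolChannels}, and at $r=d$ the diamond value. This makes explicit that full Schmidt rank is what is needed to reach the diamond distance.

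You should delete your discarded aside, which has the inequality reversed. The Rayleigh quotient bounds the negative eigenvalue from above, $\lambda_{\min}(H)\le\langle\psi|H|\psi\rangle=\sum_i p_i^2/d-1$. So it gives a lower bound $\mathrm{Tr}[H_-]\ge 1-\sum_i p_i^2/d$, not an upper bound. Had the inequality gone the way you wrote, it would in fact have sufficed, since $\sum_i p_i^2\ge 1/d$. Your final argument does not use it, so the proof stands.
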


 \begin{proof}
We prove matching lower and upper bounds.

From the definition of the diamond distance, we have: 
\begin{align}
d_{\diamond}(\mathcal{N}_\lambda, \operatorname{id})
= \max_{\rho_{RA}}
\bigl\|(\mathcal{N}_\lambda \otimes \mathrm{id})(\rho_{RA})
      - \rho_{RA}\bigr\|_1,
\end{align}
where the maximization of $\rho_{RA}$ is over all density operators on $\mathbb{C}^d \otimes \mathbb{C}^d$.
Thus, \emph{any} choice of $\rho_{RA}$ substituted into $\bigl\|(\mathcal{N}_\lambda \otimes \mathrm{id})(\rho_{RA}) - \rho_{RA}\bigr\|_1$ yields a lower bound on the diamond norm. The choice we use herein is $\rho = \ket{\Phi}\!\bra{\Phi},$
where $\ket{\Phi}
    =
    \frac{1}{\sqrt{d}}
    \sum_{i=1}^{d}
    \ket{i}_{R}\ket{i}_{A}$
is the maximally entangled state on $RA$.
Then,
\begin{align}
(\mathcal{N}_\lambda \otimes \mathrm{id})(\rho) - \rho
= \lambda\Bigl(\frac{\mathbb{I} \otimes \mathbb{I}}{d^2} - \rho\Bigr).
\end{align}
Define
\begin{align}
\mathcal{X} := \frac{\mathbb{I}\otimes\mathbb{I}}{d^2} - \ket{\Phi}\!\bra{\Phi}.
\end{align}
Then, as \(\mathcal{X}\ket{\Phi} = -\bigl(1-\tfrac{1}{d^2}\bigr)\ket{\Phi}\) and
\(\mathcal{X}\ket{\psi} = \tfrac{1}{d^2}\ket{\psi}\) for all \(\ket{\psi}\perp\ket{\Phi}\), the eigenvalues of \(\mathcal{X}\) are:
\begin{align}
-\Bigl(1-\frac{1}{d^2}\Bigr) \text{ and }
\frac{1}{d^2},
\end{align}
with multiplicity $1$ and $d^2-1$, respectively.
Hence, for \(\lambda \ge 0\):
\begin{align}
\label{eqn:XOneNorm}
\|\lambda \mathcal{X}\|_1
= \lambda\Bigl(1-\frac{1}{d^2}\Bigr)
  + (d^2-1)\frac{\lambda}{d^2}
= 2\lambda\Bigl(1-\frac{1}{d^2}\Bigr).
\end{align}

Therefore, we have derived the following lower bound on $d_{\diamond}(\mathcal{N}_\lambda, \operatorname{id})$:
\begin{align}
d_{\diamond}(\mathcal{N}_\lambda, \operatorname{id})
\ge 2\lambda\Bigl(1-\frac{1}{d^2}\Bigr),
\end{align}
and, as shown in Eqn.~\eqref{eqn:XOneNorm}, this one norm is achieved by  $\ket{\Phi} = \frac{1}{\sqrt{d}}\sum_{i=1}^d \ket{i}_R \ket{i}_A$.

Let $\Gamma^{\mathcal{N}_\lambda}:=(\mathrm{id}_R\otimes \mathcal{N}_\lambda)(|\Phi\rangle\!\langle\Phi|_{RA})$ and $\Gamma^{\mathrm{id}}:=(\mathrm{id}_R\otimes \mathrm{id}_A)(|\Phi\rangle\!\langle\Phi|_{RA})=|\Phi\rangle\!\langle\Phi|_{RA},$ be the Choi operators of $\mathcal{N}_{\lambda}$ and $\operatorname{id}$ where $|\Phi\rangle_{RA}:=\sum_i |i\rangle_R\otimes |i\rangle_A.$ is the unnormalized maximally entangled state.
Define the Choi difference, $\Delta$, by:
\begin{align}
\Delta
:= \Gamma^{\mathcal{N}_\lambda} - \Gamma^{\mathrm{id}}.
\end{align}

From the SDP characterization of the diamond norm in \cite{khatri2020principles}, we have
\begin{equation}
    d_{\diamond}(\mathcal{N}_\lambda, \operatorname{id})
= \inf_{\substack{Z \ge 0 \\ Z \ge \Delta}}2
\bigl\|\operatorname{Tr_2}[Z]\bigr\|_\infty.
\label{SDP_diamond}
\end{equation}

We know, \begin{align}\Gamma^{\mathcal{N}_\lambda}
= (1-\lambda)\,\Gamma^{\mathrm{id}}
  + \lambda\,\frac{1}{d}\,\mathbb I \otimes \mathbb I.\end{align} 
  Therefore,
  \begin{align}\Delta
= \lambda\Bigl(\frac{1}{d} \mathbb I \otimes \mathbb I - \Gamma^{\mathrm{id}}\Bigr).
\label{eqn:deltaDef}
\end{align}

Let $P_\parallel := \frac{1}{d}\ket{\Omega}\bra{\Omega},$ and $
P_\perp := \mathbb I \otimes \mathbb I - P_\parallel.$ Then $P_\parallel$ is the rank-one projector onto the (normalized)
maximally entangled state, and $P_\parallel + P_\perp = \mathbb I \otimes \mathbb I$. \newline We have $\Gamma^{\mathrm{id}} = \ket{\Omega}\bra{\Omega}
= d\,P_\parallel$ and $\frac{1}{d} \mathbb I \otimes \mathbb I
= \frac{1}{d}(P_\parallel + P_\perp),$ so, using Eqn.~\ref{eqn:deltaDef}:
\begin{align}
\Delta
= \lambda\Bigl(\frac{1}{d} - d\Bigr) P_\parallel
  + \lambda\frac{1}{d} P_\perp.
\end{align}

From here on, we consider only operators of the form:
\begin{align}
Z = \lambda\bigl(\alpha P_\parallel + \beta P_\perp\bigr),
\qquad \alpha,\beta \ge 0.
\end{align}
For operators of this form, the condition $Z \ge \Delta$ is equivalent to:
\begin{align}
\alpha \;\ge\; \frac{1}{d} - d,
\qquad
\beta \;\ge\; \frac{1}{d}.
\end{align}
Since $\frac{1}{d} - d < 0$, any $\alpha \ge 0$ satisfies the first
inequality, so the only nontrivial constraint is $\beta \ge 1/d$. We now compute the partial trace $\operatorname{Tr}_A[Z]$.
Using $\operatorname{Tr}_A[P_\parallel] = \frac{1}{d} \mathbb I$
and $\operatorname{Tr}_A[\mathbb I \otimes \mathbb I] = d\,\mathbb I$, we obtain:
\begin{align}
\operatorname{Tr}_A[P_\perp]
= \operatorname{Tr}_[\mathbb I \otimes \mathbb I - P_\parallel]
= d\mathbb I - \frac{1}{d} \mathbb I
= \Bigl(d - \frac{1}{d}\Bigr) \mathbb I.
\end{align}
Thus,
\begin{align}
\operatorname{Tr}_A[Z]
= \lambda\Bigl(\alpha\,\frac{1}{d} \mathbb I
               + \beta\Bigl(d - \frac{1}{d}\Bigr)\mathbb I\Bigr)
= \lambda\Bigl(\frac{\alpha}{d}
               + \beta\Bigl(d - \frac{1}{d}\Bigr)\Bigr) \mathbb I.
\end{align}
Since $\operatorname{Tr}_A[Z]$ is a scalar multiple of the identity, its operator norm is just the absolute value of:
\begin{align}
\label{eqn:infNormTraceZ}
\bigl\|\operatorname{Tr}_A[Z]\bigr\|_\infty
= \lambda\Bigl(\frac{\alpha}{d}
               + \beta\Bigl(d - \frac{1}{d}\Bigr)\Bigr).
\end{align}
Setting $\alpha = 0$ and $\beta = 1/d$ in Eqn.~\eqref{eqn:infNormTraceZ} minimizes it subject to the conditions $\alpha \ge 0$ and $\beta \ge 1/d$. Therefore, the minimum possible value of Eqn.~\eqref{eqn:infNormTraceZ} is:
\begin{align}
\label{eqn:infNormTraceZ_2}
\lambda\Bigl(\frac{1}{d}\Bigl(d - \frac{1}{d}\Bigr)\Bigr)
= \lambda\Bigl(1 - \frac{1}{d^2}\Bigr).
\end{align}
Eqn.~\eqref{eqn:infNormTraceZ_2} can then be applied in Eqn.~\eqref{SDP_diamond} to conclude that:
\begin{align}
d_{\diamond}(\mathcal{N}_\lambda, \operatorname{id})
\le 2\lambda\Bigl(1 - \frac{1}{d^2}\Bigr).
\end{align}
Combining this upper bound with the lower bound found previously, we obtain:
\begin{align}
d_{\diamond}(\mathcal{N}_\lambda, \operatorname{id})
= 2\lambda\Bigl(1 - \frac{1}{d^2}\Bigr),
\label{diamond_equal}
\end{align}
as claimed.
\end{proof}
\begin{applemma}[Lower bound on the $d_{\operatorname{Tr}}$] 
\begin{align}
    d_{\operatorname{Tr}}(\mathcal{N}_{\lambda},\operatorname{id}) \geq \dfrac{d_{\diamond}(\mathcal{N}_{\lambda}, \text{id})}{1.5}, \text{ for } d\ge 2.
\end{align}
\end{applemma}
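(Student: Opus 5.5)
The plan is to substitute the two closed forms already computed and reduce the claim to an elementary inequality in $d$. By Lemma~\ref{lem:trdistBetDepolChannels}, $d_{\operatorname{Tr}}(\mathcal{N}_\lambda,\operatorname{id}) = 2\lambda(1-\tfrac{1}{d})$. By Lemma~\ref{lem:diadistBetDepolChannels}, $d_{\diamond}(\mathcal{N}_\lambda,\operatorname{id}) = 2\lambda(1-\tfrac{1}{d^2})$. For $\lambda = 0$ both sides vanish and there is nothing to prove. For $\lambda>0$, I divide by the common positive factor $2\lambda$.

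Using the factorization $1-\tfrac{1}{d^2} = (1-\tfrac{1}{d})(1+\tfrac{1}{d})$, the ratio becomes
\begin{align}
\frac{d_{\diamond}(\mathcal{N}_\lambda,\operatorname{id})}{d_{\operatorname{Tr}}(\mathcal{N}_\lambda,\operatorname{id})} = 1+\frac{1}{d}.
\end{align}
This is legitimate because $1-\tfrac{1}{d}>0$ for $d\ge 2$. The claim is therefore equivalent to $1+\tfrac{1}{d}\le 1.5$, that is, $d\ge 2$, which is exactly the hypothesis. Multiplying back gives $1.5\, d_{\operatorname{Tr}}\ge d_{\diamond}$.

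There is no real obstacle here; the only care needed is to exclude the degenerate cases $\lambda=0$ and $d=1$. For $d=1$ both distances vanish, so the inequality holds trivially even though the ratio is undefined. I would also remark that the constant $1.5$ is tight, since it is attained at $d=2$ (a single qubit). As $d\to\infty$ the ratio tends to $1$, so $d_{\operatorname{Tr}}$ and $d_\diamond$ agree up to a factor in $[1,1.5]$ uniformly in dimension. This is what allows the ancilla-free testers to certify diamond-distance guarantees.
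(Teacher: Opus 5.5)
Your proof is correct and takes essentially the same approach as the paper. Both substitute the closed forms $2\lambda(1-\tfrac{1}{d})$ and $2\lambda(1-\tfrac{1}{d^2})$ from Lemmas~\ref{lem:trdistBetDepolChannels} and~\ref{lem:diadistBetDepolChannels} and reduce the claim to $1+\tfrac{1}{d}\le 1.5$ for $d\ge 2$; the paper phrases this as solving for an admissible constant $c\ge (d+1)/d$ and using monotonicity in $d$, whereas you compute the ratio directly and also treat the $\lambda=0$ case explicitly.
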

\begin{proof}
Assume there exists a $c \in \mathbb{R}$ such that:
\begin{align}
    \label{eqn:distanceRelationSeeking}
    d_{\operatorname{Tr}}(\mathcal{N}_{\lambda},\operatorname{id}) \geq \dfrac{d_{\diamond}(\mathcal{N}_{\lambda}, \text{id})}{c} \text{ for } d\ge 2.
\end{align}
Then, using Lemmas~\ref{lem:trdistBetDepolChannels}~and~\ref{lem:diadistBetDepolChannels}, this requires:
\begin{align}
\label{eqn:cExistsRequirements}
    1 - \dfrac{1}{d} \geq \dfrac{1}{c} \left ( 1 - \dfrac{1}{d^2} \right ).
\end{align}
Solving Eqn.~\eqref{eqn:cExistsRequirements} for $c$ gives:
\begin{align}
\label{eqn:cExistsRequirementsModified}
    c \geq \dfrac{ 1 - d^{-2}}{1 - d^{-1}}
    =
    \dfrac{d + 1}{d}.
\end{align}
As $(d+1)/d$ is monotonically decreasing, any value of $c$ that satisfies Eqn.~\eqref{eqn:cExistsRequirementsModified} for $d = 2$ satisfies it for any $d \geq 2$. Hence, Eqn.~\eqref{eqn:distanceRelationSeeking} is satisfied -- for all $d \geq 2$ -- by any $c$ such that $c \geq 3/2$. Therefore,
\begin{align}
    d_{\operatorname{Tr}}(\mathcal{N}_{\lambda},\operatorname{id}) \geq \dfrac{d_{\diamond}(\mathcal{N}_{\lambda}, \text{id})}{1.5} \text{ for } d\ge 2.
\end{align}
\end{proof}

\begin{applemma}[Bounds on the $d_{\diamond}$] 
\label{lem:diamonddistbounds}
\begin{align}
   2\cdot d_{\diamond}(\mathcal{N}_{\lambda}, \operatorname{id}) \ge d_{\diamond}(\mathcal{N}^{\circ 2}_{\lambda},\operatorname{id}) \geq d_{\diamond}(\mathcal{N}_{\lambda},\operatorname{id}) \text{ for } d\ge 2
\end{align}
\end{applemma}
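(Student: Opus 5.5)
We reduce everything to the explicit formula of Lemma~\ref{lem:diadistBetDepolChannels}, which gives $d_{\diamond}(\mathcal{N}_\mu,\operatorname{id}) = 2\mu(1-1/d^2)$ for every depolarizing parameter $\mu\in[0,1]$. The diamond distance is therefore linear in the parameter, with a $d$-dependent constant that is the same on all three sides of the claimed inequality. The whole statement thus collapses to a scalar inequality between $\lambda$ and the effective parameter of $\mathcal{N}_\lambda^{\circ 2}$.

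The first step is to compute the composition directly from Eqn.~\eqref{eq:dep}. Applying $\mathcal{N}_\lambda$ to $\mathcal{N}_\lambda(\rho) = (1-\lambda)\rho + \lambda\,\mathrm{Tr}(\rho)\mathbb{I}/d$, and using that $\mathcal{N}_\lambda$ fixes $\mathbb{I}/d$ and preserves trace, gives
\begin{align}
\mathcal{N}_\lambda^{\circ 2}(\rho) = (1-\lambda)^2\rho + \bigl(1-(1-\lambda)^2\bigr)\frac{\mathrm{Tr}(\rho)}{d}\,\mathbb{I}.
\end{align}
Hence $\mathcal{N}_\lambda^{\circ 2} = \mathcal{N}_\gamma$ with $\gamma = 1-(1-\lambda)^2 = \lambda(2-\lambda)$. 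Since $\lambda\in[0,1]$, we have $\gamma\in[0,1]$, so $\mathcal{N}_\gamma$ is a legitimate depolarizing channel. The same holds for the ancilla-extended maps $\operatorname{id}\otimes\mathcal{N}_\lambda^{\circ 2} = (\operatorname{id}\otimes\mathcal{N}_\lambda)^{\circ 2}$, so Lemma~\ref{lem:diadistBetDepolChannels} applies verbatim to $\mathcal{N}_\gamma$.

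Substituting into Lemma~\ref{lem:diadistBetDepolChannels}, the three quantities become $2\cdot 2\lambda(1-1/d^2)$, $2\lambda(2-\lambda)(1-1/d^2)$ and $2\lambda(1-1/d^2)$. Dividing by the positive constant $2(1-1/d^2)$ (positive for $d\ge2$), it remains to show $2\lambda \ge \lambda(2-\lambda) \ge \lambda$ for $\lambda\in[0,1]$. The left inequality is $\lambda^2\ge 0$. The right one is $\lambda(1-\lambda)\ge 0$, which holds because $\lambda\le 1$. Equality in the upper bound occurs only at $\lambda=0$, and in the lower bound at $\lambda\in\{0,1\}$.

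The only step needing care is justifying that the composition is again of the form in Eqn.~\eqref{eq:dep} with $\gamma\in[0,1]$, since Lemma~\ref{lem:diadistBetDepolChannels} was stated for that parameter range. This is routine. Note also that the closed form in the paper's discussion ($\gamma = 1-(1-\lambda^2)$) should read $1-(1-\lambda)^2$, and we use the latter.

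As a sanity check on the upper bound independent of the formula, it also follows from the triangle inequality together with subadditivity of the diamond distance under composition:
\begin{align}
d_\diamond(\mathcal{N}_\lambda^{\circ2},\operatorname{id}) \le d_\diamond(\mathcal{N}_\lambda\circ\mathcal{N}_\lambda,\mathcal{N}_\lambda) + d_\diamond(\mathcal{N}_\lambda,\operatorname{id}) \le 2\,d_\diamond(\mathcal{N}_\lambda,\operatorname{id}).
\end{align}
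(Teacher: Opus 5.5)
Your argument is correct and is essentially the paper's: both use $\mathcal{N}_\lambda^{\circ 2}=\mathcal{N}_{1-(1-\lambda)^2}$ together with the closed form $d_{\diamond}(\mathcal{N}_\mu,\operatorname{id})=2\mu(1-1/d^2)$ to reduce the claim to $2\lambda\ge 1-(1-\lambda)^2\ge\lambda$ on $[0,1]$. You spell out the composition step that the paper only uses implicitly, and your typo correction applies equally to the paper's Eqn.~\eqref{eqn:diamondcExistsRequirementsModified}, which as printed reads $c\ge\lambda$ instead of $c\ge 2-\lambda$.
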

\begin{proof}
Assume there exists a $c \in \mathbb{R}$ such that:
\begin{align}
    \label{eqn:diamonddistanceRelationSeeking}
    c~\cdot d_{\diamond}(\mathcal{N}_{\lambda}, \operatorname{id}) \ge d_{\diamond}(\mathcal{N}^{\circ 2}_{\lambda},\operatorname{id}) \text{ for } d\ge 2.
\end{align}
Then, using Lemma~\ref{lem:diadistBetDepolChannels}, this requires:
\begin{align}
\label{eqn:diamondcExistsRequirements}
    c\lambda \ge 1-(1-\lambda)^2.
\end{align}
Solving Eqn.~\eqref{eqn:diamondcExistsRequirements} for $c$ gives:
\begin{align}
\label{eqn:diamondcExistsRequirementsModified}
    c \geq \dfrac{ 1 - (1-\lambda^2)}{\lambda}
\end{align}
The value of $c$ that satisfies Eqn.~\eqref{eqn:diamondcExistsRequirementsModified} for all values of $\lambda \in [0,1]$ is 2. The second inequality follows from the fact that $1-(1-\lambda)^2 \ge \lambda$ for all values of $\lambda \in [0,1]$.
\end{proof}

\subsection{Other Important Lemmas}
These lemmas will be used in the tolerant tester upper bound. Let $n \in \mathbb{N}$ and $k \in \{0, \dots, n\}$. For $p \in [0,1]$, let $S^{(p)} \sim \mathrm{Binomial}(n,p)$. 
\begin{applemma}[Monotonicity of Binomial Lower Tail \cite{scholz2008confidence}]
\label{lem:binomial_monot_lower}
The cumulative distribution function $\operatorname{Pr}(S^{(p)} \le k)$ is nonincreasing in $p$. Specifically, for any $0 \le p_1 \le p_2 \le 1$,
\begin{equation}
    \operatorname{Pr}(S^{(p_2)} \le k) \le \operatorname{Pr}(S^{(p_1)} \le k).
\end{equation}
\end{applemma}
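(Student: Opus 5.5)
The plan is to show directly that the derivative of $F(p):=\Pr(S^{(p)}\le k)=\sum_{j=0}^{k}\binom{n}{j}p^j(1-p)^{n-j}$ is nonpositive on $(0,1)$. The boundary cases are handled separately. If $k=n$ then $F\equiv 1$ and there is nothing to prove. At the endpoints $p\in\{0,1\}$ the statement follows by continuity of the polynomial $F$, or directly, since $F(0)=1$ and $F(1)=\mathbb{1}[k=n]$.

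For the main step, I will differentiate termwise, using
\begin{align}
\frac{d}{dp}\Bigl[\binom{n}{j}p^j(1-p)^{n-j}\Bigr]
= n\Bigl[b_{n-1}(j-1;p)-b_{n-1}(j;p)\Bigr],
\end{align}
where $b_{n-1}(j;p):=\binom{n-1}{j}p^j(1-p)^{n-1-j}$ and $b_{n-1}(-1;p):=0$. This identity follows from $j\binom{n}{j}=n\binom{n-1}{j-1}$ and $(n-j)\binom{n}{j}=n\binom{n-1}{j}$. Summing over $j=0,\dots,k$, the sum telescopes to
\begin{align}
F'(p)=-n\,b_{n-1}(k;p)=-n\binom{n-1}{k}p^k(1-p)^{n-1-k}\le 0
\end{align}
for $k\le n-1$. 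Hence $F$ is nonincreasing on $[0,1]$, and $p_1\le p_2$ gives $\Pr(S^{(p_2)}\le k)\le\Pr(S^{(p_1)}\le k)$.

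As a cross-check, I would also give a coupling proof. Draw i.i.d.\ uniforms $U_1,\dots,U_n$ on $[0,1]$ and set $S^{(p)}=\sum_i \mathbb{1}[U_i\le p]$. Then $S^{(p_1)}\le S^{(p_2)}$ pointwise whenever $p_1\le p_2$, so the event $\{S^{(p_2)}\le k\}$ is contained in $\{S^{(p_1)}\le k\}$, and the inequality follows immediately. This is probably the cleaner argument to present.

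The only step needing care is the telescoping index bookkeeping, in particular the $j=0$ term and the $k=n$ edge case. Beyond that I expect no real obstacle.
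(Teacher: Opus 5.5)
Your proof is correct. The telescoping identity $F'(p)=-n\binom{n-1}{k}p^k(1-p)^{n-1-k}$ is computed properly, including the $j=0$ term and the vanishing of $b_{n-1}(n;p)$, and the uniform coupling is valid.

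The paper gives no proof of this lemma at all; it only cites Scholz's notes. So your proposal supplies what the paper omits rather than following or departing from a written argument.

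Your two routes buy different things:
\begin{itemize}
\item \emph{Derivative computation.} This gives a quantitative statement. Integrating from $0$ yields $\Pr(S^{(p)}\le k)=1-n\binom{n-1}{k}\int_0^p t^k(1-t)^{n-1-k}\,dt=\Pr\bigl(\mathrm{Beta}(k+1,n-k)>p\bigr)$. This is the standard binomial--beta relation behind Clopper--Pearson-type bounds, and it shows the decrease is strict for $k<n$ and $p\in(0,1)$.
\item \emph{Coupling.} This buys economy. It gives the pointwise ordering $S^{(p_1)}\le S^{(p_2)}$, from which both this lemma and the companion upper-tail statement (Lemma~\ref{lem:binomial_monot_upper}) follow at once. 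That ordering is exactly what the tolerant-tester analysis needs when it replaces the supremum over each hypothesis region by the boundary value of $p$.
\end{itemize}

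One small simplification: $F$ is a polynomial and your formula gives $F'\le 0$ on the closed interval $[0,1]$ when $k\le n-1$. The mean value theorem therefore covers the endpoints directly, and the separate continuity remark is unnecessary.
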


\begin{applemma}[Monotonicity of Binomial Upper Tail \cite{scholz2008confidence}]
\label{lem:binomial_monot_upper}
The probability $\operatorname{Pr}(S^{(p)} \ge k)$ is nondecreasing in $p$. Specifically, for any $0 \le p_1 \le p_2 \le 1$,
\begin{equation}
    \operatorname{Pr}(S^{(p_1)} \ge k) \le \operatorname{Pr}(S^{(p_2)} \ge k).
\end{equation}
\end{applemma}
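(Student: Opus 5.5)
The statement is Lemma~\ref{lem:binomial_monot_upper}: the upper tail $\Pr(S^{(p)}\ge k)$ is nondecreasing in $p$. The cleanest route is to deduce it from Lemma~\ref{lem:binomial_monot_lower} by complementation. Since $S^{(p)}$ is integer valued,
\begin{align}
\Pr(S^{(p)}\ge k)=1-\Pr(S^{(p)}\le k-1).
\end{align}
For $k=0$ the left side equals $1$ for every $p$, so monotonicity is trivial. For $k\ge 1$, Lemma~\ref{lem:binomial_monot_lower} applied with $k-1\in\{0,\dots,n-1\}$ says that $\Pr(S^{(p)}\le k-1)$ is nonincreasing in $p$. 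Hence its complement is nondecreasing, which gives $\Pr(S^{(p_1)}\ge k)\le\Pr(S^{(p_2)}\ge k)$ whenever $p_1\le p_2$.

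To make the argument self-contained rather than resting on the cited lower-tail lemma, I would give a direct coupling proof. Take i.i.d.\ uniform variables $U_1,\dots,U_n$ on $[0,1]$ and set $S^{(p)}:=\sum_i \mathbf{1}[U_i\le p]$, which has the $\mathrm{Binomial}(n,p)$ law. If $p_1\le p_2$, then $\mathbf{1}[U_i\le p_1]\le \mathbf{1}[U_i\le p_2]$ pointwise, so $S^{(p_1)}\le S^{(p_2)}$ surely. It follows that $\{S^{(p_1)}\ge k\}\subseteq\{S^{(p_2)}\ge k\}$, and taking probabilities gives the claim. The same coupling also proves Lemma~\ref{lem:binomial_monot_lower}, using the reverse inclusion of the events $\{S\le k\}$.

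As a further alternative, one could differentiate. The standard identity
\begin{align}
\frac{d}{dp}\Pr(S^{(p)}\ge k)=n\binom{n-1}{k-1}p^{k-1}(1-p)^{n-k}\ge 0
\end{align}
holds for $k\ge 1$ and follows from a telescoping sum. I would not use this route, since the telescoping bookkeeping is the only step with any real work, and the coupling avoids it entirely.

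There is no substantive obstacle. The only care points are the edge case $k=0$ and the consistent use of integrality of $S^{(p)}$, so that $\{S\ge k\}$ is exactly the complement of $\{S\le k-1\}$.
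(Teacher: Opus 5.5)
Your proof is correct. The paper does not prove this lemma at all: it states this lemma and Lemma~\ref{lem:binomial_monot_lower} as separate facts and defers both to the citation of Scholz, so there is no in-paper argument to match.

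Your complementation step, $\Pr(S^{(p)}\ge k)=1-\Pr(S^{(p)}\le k-1)$ with $k=0$ trivial, shows the two cited lemmas are really one statement. On its own, though, it only transfers the dependence on the citation. What makes your proof self-contained is the monotone coupling $S^{(p)}=\sum_i \mathbf{1}[U_i\le p]$, which, as you note, proves both lemmas at once.

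The coupling also gives something slightly more convenient than the integer-$k$ statement. Since $S^{(p_1)}\le S^{(p_2)}$ pointwise, the inclusions $\{S^{(p_1)}\ge x\}\subseteq\{S^{(p_2)}\ge x\}$ and $\{S^{(p_2)}<x\}\subseteq\{S^{(p_1)}<x\}$ hold for every real threshold $x$. That is exactly the form in which the tolerant-tester analysis uses these lemmas: it works with the events $S/N\ge\tau$ and $S/N<\tau$ for real $\tau$. So the paper's detour from $<$ to $\le$ and the implicit rounding to an integer $k$ become unnecessary.

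The derivative identity you mention is the more classical textbook route, via the upper tail being the regularized incomplete beta function $I_p(k,n-k+1)$. It yields strict monotonicity on $(0,1)$ for $1\le k\le n$, but as you say it costs a computation that the coupling avoids entirely.
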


\begin{applemma}[Chernoff--Hoeffding Bounds 
\label{Chernoff-Hoeffding-Bound}
\cite{gerbessiotis2025survey}]
Let $X_1, \ldots, X_N$ be independent Bernoulli random variables with parameter $p \in (0,1)$. Let $S_N = \sum_{i=1}^N X_i$ denote their sum and $\bar{X} = S_N/N$ denote the empirical mean, such that $\mathbb{E}[\bar{X}] = p$.

For any $r \in (0,1)$, let $D(r \| p)$ denote the Kullback--Leibler (KL) divergence between Bernoulli distributions with parameters $r$ and $p$:
\begin{equation}
    D_{KL}(r \| p) = r \ln \left(\frac{r}{p}\right) + (1-r) \ln \left(\frac{1-r}{1-p}\right).
\end{equation}
Then, the following tail bounds hold:
\begin{align}
    \operatorname{Pr}(\bar{X} \ge r) &\le \exp\big(-N D_{KL}(r \| p)\big) \quad \text{for } p < r < 1, \\
    \operatorname{Pr}(\bar{X} \le r) &\le \exp\big(-N D_{KL}(r \| p)\big) \quad \text{for } 0 < r < p.
\end{align}
\end{applemma}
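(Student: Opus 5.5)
The plan is the standard exponential-moment (Chernoff) argument: apply Markov's inequality to $e^{\theta S_N}$, use independence to factor the moment generating function, and optimize over $\theta$. The optimal value turns out to be exactly $N D_{KL}(r\|p)$. I would prove the upper tail first and obtain the lower tail by a symmetry reduction.

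\emph{Upper tail.} Fix $p<r<1$ and any $\theta>0$. Markov's inequality applied to $e^{\theta S_N}$ gives
\begin{align}
\operatorname{Pr}(\bar X\ge r)=\operatorname{Pr}\bigl(e^{\theta S_N}\ge e^{\theta N r}\bigr)\le e^{-\theta N r}\,\mathbb{E}\bigl[e^{\theta S_N}\bigr]=\exp\Bigl(-N\bigl[\theta r-\ln(1-p+pe^{\theta})\bigr]\Bigr),
\end{align}
where independence was used to write $\mathbb{E}[e^{\theta S_N}]=(1-p+pe^\theta)^N$. Next I maximize $g(\theta)=\theta r-\ln(1-p+pe^\theta)$. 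Setting $g'(\theta)=0$ gives $pe^\theta/(1-p+pe^\theta)=r$, so
\begin{align}
\theta^*=\ln\frac{r(1-p)}{p(1-r)}.
\end{align}
This $\theta^*$ is strictly positive precisely because $r>p$, so it is admissible. At $\theta^*$ we have $1-p+pe^{\theta^*}=(1-p)/(1-r)$. Substituting gives
\begin{align}
g(\theta^*)=r\ln\frac{r}{p}+r\ln\frac{1-p}{1-r}-\ln\frac{1-p}{1-r}=r\ln\frac rp+(1-r)\ln\frac{1-r}{1-p}=D_{KL}(r\|p),
\end{align}
which is the first bound.

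\emph{Lower tail.} For $0<r<p$, set $Y_i=1-X_i$. These are independent Bernoulli variables with parameter $1-p$, and $\{\bar X\le r\}=\{\bar Y\ge 1-r\}$ with $1-p<1-r<1$. Applying the upper-tail bound to the $Y_i$ gives the bound $\exp(-N D_{KL}(1-r\|1-p))$. Swapping the two summands in the definition of $D_{KL}$ shows $D_{KL}(1-r\|1-p)=D_{KL}(r\|p)$, which is the second bound.

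There is no real obstacle in this argument. The only point that needs care is confirming that the optimizer $\theta^*$ has the sign required by Markov's inequality, i.e.\ $\theta^*>0$, which holds exactly in the stated range $p<r<1$. The endpoint cases are excluded by the hypotheses $p\in(0,1)$ and $r\in(0,1)$, so every logarithm above is finite.
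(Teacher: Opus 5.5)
Your proof is correct. The moment-generating-function bound with $\theta^*=\ln\frac{r(1-p)}{p(1-r)}>0$ gives exactly $N D_{KL}(r\|p)$, and the substitution $Y_i=1-X_i$ correctly transfers the bound to the lower tail. The paper gives no proof of this lemma and only cites the Gerbessiotis survey, so your argument, which is the standard exponential-moment (Chernoff) derivation of this bound, fills in what the paper leaves to the reference.
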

\section{Upper Bounds}
\subsection{Standard Tester : When sequential black box applications are not allowed} 
\label{app:b1}
\begin{apptheorem}
    For the problem in \emph{\ref{prob:P2.1}}, there exists an ancilla-free, non-adaptive
algorithm using individual two-outcome measurements that achieves an upper bound
of $O(1/\varepsilon)$ queries. 
\end{apptheorem}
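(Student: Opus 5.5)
We analyse Algorithm~\ref{alg:identity_testing_m=1} directly. Each trial prepares $\ket{0}\!\bra{0}$, applies $\mathcal{N}_\lambda$ once and measures $\{\ket{0}\!\bra{0},\mathbb{I}-\ket{0}\!\bra{0}\}$. The trials are independent, so the outcomes $X_k$ are i.i.d.\ Bernoulli with parameter
\begin{align}
p = 1-\bra{0}\mathcal{N}_\lambda(\ket{0}\!\bra{0})\ket{0} = \lambda\Bigl(1-\frac{1}{d}\Bigr),
\end{align}
exactly as computed in the proof of Lemma~\ref{eq:depol-lemma}. The plan has three steps: show the type-I error vanishes, lower bound $p$ under $H_1$ in terms of $\varepsilon$, and then bound the type-II error by a product of independent probabilities.

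\emph{Type-I error.} Under $H_0$ we have $\mathcal{N}_\lambda=\operatorname{id}$, so $\lambda=0$ and $p=0$. Hence $X_k=0$ almost surely for every $k$, and the algorithm never outputs $H_1$.

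\emph{Lower bound on $p$ under $H_1$.} By Lemma~\ref{lem:diadistBetDepolChannels}, $d_\diamond(\mathcal{N}_\lambda,\operatorname{id})=2\lambda(1-1/d^2)\ge\varepsilon$, which gives $\lambda\ge \varepsilon/\bigl(2(1-1/d^2)\bigr)$. Therefore
\begin{align}
p \ge \frac{\varepsilon}{2}\cdot\frac{1-1/d}{1-1/d^2} = \frac{\varepsilon}{2(1+1/d)} \ge \frac{\varepsilon}{3}
\quad\text{for } d\ge 2.
\end{align}
Equivalently, one could combine Lemma~\ref{lem:trdistBetDepolChannels} with the $1.5$-factor lemma relating $d_{\operatorname{Tr}}$ and $d_\diamond$. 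This is the only step that uses the depolarizing structure, and it is what makes the bound dimension-independent: the pure-state probe loses at most a constant factor $(d+1)/d\le 3/2$ relative to the entangled optimum.

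\emph{Type-II error.} A type-II error occurs only if every $X_k$ equals $0$. By independence,
\begin{align}
\Pr(\text{error}\mid H_1)=(1-p)^N \le e^{-pN}\le e^{-N\varepsilon/3}.
\end{align}
This is at most $\delta$ once $N\ge (3/\varepsilon)\ln(1/\delta)$. Since $1+1/d\le 2$, a slightly weaker but $d$-free sufficient condition is $N\ge (4/\varepsilon)\ln(1/\delta)$, which is $O(\log(1/\delta)/\varepsilon)$. The algorithm is ancilla-free and non-adaptive, and uses only two-outcome measurements, as claimed.

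The main point requiring care is the lower bound on $p$: it rests on the closed form of the diamond distance in Lemma~\ref{lem:diadistBetDepolChannels}, and the constant must be uniform in $d$. Once that lemma is available, everything else is an elementary one-sided Bernoulli calculation.
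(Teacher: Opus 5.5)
Your proof is correct and follows essentially the same route as the paper. It has zero type-I error since $p=0$ under $H_0$. It uses the uniform bound $p=\lambda(1-1/d)\ge \varepsilon/(2(1+1/d))\ge\varepsilon/3$ for $d\ge 2$ from Lemma~\ref{lem:diadistBetDepolChannels}, and the type-II bound $(1-p)^N\le e^{-\varepsilon N/3}\le\delta$ for $N\ge (3/\varepsilon)\ln(1/\delta)$. Your extra $4/\varepsilon$ condition is unnecessary, since the $3/\varepsilon$ bound is already $d$-independent for $d\ge 2$, and $H_1$ is vacuous for $d=1$.
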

\begin{proof}
    From Algorithm~\ref{alg:identity_testing_m=1}, the probability of error under $H_0$ is $0$, since, $\forall k \in [L]$, $\Pr(X_k=1|{H}_0)=0$. Under $H_1$, the error probability is given by (using the independence of each $X_k$):
\begin{align}
    \operatorname{Pr}_{H_1}^{\text{err}} &= \operatorname{Pr}\left( \forall k \in [L] : X_k = 0 \hspace{+1pt}|\hspace{+1pt}{H}_0 \right)
    \\& = {\textstyle \prod_{k=1}^{L} \operatorname{Pr}(X_k = 0 \hspace{+1pt}|\hspace{+1pt}{H}_1)} = \textstyle {\prod_{k=1}^{L} \langle 0 | \mathcal{N}_{\lambda}(|0\rangle\!\langle 0|) | 0\rangle}  \\&= \{1-\lambda(1-\tfrac{1}{d})\}^L \leq \left(1 - \dfrac{\varepsilon}{3}\right)^L \leq {\textstyle \exp \left( -\frac{{\varepsilon} L}{3} \right)} \leq \delta.
    \label{Pr_H1_m=1}
\end{align} 
Where \(\lambda(1-1/d)\ge \varepsilon/3\) for $d\ge2$. Therefore, when sequential black box applications are disallowed, the number of queries sufficient to solve Problem~\ref{prob:P2.1} is $L \geq \frac{3  \log(\frac{1}{\delta})}{\varepsilon}$.

\end{proof}
\subsection{Standard Tester: When sequential black box applications are allowed}
\label{app:b2}

\begin{apptheorem}
    There exists an ancilla-free, non-adaptive tester solving Problem~\emph{\ref{prob:P2.1}} with $m$ sequential black-box applications and $L$ input states (Fig.~\ref{fig:combined}\ref{fig:combined:c}) satisfying:
\begin{align*}
L = \Omega\left(\ln(1/\delta)\right)
\quad\text{and}\quad
mL > \tfrac{2}{\varepsilon}\,\ln\tfrac{1}{\delta}.
\end{align*}
\end{apptheorem}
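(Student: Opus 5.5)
We analyse Algorithm~\ref{alg:identity_testing_m>=2} in the same way as the proof in Appendix~\ref{app:b1}, replacing the single use of $\mathcal{N}_\lambda$ by the $m$-fold composition. The starting point is the composition rule for depolarizing channels. Since $\mathcal{N}_\lambda(\rho)=(1-\lambda)\rho+\lambda\,\mathrm{Tr}(\rho)\mathbb{I}/d$ and $\mathcal{N}_\lambda$ fixes $\mathbb{I}/d$, induction on $m$ gives $\mathcal{N}_\lambda^{\circ m}=\mathcal{N}_{\lambda_m}$ with $1-\lambda_m=(1-\lambda)^m$.

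\textbf{Error probabilities.} Under $H_0$ we have $\lambda=0$, so $\mathcal{N}_\lambda^{\circ m}(|0\rangle\!\langle 0|)=|0\rangle\!\langle 0|$. Hence every $X_k=0$ and the type-I error is zero. Under $H_1$, each trial is an independent Bernoulli variable with
\[
\Pr(X_k=0\mid H_1)=1-\lambda_m\Bigl(1-\tfrac1d\Bigr)=\tfrac1d+\Bigl(1-\tfrac1d\Bigr)(1-\lambda)^m .
\]
The type-II error is therefore
\[
\Pr^{\mathrm{err}}_{H_1}=\Bigl[\tfrac1d+\Bigl(1-\tfrac1d\Bigr)(1-\lambda)^m\Bigr]^L .
\]

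\textbf{Why $L=\Omega(\ln(1/\delta))$ is needed.} As $m\to\infty$, the bracket decreases monotonically to $1/d$. So for finite $d$, no choice of $m$ can push the type-II error below $d^{-L}$. Requiring $d^{-L}\le\delta$ forces $L\ge \ln(1/\delta)/\ln d$, which is $\Omega(\ln(1/\delta))$ for fixed $d$. This step is simple, but it should be stated carefully because the implied constant depends on $d$.

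\textbf{Sufficient condition on $mL$.} Under $H_1$, Lemma~\ref{lem:diadistBetDepolChannels} gives $2\lambda(1-1/d^2)\ge\varepsilon$, and hence $\lambda\ge\varepsilon/2$. Using $(1-\lambda)^m\le e^{-\lambda m}$, write the bracket as $1-(1-1/d)\bigl(1-(1-\lambda)^m\bigr)$. The aim is to bound
\[
\Pr^{\mathrm{err}}_{H_1}\le\exp\Bigl(-L\,\bigl(1-\tfrac1d\bigr)\bigl(1-e^{-\lambda m}\bigr)\Bigr).
\]
For $\lambda m\le1$, the inequality $1-e^{-x}\ge x(1-e^{-1})$ should give a bound of the form $\exp(-c\,\varepsilon mL)$. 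Combined with the requirement $L=\Omega(\ln(1/\delta))$, this means that $mL>(2/\varepsilon)\ln(1/\delta)$ (with the constants absorbed appropriately) makes the error at most $\delta$.

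\textbf{Main obstacle.} The hard part is the constants, specifically recovering exactly $2/\varepsilon$ rather than a $d$- or $e$-dependent constant. A direct Bernoulli-product bound produces the factor $(1-1/d)(1-e^{-1})$. The approach we will try first is to compare the bracket to $(1-\lambda)^m$. For $d=2$ this is not immediate; we would instead use $1-(1-1/d)(1-(1-\lambda)^m)\le (1-\lambda/2)^m$, which can be checked by convexity in $m$ (it holds with equality at $m=0$ and at $m=1$ when $d=2$). This yields $\Pr^{\mathrm{err}}_{H_1}\le (1-\lambda/2)^{mL}\le e^{-\lambda mL/2}$. Since $\lambda\ge\varepsilon/2$ gives $e^{-\lambda mL/2}\le e^{-\varepsilon mL/4}$, we get error at most $\delta$ whenever $mL\ge(4/\varepsilon)\ln(1/\delta)$. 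If the convexity claim fails for larger $m$, we fall back to the regime split $\lambda m\le 1$ versus $\lambda m>1$. In the second regime, the bracket is at most $1/d+(1-1/d)e^{-1}<1$ per trial, so $L=\Omega(\ln(1/\delta))$ suffices. Either way, the result is the stated $mL=\Theta(\ln(1/\delta)/\varepsilon)$ tradeoff.
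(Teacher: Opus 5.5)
Your setup, the type-II error $\bigl[\tfrac1d+(1-\tfrac1d)(1-\lambda)^m\bigr]^L$, and the bound $\exp\bigl(-L(1-\tfrac1d)(1-e^{-m\lambda_*})\bigr)$ with $\lambda_*=\varepsilon/(2(1-1/d^2))$ coincide with the paper's. Your fallback regime split is sound and proves the statement, so the issues below are a failing first route and a misreading of what $2/\varepsilon$ means.

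\textbf{The convexity route fails.} The inequality $\tfrac1d+(1-\tfrac1d)(1-\lambda)^m\le(1-\lambda/2)^m$ is false for large $m$. Its left side tends to $1/d$ while the right side tends to $0$; for $d=2$, $\lambda=1/2$, $m=10$ it would read $0.5005\le 0.056$. Agreement at $m=0,1$ cannot rescue it: if the difference were convex, it would force the \emph{reverse} inequality for $m\ge1$. The resulting claim that $mL\ge(4/\varepsilon)\ln(1/\delta)$ suffices with no condition on $L$ also contradicts your own floor $d^{-L}$.

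\textbf{The ``main obstacle'' rests on a misreading.} The condition $mL>(2/\varepsilon)\ln(1/\delta)$ is \emph{not} sufficient, so no argument will produce it as a sufficient condition. Take $d=2$, $m=1$ and $\lambda=\lambda_*=2\varepsilon/3$. The type-II error is then $(1-\varepsilon/3)^L$, roughly $\delta^{2/3}>\delta$ when $L$ is just above $(2/\varepsilon)\ln(1/\delta)$, even though both displayed constraints hold. This is consistent with the $3/\varepsilon$ of the $m=1$ analysis in Appendix~\ref{app:b1}.

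In the paper, the two inequalities run in the opposite logical direction. They are properties that every pair $(m,L)$ satisfying the sufficient condition $L(1-\tfrac1d)(1-e^{-m\lambda_*})>\ln\tfrac1\delta$ must have. Since $1-e^{-m\lambda_*}<1$, this forces $L>\ln(1/\delta)/(1-\tfrac1d)$. Setting $x:=\ln(1/\delta)/\bigl(L(1-\tfrac1d)\bigr)\in(0,1)$ gives $m\lambda_*>-\ln(1-x)>x$, i.e.\ $mL>\ln(1/\delta)/\bigl(\lambda_*(1-\tfrac1d)\bigr)=\tfrac{2(d+1)}{d\varepsilon}\ln\tfrac1\delta\ge\tfrac{2}{\varepsilon}\ln\tfrac1\delta$. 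That is where the exact constant $2$ comes from. There is no $e$-dependent loss because $-\ln(1-x)>x$ is used as a lower bound.

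\textbf{Your fallback supplies what the paper leaves implicit.} It gives explicit sufficient parameters. From $1-e^{-y}\ge(1-e^{-1})\min\{y,1\}$ and $1-\tfrac1d\ge\tfrac12$, the error is at most $\delta$ once $L\ge 2\ln(1/\delta)/(1-e^{-1})$ and $mL\ge 4\ln(1/\delta)/\bigl((1-e^{-1})\varepsilon\bigr)$. Any such pair also satisfies the displayed inequalities, which proves the existential statement. Likewise, your $L\ge\ln(1/\delta)/\ln d$ is a genuine necessity for this tester. The paper's $L>\ln(1/\delta)/(1-\tfrac1d)$, by contrast, is necessary only for its sufficient condition.

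Drop the convexity route and the attempt to reach $2/\varepsilon$ as a sufficient constant, and present the regime-split bound instead.
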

\begin{proof}
    From Algorithm~\ref{alg:identity_testing_m>=2}, the probability of error under \(H_0\) is \(0\), since:
\begin{align}
\Pr(X_k = 1 \mid H_0) = 0 \quad \forall\,k \in [L].
\end{align}
Under \(H_1\), the error probability is (using independence of the \(X_k\)) given by:
\begin{align}
    \operatorname{Pr}_{H_1}^{\text{err}}
    &= \Pr\left( \forall k \in [L] : X_k = 0 \,\middle|\, H_1 \right) \\
    &= \prod_{k=1}^{L} \Pr(X_k = 0 \mid H_1) 
    = \prod_{k=1}^{L} \bigl\langle 0 \big| \mathcal{N}_{\lambda}^{\circ m}\bigl(\lvert 0\rangle\!\langle 0\rvert\bigr) \big| 0 \bigr\rangle \\
    &= \Bigl\{(1-\lambda)^m\Bigl(1-\tfrac{1}{d}\Bigr)+\tfrac{1}{d}\Bigr\}^L.
\end{align}
Then define the notation:
\begin{align}
A := (1-\lambda)^m\Bigl(1-\frac{1}{d}\Bigr) + \frac{1}{d},
\qquad d \ge 2,
\end{align}
Under $H_1$ we know from Lemma~\ref{lem:diadistBetDepolChannels} that
:
\begin{align}
\lambda \ge \lambda_* := \frac{\varepsilon}{2\left(1-\frac{1}{d^2}\right)},
\qquad \varepsilon \in (0,1).
\end{align}
Since for all real $x$ and finite $m \ge 1$ we have\; \[(1-\lambda)^m < e^{-m\lambda} \le e^{-m\lambda_*}.\]
Where the strict inequality follows from the fact that $\lambda >0$ and $m$ is finite. Therefore,
\begin{align}
A < \Bigl(1-\frac{1}{d}\Bigr)e^{-m\lambda_*} + \frac{1}{d}
= 1 - \Bigl(1-\frac{1}{d}\Bigr)\left(1 - e^{-m\lambda_*}\right).
\end{align}
Define:
\begin{align}
\label{eqn:alphaDef}
\alpha(m) := \Bigl(1-\frac{1}{d}\Bigr)\left(1 - e^{-m\lambda_*}\right),
\end{align}
so \(A < 1-\alpha(m)\) and hence (for any positive $L$):
\begin{align}
A^L < (1-\alpha(m))^L < e^{-L\alpha(m)}.
\end{align}
Thus, for any  \(\delta\in(0,1/2)\), \(A^L < \delta\) is guaranteed whenever:
\begin{equation}
L\alpha(m) > \ln\frac{1}{\delta}.
\label{eq:Lalpha}
\end{equation}
Substituting the definition of \(\alpha(m)\), in Eqn.~\eqref{eqn:alphaDef} into Eqn.~\eqref{eq:Lalpha} gives:
\begin{equation}
L\Bigl(1-\frac{1}{d}\Bigr)\left(1 - e^{-m\lambda_*}\right)
> \ln\frac{1}{\delta}.
\label{eq:Lalpha-expanded}
\end{equation}
For finite \(m\), we have \(1 - e^{-m\lambda_*} < 1\), thus, $L$ must satisfy:
\begin{align}
L > \frac{\ln(1/\delta)}{1-\frac{1}{d}}
\;\in\; \bigl[\ln(1/\delta),\,2\ln(1/\delta)\bigr].
\end{align}
Thus, a lower bound on number of states required to solve problem \ref{prob:P2.1}, \(L\), scales proportionally to \(\log(1/\delta)\). For any such feasible pair, \((m,L)\), define:
\begin{align}
x := \frac{\ln(1/\delta)}{L\left(1-\frac{1}{d}\right)}.
\end{align}
Then \cref{eq:Lalpha-expanded} implies:
\begin{align}
\label{eqn:mLambdaxRelation}
1 - e^{-m\lambda_*} > x.
\end{align}
For finite \(m\), we have \(1 - e^{-m\lambda_*} < 1\), hence \(0 < x < 1\). Rearranging the inequality in Eqn.~\eqref{eqn:mLambdaxRelation} we get:
\begin{align}
e^{-m\lambda_*} < 1 - x,
\qquad
m\lambda_* > -\ln(1-x)
= \ln\Bigl(\tfrac{1}{1-x}\Bigr).
\end{align}
Multiplying this by \(L\) and re-arranging yields:
\begin{align}
mL > \frac{L}{\lambda_*}\,\ln\Bigl(\tfrac{1}{1-x}\Bigr).
\end{align}
This implies, additionally using that $\forall x \in (0,1)$ \(-\ln(1-x)> x\), gives:
\begin{align}
mL > 
\frac{L}{\lambda_*}\,
\frac{\ln(1/\delta)}{L\left(1-\frac{1}{d}\right)}
= \frac{\ln(1/\delta)}{\lambda_*\left(1-\frac{1}{d}\right)}.
\end{align}
Finally, substituting \(\lambda_* = \dfrac{\varepsilon}{2\left(1-\frac{1}{d^2}\right)}\) into this inequality yields:
\begin{align}
    mL > \frac{2\bigl(1-\tfrac{1}{d^2}\bigr)}{\varepsilon\bigl(1-\tfrac{1}{d}\bigr)}\,
\ln\frac{1}{\delta}
= \frac{2(d+1)}{d\,\varepsilon}\,\ln\frac{1}{\delta}.
\end{align}
Combining the lower bound on $L$ with the product bound, we see that:
\begin{align}
L = \Omega\bigl(\ln\tfrac{1}{\delta}\bigr)
\quad\text{and, for such }L,\quad
m = \Theta\Bigl(\tfrac{1}{\varepsilon}\Bigr).
\end{align}
In particular, the minimum number of states $L$ required for the testing task
grows logarithmically in $1/\delta$, while the depth parameter $m$ (i.e. the number of black-box applications in a sequence) only needs
to scale on the order of $1/\varepsilon$. We have already shown in the previous
section that if $m=1$, the entire $1/\varepsilon$ scaling must be absorbed
by the number of states $L$, giving
\begin{align}
L = \Theta \Bigl(\tfrac{1}{\varepsilon}\ln\tfrac{1}{\delta}\Bigr).
\end{align}
This scaling is not tied to either extreme: it holds throughout the full tradeoff curve. Any choice of $(m,L)$ --either satisfying $m=\mathcal{O}(1/\varepsilon)$ and $L=\Theta(\ln(1/\delta))$, or $m=\mathcal{O}(1)$ and $L=\Theta((1/\varepsilon)\ln(1/\delta))$, or any intermediate allocation between these limits -- achieves the same overall scaling $mL=\Theta\bigl((1/\varepsilon)\ln(1/\delta)\bigr)$. 
\end{proof}

\subsection{Tolerant Tester}
\label{sec:tol_test_UB}
\begin{apptheorem}
    For the tolerant testing problem in \emph{\ref{prob:P2.2}}, we give an ancilla-free, non-adaptive
algorithm using individual measurements that achieves an upper bound of
$O\bigl(\varepsilon_2/(\varepsilon_2 - \varepsilon_1)^2\bigr)$ queries. 
\label{thm:tol_test}
\end{apptheorem}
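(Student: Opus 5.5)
We analyse Algorithm~\ref{alg:identity_testing_2} directly. Each round prepares $|0\rangle\!\langle 0|$, applies $\mathcal N_\lambda$, and measures $\{|0\rangle\!\langle0|,\mathbb I-|0\rangle\!\langle0|\}$. This gives i.i.d.\ Bernoulli outcomes $X_t$ with success probability $p=\lambda(1-1/d)$. By Lemma~\ref{lem:diadistBetDepolChannels}, $d_\diamond(\mathcal N_\lambda,\operatorname{id})=2\lambda(1-1/d^2)$, so $p=d_\diamond/(2c)$ with $c=1+1/d$. Hence $H_0$ corresponds to $p\le p_1:=\varepsilon_1/2c$ and $H_1$ to $p\ge p_2:=\varepsilon_2/2c$. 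Since $c\in(1,3/2]$, we have $p_1,p_2$ of order $\varepsilon_1,\varepsilon_2$ and $p_2-p_1=\Theta(\varepsilon_2-\varepsilon_1)$.

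\emph{Reduction to the boundary points.} The test rejects $H_0$ iff $S\ge \tau N$. By Lemmas~\ref{lem:binomial_monot_upper} and~\ref{lem:binomial_monot_lower}, the type-I error $\Pr(S\ge\tau N)$ is maximized over $H_0$ at $p=p_1$. Likewise, the type-II error $\Pr(S<\tau N)\le\Pr(S\le\tau N)$ is maximized over $H_1$ at $p=p_2$. For $\tau\in(p_1,p_2)$, Lemma~\ref{Chernoff-Hoeffding-Bound} bounds both errors by $\exp(-N D_{KL}(\tau\|p_1))$ and $\exp(-N D_{KL}(\tau\|p_2))$, respectively. We choose $\tau$ to equalize the two divergences. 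Solving $D_{KL}(\tau\|p_1)=D_{KL}(\tau\|p_2)$ is linear in $\tau$ and gives the stated closed form. That this $\tau$ lies strictly between $p_1$ and $p_2$ follows because $D_{KL}(\tau\|p_1)-D_{KL}(\tau\|p_2)$ is continuous in $\tau$ and changes sign on $[p_1,p_2]$. It then suffices to take $N\ge \ln(1/\delta)/D_{KL}(\tau\|p_2)$.

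\emph{Lower-bounding the divergence (main step).} We need $D_{KL}(\tau\|p_2)\ge C(p_2-p_1)^2/p_2$ for an absolute constant $C$. Rather than handle the awkward closed-form $\tau$, we use the following observation. The common value $D^*$ of the two divergences at the equalizing $\tau$ equals $\max_{\tau'\in[p_1,p_2]}\min\{D(\tau'\|p_1),D(\tau'\|p_2)\}$. This holds because $D(\cdot\|p_1)$ increases and $D(\cdot\|p_2)$ decreases on the interval. Therefore $D^*\ge\min\{D(m\|p_1),D(m\|p_2)\}$ at the midpoint $m=(p_1+p_2)/2$. Next we apply the standard bound $D(a\|b)\ge (a-b)^2/(2\max(a,b))$, valid for $a,b\le 1/2$. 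This can be proved from $D(a\|b)\ge (a-b)^2/(2\max(a,b)(1-\min(a,b)))$, which follows by integrating the second derivative $1/(x(1-x))$ of $x\mapsto D(x\|b)$ along the segment. It yields $D^*\ge (p_2-p_1)^2/(8p_2)$, up to the factor $(1-\min)$, which is at least $1/2$ since $p_2\le 1/2$. Substituting $p_i=\varepsilon_i/2c$ then gives
\begin{align}
N=O\!\left(\frac{\varepsilon_2\ln(1/\delta)}{(\varepsilon_2-\varepsilon_1)^2}\right).
\end{align}

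The main obstacle is this divergence estimate. The Hoeffding bound alone would only give $1/(\varepsilon_2-\varepsilon_1)^2$; the improvement to $\varepsilon_2/(\varepsilon_2-\varepsilon_1)^2$ comes from the variance-sensitive (Bernstein-type) lower bound on $D_{KL}$ near small $p$. I would verify it carefully via the integral form of the Taylor remainder. I would also check the edge behaviour $p_1\to 0$, where $\ln(\varepsilon_2/\varepsilon_1)$ in $\tau$ blows up but the midpoint argument is unaffected.
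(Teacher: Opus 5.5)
Your proposal is correct and follows the paper's proof essentially step for step. It uses the same Bernoulli reduction $p=\lambda(1-1/d)=d_\diamond(\mathcal N_\lambda,\operatorname{id})/2c$, the same monotonicity reduction to the boundary points $p_i=\varepsilon_i/2c$, the same Chernoff--Hoeffding tails, and the same equalizing threshold $\tau$. The one place you go beyond the paper is the divergence estimate. The paper only asserts $\min\{D_{KL}(\tau\|\varepsilon_1/2c),D_{KL}(\tau\|\varepsilon_2/2c)\}=\Omega\bigl((\varepsilon_2-\varepsilon_1)^2/(c\varepsilon_2)\bigr)$, whereas your max--min/midpoint argument with the integral-remainder bound actually proves it. One small correction: $D_{KL}(a\|b)\ge (a-b)^2/(2\max(a,b))$ holds for all $a,b\in(0,1)$, because the factor $1-\min(a,b)\le 1$ sits in the denominator and only strengthens the bound, so your caveat about $p_2\le 1/2$ is unnecessary.
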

\begin{proof}
    Let \(p\) be the probability of obtaining an outcome $1$, at each step, when the depolarizing parameter is $\lambda$, i.e., 
\begin{align}p=1-\langle 0 | \mathcal{N}_{\lambda}(|0\rangle\langle 0|) | 0\rangle = \lambda(1-1/d).\end{align}

The probability of error under $H_0$ is
\begin{align}
\text{Pr}^{\text{err}}_{H_0}
&= \Pr_p\!\left(\dfrac{S}{N} \ge \tau \,\Big|\, p \le \dfrac{\varepsilon_1}{2c}\right)=
\sup_{p \,\le\, \varepsilon_1/2c}
\Pr_p\!\left(\dfrac{S}{N} \ge \tau \right)\\&=
\Pr_{p \,=\, \varepsilon_1/2c}\!\left(\dfrac{S}{N} \ge \tau \right)
\label{Monotonicty_PrH0} \\
&\le \exp\!\big[-N\, D_{\mathrm{KL}}(\tau \,\Vert\, \varepsilon_1/2c)\big]
\;\le\; \delta,
\label{Chernoff-Heoffding_PrH0}
\end{align}
where c = {1 + 1/d}.  Eqn.~\eqref{Monotonicty_PrH0} follows from the monotonicity of the binomial upper tail in $p$ (see Lemma~\ref{lem:binomial_monot_upper}): the worst case over the null hypothesis region $\{p \le \varepsilon_1/2c\}$ is attained at the boundary $p = \varepsilon_1/2c$. Let $\tau > \tfrac{\varepsilon_1}{2c}$. Eqn.~\eqref{Chernoff-Heoffding_PrH0} is the Chernoff--Hoeffding bound for a binomial distribution (see Lemma~\ref{Chernoff-Hoeffding-Bound}). 

Similarly using Lemma~\ref{lem:binomial_monot_lower}, and the choice, $\tau < \tfrac{\varepsilon_2}{2c}$ to invoke Lemma~\ref{Chernoff-Hoeffding-Bound} we get the probability of error, under $H_1$, as:
\begin{align}
\text{Pr}^{\text{err}}_{H_1}
&=
\sup_{p \ge \varepsilon_2/2c}
\Pr_p\left(\frac{S}{N} < \tau \right) \\&\le \sup_{p \ge \varepsilon_2/2c}
\Pr_p\left(\frac{S}{N} \le \tau \right)
= 
\Pr_{p = \varepsilon_2/2c}\left(\frac{S}{N} \le \tau \right)
\\
&\le 
\exp\big[-N D_{KL}(\tau \Vert \varepsilon_2/2c)\big]
\le \delta. \label{Chernoff-Heoffding_PrH1}
\end{align}

Here $\tau \in (\varepsilon_1/2c, \varepsilon_2/2c)$. Therefore, we can now bound the query complexity of problem \ref{prob:P2.2} as:
\begin{align}
N
&=
\left\lceil
\max\left\{
\dfrac{\log(1/\delta)}
{D_{KL}\left(\tau\,\middle\|\,\frac{\varepsilon_1}{2c}\right)},
\dfrac{\log(1/\delta)}
{D_{KL}\left(\tau\,\middle\|\,\frac{\varepsilon_2}{2c}\right)}
\right\}
\right\rceil \\
&=
\left\lceil
\dfrac{\log(1/\delta)}
{\min\left\{
D_{KL}\left(\tau\,\middle\|\,\frac{\varepsilon_1}{2c}\right),
D_{KL}\left(\tau\,\middle\|\,\frac{\varepsilon_2}{2c}\right)
\right\}}
\right\rceil .
\end{align}
From \cref{Chernoff-Heoffding_PrH0} and \cref{Chernoff-Heoffding_PrH1} we can choose the value of $\tau$ that satisfies:
\begin{align}
D_{KL}\left(\tau \,\middle\|\, \tfrac{\varepsilon_1}{2c}\right) 
= D_{KL}\left(\tau \,\middle\|\, \tfrac{\varepsilon_2}{2c}\right)
\Rightarrow \tau = \frac{\ln\left(\frac{2c-\varepsilon_1}{2c-\varepsilon_2}\right)}
{\ln\left(\frac{\varepsilon_2(2c-\varepsilon_1)}{\varepsilon_1(2c-\varepsilon_2)}\right)}.
\end{align}
For the above choice of $\tau$, we have
\[
\min\left\{
D_{KL}\left(\tau\,\middle\|\,\frac{\varepsilon_1}{2c}\right),
D_{KL}\left(\tau\,\middle\|\,\frac{\varepsilon_2}{2c}\right)
\right\}
=
\Omega\left(
\frac{(\varepsilon_2-\varepsilon_1)^2}{c\varepsilon_2}
\right).\]
Therefore,
\begin{align}
N=
O\left(
\frac{\varepsilon_2\log(1/\delta)}
{(\varepsilon_2-\varepsilon_1)^2}
\right),
\end{align}
\end{proof}

\section{Lower Bounds}
\label{app:LowerBound}
\begin{apptheorem}
    For the tolerant testing problem in \emph{\ref{prob:P2.2}}, a matching lower
bound of $\Omega \bigl(\varepsilon_2/(\varepsilon_2 - \varepsilon_1)^2\bigr)$, up to
constant factors, holds even in the fully general ancilla-assisted, adaptive model with
arbitrary individual measurements.
\label{Th:LB_Tol}
\end{apptheorem}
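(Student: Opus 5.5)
We prove the lower bound by reducing to a two-point hypothesis test between the channels $\mathcal{N}_{\lambda_1}$ and $\mathcal{N}_{\lambda_2}$, with $\lambda_i=\varepsilon_i/(2v)$ and $v=1-1/d^2$. By Lemma~\ref{lem:diadistBetDepolChannels}, $d_\diamond(\mathcal{N}_{\lambda_i},\operatorname{id})=\varepsilon_i$, so any tester solving P2.2 with error at most $\delta$ must distinguish these two channels with error at most $\delta$ under both hypotheses.

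Fix an arbitrary full-access protocol: an ancilla of any dimension, adaptive inputs $\rho^{(t)}$, and POVMs $\mathbb{M}^{(t)}$ that depend on the outcome history. Let $P_j$ be the joint law of the transcript $(x_1,\dots,x_N)$ under $\lambda_j$. By the chain rule for KL divergence,
\[
D_{KL}(P_1\|P_2)=\sum_{t=1}^N \mathbb{E}_{P_1}\bigl[D_{KL}(p^{(t)}_{\lambda_1}\|p^{(t)}_{\lambda_2})\bigr],
\]
where $p^{(t)}_{\lambda}$ is the single-round outcome distribution conditioned on the past. The final decision is a function of the transcript. The data-processing inequality, applied to the binary decision variable, then gives $D_{KL}(P_1\|P_2)\ge d_{KL}(1-\delta\|\delta)\ge (1-2\delta)\ln\frac{1-\delta}{\delta}=\Omega(\ln(1/\delta))$.

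The key step is a uniform per-round bound, holding for every input $\rho$ on $\mathcal H'\otimes\mathcal H$ and every POVM:
\[
D_{KL}(p_{\lambda_1}\|p_{\lambda_2})\le \frac{C(\lambda_2-\lambda_1)^2}{\lambda_2}.
\]
To get it, write $\sigma_\lambda=(\mathrm{id}\otimes\mathcal N_\lambda)(\rho)=(1-\lambda)\rho+\lambda\,\rho_{R}\otimes \mathbb I/d$. Then for each outcome $x$ we have $p_\lambda(x)=(1-\lambda)a_x+\lambda b_x$, with $a_x=\operatorname{Tr}[M_x\rho]$ and $b_x=\operatorname{Tr}[M_x(\rho_R\otimes\mathbb I/d)]$, and $\sum_x a_x=\sum_x b_x=1$.

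We bound KL by $\chi^2$:
\[
D_{KL}(p_{\lambda_1}\|p_{\lambda_2})\le\sum_x\frac{(\lambda_2-\lambda_1)^2(b_x-a_x)^2}{(1-\lambda_2)a_x+\lambda_2 b_x}.
\]
For each $x$ we use $(b_x-a_x)^2\le (a_x+b_x)^2$ and $(1-\lambda_2)a_x+\lambda_2b_x\ge \lambda_2 (a_x+b_x)\min\{1,(1-\lambda_2)/\lambda_2\}$. Restricting w.l.o.g. to $\lambda_2\le 1/2$ (otherwise the claimed bound is $O(\ln(1/\delta))$ only if $\varepsilon_2-\varepsilon_1$ is constant, which is handled by a trivial $\Omega(\ln 1/\delta)$ argument), the denominator is at least $\lambda_2(a_x+b_x)$. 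Summing over $x$ then gives at most $(\lambda_2-\lambda_1)^2\cdot 2/\lambda_2$.

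A subtlety: the $\chi^2$ bound is taken with the reference measure $P_2$, so the chain rule should be applied to $D_{KL}(P_1\|P_2)$ with denominators $p_{\lambda_2}$, which is exactly what is done above. Substituting $\lambda_i=\varepsilon_i/(2v)$, the per-round bound becomes $(\varepsilon_2-\varepsilon_1)^2/(v\varepsilon_2)$, uniformly over the history. Combining with the chain rule, $N(\varepsilon_2-\varepsilon_1)^2/(v\varepsilon_2)\ge\Omega(\ln(1/\delta))$, hence $N=\Omega(\varepsilon_2\ln(1/\delta)/(\varepsilon_2-\varepsilon_1)^2)$.

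The main obstacle is this per-round $\chi^2$ estimate: it must hold uniformly over ancilla dimension and adaptive choices. It is exactly where the $\lambda_2$ in the denominator, rather than a constant, arises, and the convexity structure $p_\lambda=(1-\lambda)a+\lambda b$ is what makes it work. Specializing to $\varepsilon_1=0$ is not directly valid, because the KL divergence is infinite there. For Th.~\ref{Th:LB_Std} I would instead use the total variation or Bhattacharyya form: $\mathrm{BC}(p_0,p_{\lambda_2})\ge 1-\lambda_2$ per round, so $(1-\lambda_2)^N\lesssim\delta$ forces $N=\Omega(\ln(1/\delta)/\varepsilon)$.
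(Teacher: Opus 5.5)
Your proof is correct in substance and follows the paper's skeleton. Both use the boundary channels $\lambda_i=\varepsilon_i/(2v)$, data processing onto the decision bit, the chain rule over adaptive rounds, the decomposition $p_\lambda(x)=(1-\lambda)a_x+\lambda b_x$ (your $b$ is the paper's $u$), and $D_{KL}\le\chi^2$.

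The difference is in the per-round estimate. The paper applies the log-sum inequality termwise to this two-component split. That bounds the outcome divergence by $D_{KL}(\mathrm{Bern}(\lambda_1)\|\mathrm{Bern}(\lambda_2))$ for every input and POVM; in effect it is data processing through the classical map $1\mapsto b$, $0\mapsto a$. Only then does the paper use $\chi^2$ on the Bernoulli pair, getting $(\lambda_2-\lambda_1)^2/(\lambda_2(1-\lambda_2))$ with no case analysis. You apply $\chi^2$ to the outcome distribution itself and bound it term by term. This avoids the log-sum inequality but is cruder, and it is what forces you to lower-bound the denominator by $\min\{\lambda_2,1-\lambda_2\}(a_x+b_x)$. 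In exchange, your decomposition works verbatim for mixed inputs and arbitrary POVM elements, whereas the paper first restricts (unnecessarily) to pure inputs and rank-one POVM elements.

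Two slips should be fixed.

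First, your justification of ``w.l.o.g.\ $\lambda_2\le 1/2$'' is wrong. The case $\lambda_2>1/2$ does occur (e.g.\ $d=2$, $\varepsilon_2>3/4$). There $\varepsilon_1$ may be arbitrarily close to $\varepsilon_2$, so the required bound is of order $\ln(1/\delta)/(\varepsilon_2-\varepsilon_1)^2$, which a trivial $\Omega(\ln(1/\delta))$ argument does not give. The case split is not needed anyway. Since $v\ge 3/4$ and $\varepsilon_2<1$, you have $\lambda_2<2/3$, so $1-\lambda_2>1/3>\lambda_2/2$ and $\min\{\lambda_2,1-\lambda_2\}\ge\lambda_2/2$. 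Your per-round bound then holds for all admissible parameters with constant $4$ instead of $2$.

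Second, your closing claim that $\varepsilon_1=0$ makes the KL divergence infinite is false in the direction you use. Since $p_{\lambda_2}(x)\ge(1-\lambda_2)a_x=(1-\lambda_2)p_0(x)$, you get $D_{KL}(p_0\|p_{\lambda_2})\le-\ln(1-\lambda_2)$, and your own estimate gives $O(\lambda_2)$ per round at $\lambda_1=0$. Only the reverse divergence $D_{KL}(p_{\lambda_2}\|p_0)$ can be infinite. So the paper's specialization to Theorem~\ref{Th:LB_Std} is legitimate, and your Bhattacharyya detour is unnecessary (though also fine).
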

\begin{proof}
The hardest channels to test are the ones that sit right on the boundaries of the two hypothesis. We denote the hard case channel constructions as $\mathcal{N}_{\lambda_1} \text{ and } \mathcal{N}_{\lambda_2}$, satisfying the relations:
\begin{align}
    &\operatorname{d}_{\diamond}(\mathcal{N}_{\lambda_1}, \operatorname{id}) = \varepsilon_1 \implies \lambda_1 = 1/2\varepsilon_1 (1-1/d^2 )^{-1} \label{lambda_1}\\& \operatorname{d}_{\diamond}(\mathcal{N}_{\lambda_2}, \operatorname{id}) = \varepsilon_2 \implies \lambda_2 = 1/2\varepsilon_2\left(1-{1}/{d^2}\right)^{-1}\label{lambda_2}
\end{align}
Where $\varepsilon_2 > \varepsilon_1$. A $\delta$-correct algorithm must distinguish between $\mathcal{N}_{\lambda_1}$ and $\mathcal{N}_{\lambda_2}$ with probability of success at least $1-\delta$. 

We work in the composite Hilbert space 
$\mathcal{H}' \otimes \mathcal{H} = \mathbb{C}^{d'} \otimes \mathbb{C}^{d}$,
where $d', d \in \mathbb{N}$ denote the ancilla and system dimensions, respectively. Here, \( d' \) may assume any arbitrarily large finite value.
At each step $t \in [N]$, we send in a state
\(\ket{\psi_t} \in \mathcal{H}' \otimes \mathcal{H}\)
which can be assumed to be pure by the convexity of KL divergence. Then we apply the channel $\operatorname{id}_{d'}\otimes \mathcal{N}_{\lambda}$,
and measure the output using the POVM:
\begin{align}
\mathbb{M}^{(t)}
= \left\{
M^{(t)}_{i_t}\right\}_{i_t}^{\kappa_t}
= \left\{c^{(t)}_{i_t}\,|\phi_{i_t}\rangle\!\langle \phi_{i_t}|
\right\}_{i_t = 1}^{\kappa_t},
\end{align}
Here, the superscript \((t)\) labels the measurement performed at step \(t\); it is not an exponent. The parameter
\(\kappa_t \in \mathbb{N}\) denotes the number of possible measurement outcomes at step \(t\), and \(i_t \in [\kappa_t]\) denotes the observed outcome. For each \(i_t\),
\[
\ket{\phi_{i_t}} \in \mathcal{H}' \otimes \mathcal{H},
\qquad
c^{(t)}_{i_t} \ge 0,
\]
and the POVM elements satisfy the completeness condition
\[
\sum_{i_t=1}^{\kappa_t} M^{(t)}_{i_t}
= \mathbb{I}_{\mathcal{H}' \otimes \mathcal{H}}.
\]

Denote the sequence of outcomes of the measurements by:
$\mathbf{X} = (X_1 = i_1, X_2 = i_2, \ldots, X_N = i_N)$,
where each $X_t$ is a $\{1,2,\ldots,\kappa_t\}$-valued random variable representing on of the measurements and $i_t$ is the corresponding measurement outcome.

\emph{Single-shot law: }For a fixed $\lambda \in [0,1]$, the probability of obtaining outcome $i_t$ (possibly conditioned on previous outcomes $X_{<t} = i_{<t}$) is:
\begin{align}
p^{(t)}_{\lambda}(i_t \mid X_{<t})
&= \operatorname{Tr}\Big[
M^{(t)}_{i_t}\,
(\operatorname{id}\otimes\mathcal{N}_{\lambda})(\ket{\psi_t}\!\bra{\psi_t})
\Big] \\
&= \lambda \,u^{(t)}_{i_t}(X_{<t})
    + (1-\lambda)\,a^{(t)}_{i_t}(X_{<t}),
\label{eq:single-shot}
\end{align}
where:
\begin{align}
u^{(t)}_{i_t}(X_{<t})
&:= c^{(t)}_{i_t}\,
\bra{\phi_{i_t}}(\operatorname{id}\otimes\tfrac{\mathbb{I}_d \operatorname{Tr(\cdot)}}{d})(\ket{\psi_t}\bra{\psi_t})\ket{\phi_{i_t}},\\
a^{(t)}_{i_t}(X_{<t})
&:= c^{(t)}_{i_t}\,|\langle \phi_{i_t}|\psi_t\rangle|^2,
\end{align}
and $\sum_{i_t} u^{(t)}_{i_t} = \sum_{i_t} a^{(t)}_{i_t}=1$.\\~\\
\emph{Lower Bound on KL Divergence: }
Define the joint distributions corresponding to the depolarizing parameters $\lambda_1$ and $\lambda_2$ as defined in \cref{lambda_1} and \cref{lambda_2}:
\begin{align}
P(\mathbf X)
:= \prod_{t=1}^N p^{(t)}_{\lambda_1}(i_t \mid X_{<t})~~\text{and}~~
Q(\mathbf X)
:= \prod_{t=1}^N p^{(t)}_{\lambda_2}(i_t \mid X_{<t}).
\end{align}
From the above definitions:
\begin{align}
D_{KL}(P\|Q)
&:= \sum_{\mathbf{X}}
P(\mathbf{X})\,
\log\frac{P(\mathbf{X})}{Q(\mathbf{X})}.
\label{eq:def-kl}
\end{align}
The Kullback–Leibler (KL) divergence is an $f$-divergence with generator $f(t)=t\log t$. By Remark 4.4 in \cite{wu2020itstats}, we obtain the following lower bound on the KL divergence using the data-processing inequality for $f$-divergences:
\begin{align}
    D_{KL}(P||Q) &\geq D_{KL}(\operatorname{P(E)}||\operatorname{Q(E)}) \\& \ge D_{KL}(\operatorname{Bern}(1- \delta) || \operatorname{Bern}(\delta)).
    \label{Dkl,lower}
\end{align}
We henceforth denote $D_{KL}(\operatorname{Bern}(1- \delta) || \operatorname{Bern}(\delta))$ by $g(\delta)$. The next steps are now to upper bound the KL Divergence as a function of $(N,\varepsilon)$.

\emph{Kl Divergence under adaptivity:}
The logarithmic term in \cref{eq:def-kl} can be written as a sum by pulling the product outside the log:
\begin{align}
\log\frac{P(\mathbf{X})}{Q(\mathbf{X})}
&= \log\left(\prod_{t=1}^N
\frac{p^{(t)}_{\lambda_1}(i_t\mid i_{<t})}
{p^{(t)}_{\lambda_2}(i_t\mid i_{<t})}\right)
\\&= \sum_{t=1}^N
\log\frac{p^{(t)}_{\lambda_1}(i_t\mid i_{<t})}
{p^{(t)}_{\lambda_2}(i_t\mid i_{<t})}.
\label{eq:log-expand}
\end{align}
Substituting \cref{eq:log-expand} into \cref{eq:def-kl} gives:
\begin{align}
D_{KL}(P\|Q)
&= \sum_{\mathbf{X}} P(\mathbf{X})
\sum_{t=1}^N
\log\frac{p^{(t)}_{\lambda_1}(i_t\mid i_{<t})}
{p^{(t)}_{\lambda_2}(i_t\mid i_{<t})}
\\&= \sum_{t=1}^N
\sum_{\mathbf{X}} P(\mathbf{X})
\log\frac{p^{(t)}_{\lambda_1}(i_t\mid i_{<t})}
{p^{(t)}_{\lambda_2}(i_t\mid i_{<t})}.
\label{eq:swap-sum}
\end{align}

Now we expand the sum over all possible outcome strings
$\mathbf{X}=(i_1,\ldots,i_N)$.
For any function $f(i_{<t},i_t)$ depending only on the history $i_{<t}$
and the current outcome $i_t$, we have:
\begin{align}
&\sum_{\mathbf{X}} P(\mathbf{X}) f(i_{<t},i_t) \\ &= 
\sum_{i_1}\sum_{i_2}\cdots\sum_{i_N}
\Bigg(
\prod_{s=1}^N p^{(s)}_{\lambda_1}(i_s\mid i_{<s})
\Bigg)
f(i_{<t},i_t).
\end{align}
We separate this sum into three parts:
past $(i_{<t})$, current $(i_t)$, and future $(i_{>t})$ outcomes:
\begin{equation}
\begin{aligned}
\sum_{\mathbf{X}} P(\mathbf{X})\, f(i_{<t},i_t)
&= \sum_{i_{<t}}\sum_{i_t}\sum_{i_{>t}}
\Bigg(\prod_{s\le t} p^{(s)}_{\lambda_1}(i_s\mid i_{<s}) \Bigg) \\
&\quad\cdot
\Bigg(\prod_{s> t} p^{(s)}_{\lambda_1}(i_s\mid i_{<s}) \Bigg)\,
f(i_{<t},i_t).
\end{aligned}
\end{equation}
The inner sum over future outcomes $i_{>t}$ equals $1$, since conditional probabilities normalize:
\begin{align}
\sum_{i_{>t}} \prod_{s>t} p^{(s)}_{\lambda_1}(i_s\mid i_{<s}) = 1.
\end{align}
Hence,
\begin{align}
&\sum_{\mathbf{X}} P(\mathbf{X}) f(i_{<t},i_t)
\\&= \sum_{i_{<t}}
P_{<t}(i_{<t})
\sum_{i_t} p^{(t)}_{\lambda_1}(i_t\mid i_{<t}) f(i_{<t},i_t),
\label{eq:marginalization}
\end{align}
where $P_{<t}(i_{<t}) = \prod_{s<t} p^{(s)}_{\lambda_1}(i_s\mid i_{<s})$ is the marginal
distribution over the first $t-1$ outcomes. 

Applying \cref{eq:marginalization} to the function:
\begin{align}
f(i_{<t},i_t)
= \log\frac{p^{(t)}_{\lambda_1}(i_t\mid i_{<t})}
{p^{(t)}_{\lambda_2}(i_t\mid i_{<t})},
\end{align}
and substituting back into \cref{eq:swap-sum}, we obtain:
\begin{align}
&D_{KL}(P\|Q)
\\&= \sum_{t=1}^N
\sum_{i_{<t}} P_{<t}(i_{<t})
\sum_{i_t} p^{(t)}_{\lambda_1}(i_t\mid i_{<t})
\log\frac{p^{(t)}_{\lambda_1}(i_t\mid i_{<t})}
{p^{(t)}_{\lambda_2}(i_t\mid i_{<t})}.
\label{eq:marginalized}
\end{align}
Recognizing the inner sum in Eqn.~\eqref{eq:marginalized} as the conditional KL divergence completes the derivation of the chain rule:
\begin{equation}
\begin{aligned}
&D_{KL}(P\|Q)
\\&= \sum_{t=1}^N
\mathbb{E}_{X_{<t}\sim P}\left[
D_{KL}\Big(
p^{(t)}_{\lambda_1}(\cdot\mid X_{<t})
\,\Big\|\,
p^{(t)}_{\lambda_2}(\cdot\mid X_{<t})
\Big)
\right].
\label{eq:chain-rule}
\end{aligned}
\end{equation}

\emph{Upper Bounding each conditional KL term.}

Next is to bound each term $D_{KL}\Big(
p^{(t)}_{\lambda_1}(\cdot\mid X_{<t})
\,\Big\|\,
p^{(t)}_{\lambda_2}(\cdot\mid X_{<t})
\Big)$. Fix a step $t\in [N]$.  
From the single--shot law \cref{eq:single-shot}, for each $\lambda_k\in[0,1]$ redefine:
\begin{align}
p_k(i_t)
:&= p^{(t)}_{\lambda_k}(i_t \mid X_{<t})
\\&= \lambda_k\,u^{(t)}_{i_t}(X_{<t})
  + (1-\lambda_k)\,a^{(t)}_{i_t}(X_{<t}),
\end{align}
where $k\in\{1,2\}$ and:
\begin{align}
u^{(t)}_{i_t}(X_{<t})
& = c^{t}_{i_t}\,
\langle \phi_{i_t}|\,
(\operatorname{id}\otimes \tfrac{\mathbb{I}_d \operatorname{Tr(\cdot)}}{d})(|\psi_t\rangle\!\langle \psi_t|)
\,|\phi_{i_t}\rangle\\
\qquad
a^{(t)}_{i_t}(X_{<t})
&= c^{t}_{i_t}\,|\langle \phi_{i_t}\mid \psi_t\rangle|^2\\
\text{and } \sum_{i_t} u^{(t)}_{i_t}& =\sum_{i_t} a^{(t)}_{i_t}=1.
\end{align}
The conditional KL divergence from Eqn. \eqref{eq:chain-rule} at step $t$ is:
\begin{align}
D_{KL}\Big(
p^{(t)}_{\lambda_1}(\cdot\mid X_{<t})
\,\Big\|\,
p^{(t)}_{\lambda_2}(\cdot\mid X_{<t})
\Big)
= \sum_{i_t} p_1(i_t)\log\frac{p_1(i_t)}{p_2(i_t)},
\label{p_k}
\end{align}
where each distribution, $p_k$, is a convex combination of $u^{(t)}$ and $a^{(t)}$ i.e., $\forall k\in\{1,2\}, t \in [N]$, $p_k(i_t)$ may be expressed as:
\begin{align}
p_k(i_t) = \lambda_k\,u^{(t)}_{i_t} + (1-\lambda_k)\,a^{(t)}_{i_t}.
\end{align}

To enable us to be concise and more readable, we first pause to define the below notation:
\begin{align}
\alpha_1(i_t) := \lambda_1\,u^{(t)}_{i_t}, 
\qquad 
\beta_1(i_t) := (1-\lambda_1)\,a^{(t)}_{i_t},
\end{align}
\begin{align}
\alpha_2(i_t) := \lambda_2\,u^{(t)}_{i_t}, 
\qquad 
\beta_2(i_t) := (1-\lambda_2)\,a^{(t)}_{i_t}.
\end{align}
Therefore, $p_1(i_t)$ and $p_2(i_t)$ can be expressed more conveniently as:
\begin{align}
p_1(i_t)=\alpha_1(i_t)+\beta_1(i_t),
\qquad
p_2(i_t)=\alpha_2(i_t)+\beta_2(i_t).
\end{align}
Hence the KL divergence in \cref{p_k} can be expressed as:
\begin{align}
\label{eqn:KLDivergenceAltExpress}
\sum_{i_t}
\Big(\alpha_1(i_t) + \beta_1(i_t)\Big)
\log\frac{\alpha_1(i_t)+\beta_1(i_t)}{\alpha_2(i_t)+\beta_2(i_t)}.
\end{align}

Applying the log--sum inequality to \cref{p_k}, and using \cref{eqn:KLDivergenceAltExpress} we obtain:
\begin{align}
\sum_{i_t} p_1(i_t)\log\frac{p_1(i_t)}{p_2(i_t)}
&\le 
\sum_{i_t} \alpha_1(i_t)\log\frac{\alpha_1(i_t)}{\alpha_2(i_t)} \nonumber\\
&+
\sum_{i_t} \beta_1(i_t)\log\frac{\beta_1(i_t)}{\beta_2(i_t)}.
\label{eq:logsum-expanded}
\end{align}
To progress, we evaluate each term in \cref{eq:logsum-expanded} individually:
\begin{align}
    \sum_{i_t} \alpha_1(i_t)\log\frac{\alpha_1(i_t)}{\alpha_2(i_t)}
&=
\sum_{i_t} \lambda_1u^{(t)}_{i_t}
\log \frac{\lambda_1 u^{(t)}_{i_t}}{\lambda_2u^{(t)}_{i_t}}
\\&= \lambda_1\log\frac{\lambda_1}{\lambda_2},
\end{align}
since $\sum_{i_t}u^{(t)}_{i_t}=1$.  
Similarly,
\begin{align}
\sum_{i_t} \beta_1(i_t)\log\frac{\beta_1(i_t)}{\beta_2(i_t)}
&=
\sum_{i_t} (1-\lambda_1) a^{(t)}_{i_t}
\log\frac{(1-\lambda_1) a^{(t)}_{i_t}}{(1-\lambda_2) a^{(t)}_{i_t}}
\\&= (1-\lambda_1)\log\frac{(1-\lambda_1)}{(1-\lambda_2)}.
\end{align}
Substituting this back into \cref{eq:logsum-expanded} yields the below upper bound of the conditional KL divergence at step $t$:
\begin{align}
\sum_{i_t} p_1(i_t)\log\frac{p_1(i_t)}{p_2(i_t)}
&\le (1-\lambda_1)\log\frac{1-\lambda_1}{1-\lambda_2}
   + \lambda_1\log\frac{\lambda_1}{\lambda_2} \\
&= D_{KL}\big(\mathrm{Bern}(\lambda_1)\,\|\,\mathrm{Bern}(\lambda_2)\big).
\label{eq:bernoulli-bound}
\end{align}

Thus, every conditional KL term is upper bounded by the KL divergence
between the Bernoulli distributions with parameters $\lambda_1$ and $\lambda_2$.

\emph{Combining the bounds.}
Substituting \cref{eq:bernoulli-bound} into \cref{eq:chain-rule} gives:
\begin{align}
\label{eqn:KLBoundSumExpect}
D_{KL}(P\|Q)
&\le
\sum_{t=1}^N
\mathbb{E}_{X_{<t}\sim P}
\Big[
D_{KL}\big(\mathrm{Bern}(\lambda_1)\,\|\,\mathrm{Bern}(\lambda_2)\big)
\Big].
\end{align}
Since the KL divergence between two Bernoulli distributions is independent of $X_{<t}$:
\begin{align}
D_{KL}(P\|Q)
&\le
\sum_{t=1}^N
D_{KL}\big(\mathrm{Bern}(\lambda_1)\,\|\,\mathrm{Bern}(\lambda_2)\big)
\\&=
N\,
D_{KL}\big(\mathrm{Bern}(\lambda_1)\,\|\,\mathrm{Bern}(\lambda_2)\big).
\label{eq:kl-upper}
\end{align}

\emph{Relating to the tolerance parameter.}
We now use the fact that the Kullback-Leibler divergence is bounded above by the $\chi^2$-divergence, $D_{KL}(P\|Q) \le \chi^2(P \| Q)$, which for Bernoulli distributions yields:
\begin{equation}
D_{KL}\big(\text{Bern}(\lambda_1)\,\|\,\text{Bern}(\lambda_2)\big) \le \frac{(\lambda_2-\lambda_1)^2}{\lambda_2(1-\lambda_2)}.
\end{equation}
Substituting the respective $\lambda$'s relationship with $\varepsilon$'s into= \cref{lambda_1} and \cref{lambda_2} we get, for $d\geq2$:
\begin{align}
g(\delta)
\le
D_{KL}(P\|Q)
<
N\,\frac{(\varepsilon_2-\varepsilon_1)^2}{\varepsilon_2(1.5-\varepsilon_2)}
\label{eq:Dkl-upper}
\end{align}
which implies the query complexity lower bound:
\begin{align}
N
=
\Omega\left(
\dfrac{\varepsilon_2}{(\varepsilon_2 - \varepsilon_1)^2}\,
g(\delta)
\right).
\label{eq:n-lower}
\end{align}
Here the strict inequality is due to the fact that $\varepsilon_1$ is strictly less than $\varepsilon_2$.
\end{proof}

\begin{apptheorem}
\label{Th:LB_Std}
    For the problem in \emph{\ref{prob:P2.1}}, a lower bound of $\Omega(1/\varepsilon)$, up to constant factors, holds even in the fully general ancilla-assisted, adaptive model with arbitrary individual measurements. 
\end{apptheorem}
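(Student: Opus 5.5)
The plan is to specialize the argument of Theorem~\ref{Th:LB_Tol} to $\varepsilon_1=0$, $\varepsilon_2=\varepsilon$. I will redo the final step directly rather than quote the $\chi^2$ bound, because the boundary value $\lambda_1=0$ needs separate care. The two hard instances are $\mathcal{N}_{\lambda_1}=\operatorname{id}$ (so $\lambda_1=0$) and $\mathcal{N}_{\lambda_2}$ with $\lambda_2=\varepsilon/\bigl(2(1-1/d^2)\bigr)$. By Lemma~\ref{lem:diadistBetDepolChannels}, $d_\diamond(\mathcal{N}_{\lambda_2},\operatorname{id})=\varepsilon$, so this channel lies in $H_1$. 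Any $\delta$-correct tester in the full-access model must distinguish these two channels with success probability at least $1-\delta$. Let $P$ and $Q$ be the laws of the full outcome transcript $\mathbf{X}$ under $\lambda_1=0$ and under $\lambda_2$. Exactly as in Eqn.~\eqref{Dkl,lower}, the data-processing inequality applied to the event ``output $H_0$'' gives $D_{KL}(P\|Q)\ge g(\delta)=(1-2\delta)\ln\frac{1-\delta}{\delta}>0$ for $\delta\in(0,1/2)$.

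For the upper bound on $D_{KL}(P\|Q)$, I will reuse the adaptive chain rule Eqn.~\eqref{eq:chain-rule} and the single-shot decomposition Eqn.~\eqref{eq:single-shot}. Under $\lambda_1=0$ we have $p_1(i_t)=a^{(t)}_{i_t}$, and $p_2(i_t)=\lambda_2u^{(t)}_{i_t}+(1-\lambda_2)a^{(t)}_{i_t}\ge(1-\lambda_2)p_1(i_t)$. Two consequences follow. First, $P\ll Q$, so the divergence is finite. Second, each conditional term can be bounded directly:
\begin{align}
\sum_{i_t}p_1(i_t)\log\frac{p_1(i_t)}{p_2(i_t)}\le\log\frac{1}{1-\lambda_2}.
\end{align}
This agrees with Eqn.~\eqref{eq:bernoulli-bound} at $\lambda_1=0$ under the convention $0\log 0=0$, so the log-sum step also goes through. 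Summing over the $N$ steps as in Eqn.~\eqref{eq:kl-upper} gives $D_{KL}(P\|Q)\le N\log\frac{1}{1-\lambda_2}$, uniformly over adaptive strategies, ancilla dimension $d'$ and POVM size.

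To finish, I will use $\log\frac{1}{1-x}\le\frac{x}{1-x}$. For $d\ge2$ and $\varepsilon<1$ we have $\lambda_2\le\frac{2\varepsilon}{3}<\frac{2}{3}$, hence $\log\frac{1}{1-\lambda_2}\le 3\lambda_2\le 2\varepsilon$. Combining the two bounds,
\begin{align}
g(\delta)\le 2\varepsilon N\quad\Longrightarrow\quad N\ge\frac{g(\delta)}{2\varepsilon}=\Omega\!\left(\frac{1}{\varepsilon}\ln\frac{1}{\delta}\right).
\end{align}
As a consistency check, this is the tolerant bound of Theorem~\ref{Th:LB_Tol} with $\varepsilon_1=0$, since $\varepsilon_2/(\varepsilon_2-0)^2=1/\varepsilon$.

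The main obstacle is the boundary case $\lambda_1=0$. The $\chi^2$ form $(\lambda_2-\lambda_1)^2/\bigl(\lambda_2(1-\lambda_2)\bigr)$ still gives the right $O(\varepsilon)$ scaling there. However, the log-sum step involves terms $0\log(0/0)$ when $\lambda_1=0$, so the argument must check absolute continuity and the zero-term convention explicitly instead of quoting the tolerant theorem verbatim. The domination $p_2\ge(1-\lambda_2)p_1$ settles both points at once.
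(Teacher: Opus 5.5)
Your proposal is correct and is essentially the paper's proof. The paper obtains this theorem by setting $\varepsilon_1=0$, $\varepsilon_2=\varepsilon$ in Theorem~\ref{Th:LB_Tol}, which uses the same three steps you do: data processing down to $g(\delta)$, the adaptive chain rule, and a per-step Bernoulli bound. Your domination $p_2\ge(1-\lambda_2)p_1$ just produces that per-step bound $D_{KL}(\mathrm{Bern}(0)\|\mathrm{Bern}(\lambda_2))=\log\frac{1}{1-\lambda_2}$ directly, followed by the same $O(\varepsilon)$ estimate. Your extra care at $\lambda_1=0$ is harmless but not really needed: the $0\log(0/0)$ terms already appear in the paper's log-sum step for any $\lambda_1$ whenever some $u^{(t)}_{i_t}$ or $a^{(t)}_{i_t}$ vanishes, and the usual conventions handle them.
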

\begin{proof}
    Setting \(\varepsilon_2 = \varepsilon\) and \(\varepsilon_1 = 0\) in Theorem~\ref{Th:LB_Tol} yields the corresponding lower bound for the standard tester.
\end{proof}
\end{document}